\documentclass[a4paper,12pt,reqno]{amsart}
\usepackage[utf8]{inputenc}
\usepackage[T1]{fontenc}
\usepackage[british]{babel}
\usepackage{times}

\usepackage[text={16.0cm,25.2cm}, centering, nomarginpar, includeheadfoot, ignoremp]{geometry}
\setlength{\footskip}{30pt}
\setlength{\headheight}{15pt}
\setlength{\headsep}{20pt}
\setlength{\oddsidemargin}{0pt}
\setlength{\evensidemargin}{0pt}
\linespread{1.28}
\frenchspacing

\usepackage{amsmath, amssymb, amsfonts, dsfont, amsthm, mathrsfs, bm, mathtools, eqnarray, faktor}
\usepackage{scalerel, stackengine}
\usepackage{hyperref}
\usepackage{xparse}
\usepackage{listings,enumitem}
\usepackage[hang,small,bf]{caption}
\usepackage{fancyhdr}
\usepackage{indentfirst}

\usepackage[toc,page]{appendix}
\raggedbottom

\theoremstyle{definition}

\theoremstyle{plain}
\newtheorem{note}{Remark}
\newtheorem{theo}{Theorem}[section]
\newtheorem{lemma}[theo]{Lemma}
\newtheorem{prop}[theo]{Proposition}

\numberwithin{equation}{section}
\numberwithin{defn}{section}
\numberwithin{note}{section}

\lstset{language=Scilab}


\stackMath

\let\oldker\ker
\let\oldforall\forall
\let\oldexists\exists
\renewcommand\widehat[1]{%
\savestack{\tmpbox}{\stretchto{%
    \scaleto{%
        \scalerel*[\widthof{\ensuremath{#1}}]{\kern.1pt\mathchar"0362\kern.1pt}%
        {\rule{0ex}{\textheight}}
    }{\textheight}%
}{2.4ex}}%
\stackon[-6.9pt]{#1}{\tmpbox}%
}


\NewDocumentCommand \fromToParser {m m m}{%
    \IfValueTF{#3}{%
        \fromTo{#1}{#2}[#3]%
    }{%
        \fromTo{#1}{#2}%
    }%
}
\NewDocumentCommand \fromTo {m m o}{%
    #1\! \IfValueTF{#3}{%
            \to[#3]
        }{%
            \to[\!\quad\!]
        } \!#2%
}
\newcommand{\integrand}[2]{\!#2\; #1}
\NewDocumentCommand \integralLimits {m m}{%
    _{#1}\IfValueT{#2}{^{\,#2}\!}%
}
\newcommand{\defineSymbol}[2]{#1 \vcentcolon = #2}
\newcommand{\enifedSymbol}[2]{#1 = \vcentcolon #2}


\renewcommand{\newline}{\hfill\\}
\renewcommand{\-}{\mspace{-1.5mu}}
\newcommand{\+}{\mspace{1.5mu}}
\newcommand{\nquad}{\mspace{-18mu}}

\newcommand{\n}{\noindent}
\newcommand{\vs}{\vspace{0.5cm}}


\renewcommand{\Im}{\mathrm{Im}}
\renewcommand{\Re}{\mathrm{Re}}
\newcommand{\pInfty}{ {\scriptstyle +}\+\infty}
\newcommand{\mInfty}{ {\scriptstyle -}\+\infty}
\newcommand{\Id}{\mathds{1}}
\renewcommand{\forall}{\oldforall\,}
\RenewDocumentCommand \exists {s}{%
    \IfBooleanTF{#1}{\oldexists!}{\oldexists} \;%
}
\ProvideDocumentCommand \define {s >{\SplitArgument{1}{;}} m}{%
    \IfBooleanTF{#1}{\enifedSymbol #2}{\defineSymbol #2}%
}
\RenewDocumentCommand \to {o}{%
    \IfValueTF{#1}{%
        \xrightarrow[\,#1\,]{\;}%
    }{%
        \,\rightarrow\,%
    }%
}
\NewDocumentCommand \maps {m >{\SplitArgument{2}{;}} m}{%
    #1\!: \fromToParser #2%
}
\ProvideDocumentCommand \conjugate {s m}{%
    \IfBooleanTF{#1}{%
        \overline{#2}%
    }{%
        \bar{#2}%
    }%
}
\providecommand{\abs}[1]{\lvert#1\rvert}
\newcommand*{\vect}[1]{\boldsymbol{#1}}
\newcommand*{\vers}[1]{\hat{\vect{#1}}}
\NewDocumentCommand \ball {s m m}{%
    \IfBooleanTF{#1}{%
        \mathcal{B}^{\+ c}_{#2}(#3)%
    }{%
        \mathcal{B}_{#2}(#3)%
    }%
}
\NewDocumentCommand \Char { m o }{
    \Id_{#1}
    \IfValueT{#2}{( #2 )}
}
\newcommand*{\oSmall}[1]{o\!\left(#1\right)}
\newcommand*{\oBig}[1]{\mathcal{O}\!\left(#1\right)}
\NewDocumentCommand \integrate { >{\SplitArgument{1}{;}} o >{\SplitArgument{1}{;}} m}{%
    \int \IfValueT{#1}{\integralLimits #1 \!} \integrand #2%
}


\NewDocumentCommand \hilbert{s}{%
    \IfBooleanTF{#1}{\mathscr{H}}{\mathfrak{H}}%
}
\NewDocumentCommand \X{s}{%
    \IfBooleanTF{#1}{\mathcal{X}}{\mathfrak{X}}%
}
\NewDocumentCommand \scalar { m m o }{%
    \langle#1,\,#2\rangle\IfValueT{#3}{_{#3}}%
}
\NewDocumentCommand \norm { m o }{%
    \left\lVert#1\right\rVert \IfValueT{#2}{_{#2}}%
}
\NewDocumentCommand \normConverge {o m m o}{
    \IfValueTF{#4}{
        \fromTo{ \IfValueTF{#1}{\norm{#2 - #3}[#1]\! }{\norm{#2 - #3}} }{\,0}[#4]
    }{
        \fromTo{ \IfValueTF{#1}{\norm{#2 - #3}[#1]\! }{\norm{#2 - #3}} }{\,0}
    }
}
\NewDocumentCommand \weakConverge {m m o}{%
    #1 \!\IfValueTF{#3}{%
            \xrightharpoonup[#3]{\quad}
        }{%
            \xrightharpoonup{\quad}%
        } \- #2
}
\newcommand*{\dom}[1]{\mathscr{D}(#1)}
\newcommand*{\ran}[1]{\mathrm{ran}(#1)}
\renewcommand*{\ker}[1]{\oldker(#1)}
\newcommand*{\adj}[1]{{#1}^{\ast}}

\newcommand*{\bounded}[1]{\mathscr{B}\mspace{-1mu}\left(#1\right)}
\NewDocumentCommand \resolvent {m o}{%
    \mathcal{R}_{#1}\IfValueT{#2}{(#2)}%
}
\NewDocumentCommand \spectrum {o m}{%
    \IfValueTF{#1}{%
        \sigma_{\mathrm{#1}}(#2)%
    }{%
        \sigma(#2)%
    }%
}

\NewDocumentCommand \Lp {s m o} { L^{\- #2}\IfBooleanT{#1}{_{\,\mathrm{loc}}}\IfValueT{#3}{(#3)} }
\NewDocumentCommand \LpS {s m o}{
\IfBooleanTF{#1}{\Lp{#2}_{+}}{\Lp{#2}_{\mathrm{sym}}}\IfValueT{#3}{(#3)}
}
\NewDocumentCommand \LpA {s m o}{
\IfBooleanTF{#1}{\Lp{#2}_{-}}{\Lp{#2}_{\mathrm{asym}}}\IfValueT{#3}{(#3)}
}
\NewDocumentCommand \LpSA {m o}{
\Lp{#1}_{\pm}\IfValueT{#2}{(#2)}
}

\NewDocumentCommand \FT { s m o }{
    \IfBooleanTF{#1}{ \IfValueTF{#3}{ (\mathcal{F} \+  #2)\-(#3) }{%
    \mathcal{F} \+  #2 } }{ \hat{#2}\IfValueT{#3}{(#3)} }
}


\newcommand{\C}{\mathbb{C}}
\newcommand{\R}{\mathbb{R}}
\newcommand{\Rplus}{\mathbb{R}_{+}}
\newcommand{\N}{\mathbb{N}}
\newcommand{\No}{\mathbb{N}_0}
\NewDocumentCommand \Z {o}{%
    \IfValueTF{#1}{%
        \mathbb{Z}_{\,#1}%
    }{%
        \mathbb{Z}%
    }%
}

\newcommand{\be}{\begin{equation}}
\newcommand{\ee}{\end{equation}}

\title[Regularized zero-range Hamiltonian for a Bose gas  with an impurity]{Regularized zero-range Hamiltonian \\for a Bose gas  with an impurity}

\author[D. Ferretti]{Daniele Ferretti}
\address[D. Ferretti]{Department of Mathematics G. Castelnuovo, University of Rome ``La Sapienza'', Piazzale  Aldo Moro, 5, 00185 Rome, Italy}
\email{d.ferretti@uniroma1.it}

\author[A. Teta]{Alessandro Teta}
\address[A. Teta]{Department of Mathematics G. Castelnuovo, University of Rome ``La Sapienza'', Piazzale  Aldo Moro, 5, 00185 Rome, Italy}
\email{teta@mat.uniroma1.it}
\date{}

\thanks{The authors acknowledge the support of the GNFM Gruppo Nazionale per la Fisica Matematica - INdAM}

\begin{document}
\pagenumbering{Alph}
\makeatletter
\renewcommand\subsection{%
  \@startsection{subsection}%
    {2}
    {0em}
    {-1ex \@plus 0.1ex \@minus -0.05ex}
    {-1em \@plus 0.2em}
    {\scshape}
  }

\makeatother

\begin{abstract}
We study the Hamiltonian for a system of $N$ identical bosons  interacting with an impurity, i.e., a different particle, via zero-range forces in dimension three. It is well known that, following the standard approach, one obtains the Ter-Martirosyan Skornyakov Hamiltonian which is unbounded from below. In order to avoid such instability problem, we introduce a three-body force acting at short distances. The effect of this force is to  reduce to zero the strength of the zero-range interaction between two particles, i.e., the impurity and a boson, when another boson approaches the common position of the first two particles. We show that the Hamiltonian defined with such regularized interaction is self-adjoint and bounded from below if the strength of the three-body force is sufficiently large. The method of the proof is based on a careful analysis of the corresponding  quadratic form.
\end{abstract}
\maketitle

\begin{footnotesize}
\emph{Keywords: Zero-range interactions; Many-body Hamiltonians; Schr\"odinger operators.} 
 
\emph{MSC 2020: 
    81Q10; 
    81Q15; 
    70F07; 
    46N50; 
    81V70;
}  
\end{footnotesize}



\vs\vs

\pagenumbering{arabic}
\section{Introduction}\label{intro}
\vs

Hamiltonians with zero-range interactions are often used in Quantum Mechanics as toy models to describe the low energy behavior of a particle system.
The advantage is that zero-range interactions are structurally simple, allow in many cases to make explicit computations and, at least formally, are characterized by a single physical parameter known as two-body scattering length.
The mathematical construction of such Hamiltonians as self-adjoint (s.a.) and, possibly, lower bounded operators in the appropriate Hilbert space requires some care.
In the one-body case a complete theory is available (\cite{Albeverio}), while in the $n$-body case the situation depends on the space dimensions.
In dimension one perturbation theory applies and the model is well understood (see, e.g., \cite{BCFT1}, \cite{GHL} for recent contributions).
In dimension two, using the same type of renormalization used in the one-body case, the s.a. and bounded from below Hamiltonian can be constructed (\cite{DFT}, \cite{DR}) and analysed in detail (see, e.g., \cite{GH}).
On the other hand, in dimension three for $n=3$ it is known  (\cite{MF}, see also \cite{MF2}) that the same procedure leads to a symmetric but non s.a. operator and that all its s.a. extensions are unbounded from below.
Such instability property is known as Thomas effect and it is due to the fact that the interaction becomes too singular when all the three particles are close to each other (see also the recent papers \cite{Miche}, \cite{GM} and the references therein).
The Hamiltonian defined in \cite{MF} is therefore unsatisfactory from the physical point of view and the construction of other physically reasonable $n$-body Hamiltonians in dimension three with zero-range interactions can be considered as an open problem (we just mention that the situation is rather different for systems made of two species of fermions, see, e.g., \cite {CDFMT}, \cite{CDFMT1}, \cite{MS}, \cite{MS1}). 

\n Following a suggestion of Minlos and Faddeev contained in \cite{MF}, it has been recently proposed (\cite{FT}, \cite[section 9]{Miche}, \cite{BCFT}, \cite[section 6]{GM}) a regularized version of the Hamiltonian for a system of three bosons.
Roughly speaking, the idea is to introduce an effective scattering length which decreases to zero when the position of two particles coincides and the third particle is close to the common position of the first two.
In the other cases the effective scattering length remains constant.
In this sense, one introduces a three-body interaction that reduces to zero the strength of the interaction between two particles (only) when the third particle approaches the common position of the first two.  

\n In this paper we exploit the same idea to construct the Hamiltonian in dimension three for a gas of bosons interacting with an impurity.  
More precisely, we consider a quantum system of $N$ identical spinless bosons of mass $\frac{1}{2}$ and we assume that the bosons interact only with an impurity, i.e. a different particle of mass $\frac{M}{2}$, via a zero-range, two-body interaction.
Let us denote by
\begin{equation}
    \define{\hilbert*_{N+1};\Lp{2}[\R^3]\-\otimes\-\LpS{2}[\R^{3N}]}\subset\Lp{2}[\R^{3(N+1)}],\quad N\-\geq 2
\end{equation}
the Hilbert space of the system.
At a formal level, the Hamiltonian reads
\begin{equation}\label{formalH}
    \hat{\mathcal{H}}=\mathcal{H}_0+\nu\sum_{i=1}^N\delta(\vect{x}_i-\vect{x}_0),
\end{equation}
where $\nu$ is a coupling constant and $\mathcal{H}_0$ is the free Hamiltonian, given by
\begin{equation}\label{freeH}
    \begin{split}
    &\dom{\mathcal{H}_0}=\hilbert*_{N+1}\cap H^2(\R^{3(N+1)}), \quad\quad\define{\mathcal{H}_0;-\frac{1}{M}\Delta_{\vect{x}_0}-\sum_{i=1}^N\Delta_{\vect{x}_i}}\+.
    \end{split}
\end{equation}
We want to define a rigorous counterpart of the formal operator~\eqref{formalH} as a s.a. and bounded from below operator $\mathcal{H}$ in $\hilbert*_{N+1}\+$.
By definition, such an operator must be a proper singular perturbation of $\mathcal{H}_0\+$ supported on the coincidence hyperplanes
\begin{equation}\label{incidenceHyperplanes}
    \define{\pi;\bigcup_{i=1}^N\+\pi_i}\,, \qquad \define{\pi_i; \left\{(\vect{x}_0, \vect{x}_1,\ldots,\vect{x}_N)\in\R^{3(N+1)} \,\big| \; \vect{x}_i\-=\vect{x}_0\right\}}\-.
\end{equation}
This means that $\mathcal{H}$ must satisfies the property
\be\label{h=h0}
 \mathcal{H}\psi = \mathcal{H}_0\+\psi\;\;\;\; \forall \psi \in \dom{\mathcal{H}_0}\;\;\; \text{s.t.}\;\;\psi |_{\pi} = 0 \,.
 \ee
 
 \n
Motivated by this observation, we define the operator
\begin{equation}\label{symmetricHamiltonianToBeExtended}
    \define{\dot{\mathcal{H}}_0; \mathcal{H}_0|_{\dom{\dot{\mathcal{H}}_0}}}\+, \qquad \define{\dom{\dot{\mathcal{H}}_0};\hilbert*_{N+1}\cap H^2_0(\R^{3(N+1)}\setminus \pi)} \end{equation}
which is symmetric and closed according to the graph norm of $\mathcal{H}_0\+$.
Our goal is to find the Hamiltonian $\mathcal{H}$ as a s.a. and bounded from below extension of $\dot{\mathcal{H}}_0\+$.

\n A typical class of extensions is obtained by requiring that an element $\psi$ of the domain of $\mathcal H$ satisfies the following boundary condition on each hyperplane $\pi_i$
\begin{equation}\label{stmBC}
    \begin{split}
    \psi(\vect{x}_0,\vect{x}_1,\ldots,\vect{x}_N)=&\;\frac{\xi\!\left(\frac{\vect{x}_i+M\vect{x}_0}{M+1},\vect{x}_1,\ldots\check{\vect{x}}_i\ldots,\vect{x}_N\right)}{\abs{\vect{x}_i-\vect{x}_0}}\,+\\
    &+\-\alpha\+\xi(\vect{x}_0,\vect{x}_1,\ldots\check{\vect{x}}_i\ldots,\vect{x}_N)\-+\oSmall{1}\-,\quad \text{for }\,\vect{x}_i\to\vect{x}_0\+,
    \end{split}
\end{equation}
for some $\xi\in\hilbert*_N$, where $\alpha$ is a real parameter and the notation $\check{\vect{x}}_i$ indicates the omission of the variable $\vect{x}_i\+$.
The boundary condition~\eqref{stmBC} is a natural generalization of the boundary condition satisfied in the one-body case (\cite{Albeverio}). 
The parameter $\alpha$ is related to the two-body scattering length $\mathfrak{a}$ between the impurity and a boson via the relation
\begin{equation}\label{2BodyScatteringLenght}
    \mathfrak{a} = -\frac{1}{\alpha}\,.
\end{equation}
Notice that, by~\eqref{2BodyScatteringLenght}, the strength of the point interaction between the impurity and a boson goes to zero as $\abs{\alpha} \!\to\- \pInfty$. 

\n The s.a. extensions obtained requiring the boundary condition~\eqref{stmBC} lead to the Hamiltonian unbounded from below studied in~\cite{MF}.
As already mentioned, in order to obtain an energetically stable system, we  introduce a suitable regularization in~\eqref{stmBC}.
More precisely, we replace the parameter $\alpha$ by a new, position dependent, coupling constant on each coincidence plane $\pi_i\+$
\begin{equation*}
    \alpha\longmapsto\beta_i\+,
\end{equation*}
where the function $\maps{\beta_i}{\pi_i;\R}$ is given by
\begin{equation}\label{runningCoupling}
    \beta_i:(\vect{x}_0,\vect{x}_1,\ldots\check{\vect{x}}_i\ldots,\vect{x}_N)\longmapsto \alpha+\gamma\mspace{-9mu}\sum_{\substack{1\+\leq \,j\,\leq\+ N\\ j\+\neq\+ i}}\mspace{-9mu}\frac{\theta(\abs{\vect{x}_j-\vect{x}_0})}{\abs{\vect{x}_j-\vect{x}_0}},
\end{equation}
with $\gamma>0$ and $\maps{\theta}{\Rplus;\R}$ a measurable function satisfying
\begin{subequations}\label{assumptionsTheta}
\begin{gather}
    \label{compactSuppTheta}\mathrm{supp}\,\theta \text{ is a compact},\\
    \label{positiveBoundedCondition}
    1-\tfrac{r}{b}\leq \theta(r)\leq 1+\tfrac{r}{b}\+,\quad\text{for some }b>0.
\end{gather}
\end{subequations}
We observe that the function $\theta$, by assumption~\eqref{positiveBoundedCondition}, is positive in a neighborhood of the origin and it is continuous at zero, with $\theta(0)=1$.
We also stress that the simplest choice for the function $\theta$ is the characteristic function $\Char{b}$ of the ball of radius $b$ centered in the origin.

\n With the above replacement, we define the modified  boundary condition 
\begin{equation}\label{mfBC}
\begin{split}
    \psi(\vect{x}_0,\vect{x}_1,\ldots,\vect{x}_N)=&\;\frac{\xi\!\left(\frac{\vect{x}_i+M\vect{x}_0}{M+1},\vect{x}_1,\ldots\check{\vect{x}}_i\ldots,\vect{x}_N\right)}{\abs{\vect{x}_i-\vect{x}_0}}\,+\\
    &+\-(\Gamma_{\!\mathrm{reg}}^i \+\xi)(\vect{x}_0,\vect{x}_1,\ldots\check{\vect{x}}_i\ldots,\vect{x}_N)\-+\oSmall{1}\-,\quad \text{for }\,\vect{x}_i\to\vect{x}_0\+,
\end{split}
\end{equation}
where $\Gamma^i_{\!\mathrm{reg}}$ acts as follows
\begin{equation}\label{regGamma}
    \Gamma_{\!\mathrm{reg}}^i: \xi\longmapsto \beta_i(\vect{x}_0,\vect{x}_1,\ldots\check{\vect{x}}_i\ldots,\vect{x}_N)\+\xi(\vect{x}_0,\vect{x}_1,\ldots\check{\vect{x}}_i\ldots,\vect{x}_N).
\end{equation}
Notice that the function $\beta_i$ is symmetric under the exchange of any couple $\vect{x}_k\-\longleftrightarrow\-\vect{x}_\ell$ with $\ell\-\neq k\-\in\-\{1,\ldots,N\}\-\smallsetminus\-\{i\}$.
In analogy with~\eqref{stmBC}, the boundary condition~\eqref{mfBC} characterizes the point interaction between the impurity and the $i$-th boson. 
The function $\beta_i$ diverges if $\vect{x}_j \!\to\-\vect{x}_0\+$, for any $j \neq i$ and this means that the strength of the point interaction between the impurity and the $i$-th boson decreases to zero when a third particle, in our case another boson, approaches the common position of the first two particles. 
In other words, as already pointed out, we are introducing a three-body interaction meant to regularize the ultraviolet singular behavior occurring when the positions of more than two particles coincide.
We also stress that, by~\eqref{compactSuppTheta}, $\theta$ is chosen to be compactly supported and therefore the usual two-body point interaction between the impurity and the $i$-th boson is restored when the other particles are far enough.

\n The aim of this paper is to show that the modified boundary condition \eqref{mfBC}  allows to give a rigorous construction of a s.a. and bounded from below Hamiltonian $\mathcal{H}$. 

%


\n
The approach is based on the theory of quadratic forms.
More precisely, by a heuristic procedure based on the conditions \eqref{h=h0}, \eqref{mfBC}, we  arrive at the definition of a quadratic form in $\hilbert*_{N+1}$ (see~\eqref{QF}) which is the starting point of the rigorous analysis.
Our main result is the proof that for any $\gamma$ larger than a threshold value $\gamma_c\+$, the quadratic form is closed and bounded from below.
The threshold value is explicitly given (see~\eqref{criticalGamma}) and it is uniformly bounded in $N$ and $M$.
Furthermore, we characterize the s.a. and bounded from below operator $\mathcal{H}$ uniquely defined by the quadratic form.
Such operator, by definition, is our Hamiltonian for the boson gas interacting with an impurity via regularized zero-range interactions.

\vs
\n For the convenience of the reader, we collect here some of the notation used in the paper. 


\n - Given the Euclidean space $(\R^n,\+\cdot\+)$, $\vect{x}$ is a vector in $\R^n$ and $x=\abs{\vect{x}}$.

\n - $\mathcal{S}(\R^n)$ denotes the space of Schwartz functions.

\n
-  $\FT{\psi}$ is the  Fourier transform of $\psi$.

\n - For any $p\geq \-1$ and $\Omega$ open set in $\R^n$, $\Lp{p}(\Omega,\mu)$ is the Banach space of $p\+$-integrable functions with respect to the Borel measure $\mu$.
We  use $\Lp{p}(\Omega)$ in case $\mu$ is the Lebesgue measure and we denote  $\define{\norm{\cdot}[p]\-; \-\norm{\cdot}[\Lp{\+p}[\R^n]]}$.

\n - If $\hilbert$ is a complex Hilbert space, we denote by $\scalar{\cdot\+}{\!\cdot}_{\hilbert}$,  $\define{\norm{\+\cdot\+}[\hilbert]\!; \!\sqrt{\scalar{\cdot\+}{\!\cdot}_{\hilbert}} }\;$ the inner product and the induced norm.

%
%
 \n
 -   if $\hilbert=\Lp{2}(\R^n)$, we simply denote by $\scalar{\cdot\+}{\!\cdot}$, $\|\cdot\|$ the inner product and the norm.
 

\n - $H^s(\R^n)$ is the standard  Sobolev space of order $s>0$ in $\R^n$.

\n -$f|_{\pi_i}\in H^s(\R^{3N})$ is the trace of $f\in H^{s+\frac{3}{2}}(\R^{3(N+1)})$ on the hyperplane $\pi_i\+$.

\n - $\bounded{X,Y}$ is the Banach space of the linear bounded operators from $X$ to $Y$, where $X$ and $Y$ are Hilbert spaces,  and $\define{\bounded{X};\bounded{X,X}}$.


\vs\vs

\section{Main results and strategy of the proof}\label{mainResults}
\vs

In this section we introduce some definitions and then we formulate our main results.

\n
Let us define the bounded operator $\maps{G^\lambda}{\hilbert*_N;\Lp{2}(\R^{3(N+1)})}$ whose Fourier representation is given by
\begin{equation}\label{potentialDef}
    \define{(\+\widehat{G^\lambda\xi}\+)(\vect{p},\vect{k}_1,\ldots,\vect{k}_N); \frac{1}{\mu}\,\sqrt{\frac{2}{\pi}}\,\frac{\sum_{j=1}^N\hat{\xi}(\vect{p}+\vect{k}_j,\vect{k}_1,\ldots\check{\vect{k}}_j\ldots,\vect{k}_N)}{\frac{1}{M} p^2+\sum_{m=1}^N k_m^2+\lambda}}\,,
\end{equation}
where $\lambda>0$ and 
\begin{equation}
    \define{\mu;\frac{M}{M+1}}
\end{equation}
denotes the reduced mass of the two-particle subsystem composed by a boson and the impurity.
We shall refer to $G^{\lambda} \xi$ as the potential produced by the charge $\xi$ distributed on $\pi$.
A more detailed discussion on the properties of the potential is postponed to the appendix (section~\ref{computingPotential}).
Here we only mention that $G^\lambda$ is injective, $\ran{G^{\lambda}}\-\subset\- \hilbert*_{N+1}$ and $G^{\lambda} \xi \-\notin\! H^{1}(\R^{3(N+1)})$ (see remarks~\ref{potentialPropertiesInheritance},~\ref{wellPosednessQFDomain}).

\n
Then, let us define the following hermitian quadratic form in $\Lp{2}[\R^{3N}]$
\begin{equation}\label{defPhi}
\dom{\Phi^{\lambda}} = H^\frac{1}{2}(\R^{3N}), \qquad   \define{\Phi^\lambda;\Phi^\lambda_{\mathrm{diag}}\!+\Phi^\lambda_{\mathrm{off}}\-+\Phi_{\mathrm{reg}}}\- + \Phi_0\+,
\end{equation}
where
\begin{subequations}\label{componentsPhi}
\begin{gather}
    \define{\Phi^\lambda_{\mathrm{diag}}[\xi]\-;\tfrac{4\pi N}{\sqrt{\mu\,}\+}\!\- 
    \integrate[\R^{3N}]{\!\textstyle{\sqrt{\frac{p^2}{M+1}\- +\!\sum_{m=1}^{N-1} k_m^2+\lambda\+}\,} \abs{\FT{\xi}(\vect{p},\vect{k}_1,\ldots,\vect{k}_{N-1})}^2; 
    \mspace{-10mu}d\vect{p}\+d\vect{k}_1\-\cdots d\vect{k}_{N-1}}}\+,\label{diagPhi}\\
    \define{\Phi^\lambda_{\mathrm{off}}[\xi]\-;\! -\tfrac{2N(N-1)\!}{\pi\,\mu^2}\!\! \integrate[\R^{3(N+1)}]{\-\frac{\conjugate*{\FT{\xi}(\vect{p}\-+\-\vect{k}_1, \vect{k}_2,\ldots, \vect{k}_N)\-}\,\FT{\xi}(\vect{p}\-+\-\vect{k}_2, \vect{k}_1, \vect{k}_3,\ldots, \vect{k}_N)}{\frac{1}{M}p^2+\sum_{m=1}^N k_m^2+\lambda};\mspace{-33mu}d\vect{p}\+d\vect{k}_1\- \cdots d\vect{k}_N}},\-\label{offPhi}\\
    \define{\Phi_{\mathrm{reg}}[\xi];\tfrac{4\pi N(N-1)\,\gamma}{\mu}\!\-\integrate[\R^{3N}]{\frac{\abs{\xi(\vect{y},\vect{x}_1,\ldots,\vect{x}_{N-1})}^2\!\-}{\abs{\vect{y}\--\vect{x}_1}};\mspace{-10mu}d\vect{y} d\vect{x}_1\-\cdots d\vect{x}_{N-1}}},\label{regPhi}\\
    \Phi_0[\xi]:= \tfrac{4\pi N}{\mu}\!\-\integrate[\R^{3N}]{\tilde{\alpha}(\abs{\vect{y}\--\vect{x}_1})\+\abs{\xi(\vect{y},\vect{x}_1,\ldots,\vect{x}_{N-1})}^2;\mspace{-10mu}d\vect{y} d\vect{x}_1\-\cdots d\vect{x}_{N-1}}\label{notePhi}
    \end{gather}
\end{subequations}
with
\begin{equation}\label{actualAlpha}
\tilde{\alpha}: r\longmapsto\alpha+(N\!-\!1)\+\gamma\,\frac{\theta(r)-1}{r}\+ ,\qquad r>0\+. 
\end{equation}
Notice that assumption~\eqref{positiveBoundedCondition} implies   $\tilde{\alpha}\in\Lp{\infty}[\Rplus]$ and therefore the quadratic form $\Phi_0$ is bounded in $\Lp{2}[\R^{3N}]$. 
We are now in position to define the main object of our analysis, i.e., the  quadratic form in $\hilbert*_{N+1}$ given by 
\begin{equation}\label{QF}\begin{split}
    &\define{\dom{Q}\-; \!\left\{\psi \in\hilbert*_{N+1} \,\big |\;\psi=w^\lambda\!+G^\lambda\xi\+,\: w^\lambda\-\in\- H^1(\R^{3(N+1)}),\: \xi \-\in\- \hilbert*_N\cap H^{\frac{1}{2}}(\R^{3N})\right\}}\-,\\
    &\mspace{7.5mu}\define{Q[\psi];\mathscr{F}_\lambda[w^\lambda]-\lambda\- \norm{\psi}^2\-+\Phi^\lambda[\xi]}
\end{split}
\end{equation}
where
\begin{equation}\label{trivialQF}
\begin{split}
    &\maps{\mathscr{F}_\lambda}{H^1(\R^{3(N+1)}); \R_+},\\
    &\qquad\varphi \longmapsto \scalar{(\mathcal{H}_0\-+\-\lambda)^{\frac{1}{2}}\+\varphi}{(\mathcal{H}_0\-+\-\lambda)^{\frac{1}{2}}\+\varphi}.
\end{split}
\end{equation}
The heuristic motivation leading to the above definition is discussed in the appendix  (section~\ref{actuallyBuildingQF}).  
We observe that $\dom{Q}$ is an extension of the form domain of $\mathcal{H}_0\+$, since $H^1(\R^{3(N+1)})\cap\hilbert*_{N+1}$ is a proper subset of $\dom{Q}$ and \begin{equation}
    Q[\psi]=\scalar{\mathcal{H}_0^{\frac{1}{2}}\+\psi}{\mathcal{H}_0^{\frac{1}{2}}\+\psi},\qquad \text{for }\,\psi\in H^1(\R^{3(N+1)})\cap\hilbert*_{N+1}\+.
\end{equation}
This is due to the injectivity of $G^\lambda$ that implies $\psi\in \dom{Q} \cap H^1(\R^{3(N+1)})$ if and only if $\xi\equiv 0$.

\n Moreover, for any fixed $M>0$ and $N\geq 2$, we introduce the critical parameter
\begin{equation}\label{criticalGamma}
    \define{\gamma_c;\frac{2(M\-+\-1)}{\pi}\arcsin\!\left(\tfrac{1}{M+1}\right)-\frac{2\sqrt{M(M\!+\-2)}}{\pi(N\!-\!1)(M\-+\-1)}\+.}
\end{equation}
It is easy to see that $\gamma_c$ is positive and
\begin{align*}
    \inf_{M>\+0}\gamma_c=\frac{2}{\pi}\frac{N\--\-2}{N\--\-1}, && \sup_{M>\+0}\gamma_c=1.
\end{align*}
We stress that our main results hold for any $\gamma >\gamma_c\+$.
We also notice that $\gamma_c$ is uniformly bounded in $N\-\geq\- 2$ and $M\->\-0$. In the special case $M=1$, $N=2$ we have $\gamma_c= \frac{2}{3} - \frac{\sqrt{3}}{\pi} \simeq 0.115$.
It is worth to observe that in the case of three interacting bosons (discussed in~\cite{BCFT}), a larger critical value $\gamma_c^{\mathrm{3 bos}}\! =  \frac{4}{3} - \frac{\sqrt{3}}{\pi} \simeq 0.782$ is found.
The difference is due to the fact that, in our case, the two bosons are non interacting and therefore the singular negative contribution to be compensated, given by~\eqref{offPhi}, is smaller by a factor $2$.

\n Our first  result concerns the quadratic form $\Phi^{\lambda}$ and it  is formulated in the next proposition.
\begin{prop}\label{coercivityLemma}
\newline
    i)\quad For any $\gamma>0$ and $\lambda>0$ one has
    \begin{equation}
        \Phi^{\lambda}[\xi] \,\leq\, C_1  \, \Phi^\lambda_{\mathrm{diag}} [\xi]\+, \qquad \xi \in H^{\frac{1}{2}}(\R^{3N}) 
    \end{equation}
    where $C_1$ is a positive constant.
    
    \n ii)\quad Let us assume $\gamma>\gamma_c\+$.
    Then, there exists $\lambda_0\->0\+$ s.t. for any $\lambda>\-\lambda_0$ one has
    \begin{equation}\label{Philo}
        C_2\, \Phi^\lambda_{\mathrm{diag}}[\xi] \, \leq \, \Phi^{\lambda}[\xi]\+, \qquad \xi \in H^{\frac{1}{2}}(\R^{3N}) 
\end{equation}
where $C_2$ is a positive constant.
In particular, the quadratic form $\Phi^{\lambda}$, $\dom{\Phi^{\lambda}}$ in $\Lp{2}[\R^{3N}]$ is closed and bounded from below by a positive constant.
\end{prop}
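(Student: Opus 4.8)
The plan is to treat $\Phi^\lambda$ as a form-perturbation of its manifestly nonnegative diagonal part $\Phi^\lambda_{\mathrm{diag}}$ (whose form norm is equivalent to the $H^{\frac{1}{2}}(\R^{3N})$ norm) and to control the three remaining pieces $\Phi^\lambda_{\mathrm{off}},\Phi_{\mathrm{reg}},\Phi_0$ against it. The upper bound in i) is then routine, while the lower bound in ii) rests on a sharp estimate for the singular exchange term $\Phi^\lambda_{\mathrm{off}}$.

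\smallskip
\noindent\textbf{Part i).} Since the integrand of $\Phi^\lambda_{\mathrm{diag}}$ is $\geq\sqrt{\lambda}$, Plancherel gives $\Phi^\lambda_{\mathrm{diag}}[\xi]\geq\tfrac{4\pi N}{\sqrt{\mu}}\sqrt{\lambda}\,\|\xi\|^2$; together with $\tilde{\alpha}\in\Lp{\infty}[\Rplus]$ (a consequence of~\eqref{positiveBoundedCondition}) this yields $|\Phi_0[\xi]|\leq\tfrac{\|\tilde{\alpha}\|_{\infty}}{\sqrt{\mu\lambda}}\,\Phi^\lambda_{\mathrm{diag}}[\xi]$. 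For $\Phi_{\mathrm{reg}}$ I would use a Kato (``half-Hardy'') inequality in the relative variable $\vect{y}-\vect{x}_1$ followed by Plancherel in the remaining variables:
\begin{equation*}
\int_{\R^{3N}}\frac{|\xi(\vect{y},\vect{x}_1,\ldots,\vect{x}_{N-1})|^2}{|\vect{y}-\vect{x}_1|}\,d\vect{y}\,d\vect{x}_1\cdots\ \leq\ \tfrac{\pi}{2}\int_{\R^{3N}}|\vect{p}|\,|\hat{\xi}(\vect{p},\vect{k}_1,\ldots,\vect{k}_{N-1})|^2\,d\vect{p}\,d\vect{k}_1\cdots\ \leq\ C\,\Phi^\lambda_{\mathrm{diag}}[\xi],
\end{equation*}
where in the last step one uses $|\vect{p}|\leq\sqrt{M+1}\,\big(\tfrac{p^2}{M+1}+\sum_m k_m^2+\lambda\big)^{1/2}$. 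For the exchange term I would pass to the Fourier representation, symmetrise the kernel, apply Cauchy--Schwarz against a suitable symmetric weight built from the $\Phi^\lambda_{\mathrm{diag}}$-symbol, and reduce the estimate to the finiteness of an explicit integral over $\R^3$, obtaining $|\Phi^\lambda_{\mathrm{off}}[\xi]|\leq C\,\Phi^\lambda_{\mathrm{diag}}[\xi]$. Summing the three bounds proves i).

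\smallskip
\noindent\textbf{Part ii).} Set $\mathcal{K}[\xi]:=\int_{\R^{3N}}|\xi(\vect{y},\vect{x}_1,\ldots,\vect{x}_{N-1})|^2\,|\vect{y}-\vect{x}_1|^{-1}\,d\vect{y}\,d\vect{x}_1\cdots d\vect{x}_{N-1}\geq0$, so that $\Phi_{\mathrm{reg}}[\xi]=\tfrac{4\pi N(N-1)\gamma}{\mu}\,\mathcal{K}[\xi]$. The heart of the matter is the \emph{sharp} lower bound
\begin{equation*}
\Phi^\lambda_{\mathrm{diag}}[\xi]+\Phi^\lambda_{\mathrm{off}}[\xi]+\tfrac{4\pi N(N-1)\gamma_c}{\mu}\,\mathcal{K}[\xi]\ \geq\ 0,\qquad \xi\in H^{\frac{1}{2}}(\R^{3N}),\ \lambda>0,
\end{equation*}
with $\gamma_c$ exactly as in~\eqref{criticalGamma}, together with its strengthening for $\gamma>\gamma_c$: there is $\varepsilon=\varepsilon(\gamma,M,N)>0$ with
\begin{equation*}
\Phi^\lambda_{\mathrm{diag}}[\xi]+\Phi^\lambda_{\mathrm{off}}[\xi]+\Phi_{\mathrm{reg}}[\xi]\ \geq\ \varepsilon\,\Phi^\lambda_{\mathrm{diag}}[\xi],\qquad \xi\in H^{\frac{1}{2}}(\R^{3N}),\ \lambda>0.
\end{equation*}
To get the sharp bound one writes $\Phi^\lambda_{\mathrm{diag}}$, $\Phi^\lambda_{\mathrm{off}}$ and $\mathcal{K}$ in Fourier variables (using $|\vect{r}|^{-1}=(2\pi^2)^{-1}\!\int_{\R^3}|\vect{k}|^{-2}e^{i\vect{k}\cdot\vect{r}}\,d\vect{k}$ for $\mathcal{K}$), symmetrises the kernel of $\Phi^\lambda_{\mathrm{off}}$, and performs a Cauchy--Schwarz/Schur estimate against the \emph{optimal} weight; the resulting constant is controlled by a one-dimensional integral whose value is a multiple of $\arcsin\!\big(\tfrac{1}{M+1}\big)$, while the second, $N$- and $M$-dependent, summand in~\eqref{criticalGamma} comes from retaining the ``diagonal'' part of that kernel rather than discarding it. The strengthening is extracted by sacrificing a small fraction $\vartheta\in(0,1)$ of $\Phi^\lambda_{\mathrm{diag}}$: the same argument with $\vartheta\,\Phi^\lambda_{\mathrm{diag}}$ in place of $\Phi^\lambda_{\mathrm{diag}}$ gives $\vartheta\,\Phi^\lambda_{\mathrm{diag}}[\xi]+\Phi^\lambda_{\mathrm{off}}[\xi]+\tfrac{4\pi N(N-1)\Gamma(\vartheta)}{\mu}\mathcal{K}[\xi]\geq0$ with $\Gamma$ continuous and $\Gamma(1)=\gamma_c$, so for $\gamma>\gamma_c$ one picks $\vartheta$ with $\Gamma(\vartheta)<\gamma$ and then, using $\Phi_{\mathrm{reg}}=\tfrac{4\pi N(N-1)\gamma}{\mu}\mathcal{K}\geq\tfrac{4\pi N(N-1)\Gamma(\vartheta)}{\mu}\mathcal{K}$, concludes $\Phi^\lambda_{\mathrm{diag}}[\xi]+\Phi^\lambda_{\mathrm{off}}[\xi]+\Phi_{\mathrm{reg}}[\xi]\geq(1-\vartheta)\Phi^\lambda_{\mathrm{diag}}[\xi]$. \emph{This sharp-constant analysis is the main obstacle:} a crude Cauchy--Schwarz gives only a non-explicit threshold, and reaching precisely the value~\eqref{criticalGamma} demands the optimal weight and the exact evaluation of the relevant integral, together with the (elementary but necessary) observation that increasing $\lambda$ only improves these estimates, so that all constants are $\lambda$-uniform.

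\smallskip
Granting the strengthened inequality, part ii) follows immediately: with $\varepsilon$ as above and using part i) to control $\Phi_0$,
\begin{equation*}
\Phi^\lambda[\xi]=\big(\Phi^\lambda_{\mathrm{diag}}[\xi]+\Phi^\lambda_{\mathrm{off}}[\xi]+\Phi_{\mathrm{reg}}[\xi]\big)+\Phi_0[\xi]\ \geq\ \Big(\varepsilon-\tfrac{\|\tilde{\alpha}\|_{\infty}}{\sqrt{\mu\lambda}}\Big)\Phi^\lambda_{\mathrm{diag}}[\xi],
\end{equation*}
so that, taking $\lambda_0:=4\|\tilde{\alpha}\|_{\infty}^{2}/(\mu\,\varepsilon^{2})$, one obtains $\Phi^\lambda[\xi]\geq\tfrac{\varepsilon}{2}\,\Phi^\lambda_{\mathrm{diag}}[\xi]$ for every $\lambda>\lambda_0$, which is~\eqref{Philo} with $C_2=\tfrac{\varepsilon}{2}$. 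Finally, combining i) and~\eqref{Philo} shows that for such $\lambda$ the form $\Phi^\lambda$ is equivalent to $\Phi^\lambda_{\mathrm{diag}}$, whose form norm is a constant multiple of the complete $H^{\frac{1}{2}}(\R^{3N})$ norm; hence $\Phi^\lambda$ with domain $H^{\frac{1}{2}}(\R^{3N})$ is closed, and $\Phi^\lambda[\xi]\geq C_2\,\Phi^\lambda_{\mathrm{diag}}[\xi]\geq C_2\,\tfrac{4\pi N}{\sqrt{\mu}}\sqrt{\lambda}\,\|\xi\|^2$ shows it is bounded below by a positive constant.
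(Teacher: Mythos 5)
Your outer logic is sound and matches the paper's final step: part i) via the Hardy--Kato bound on $\Phi_{\mathrm{reg}}$, the $L^\infty$ bound on $\tilde{\alpha}$ giving $|\Phi_0[\xi]|\leq \tfrac{\norm{\tilde{\alpha}}[\infty]}{\sqrt{\mu\lambda}}\,\Phi^\lambda_{\mathrm{diag}}[\xi]$, and a Schur-type bound on $\Phi^\lambda_{\mathrm{off}}$ are all fine, and the deduction of \eqref{Philo}, closedness and positivity from the strengthened inequality plus a large choice of $\lambda_0$ is exactly how the paper concludes in section~\ref{closure&Boundedness}. The problem is that the strengthened inequality itself --- the lower bound on $\Phi^\lambda_{\mathrm{diag}}+\Phi^\lambda_{\mathrm{off}}+\Phi_{\mathrm{reg}}$ with the sharp threshold \eqref{criticalGamma} --- is the entire technical content of the proposition, and you assert it rather than prove it. The route you sketch (``Cauchy--Schwarz/Schur estimate against the optimal weight'') cannot deliver it as stated: the kernel of $\Phi^\lambda_{\mathrm{off}}+\Phi_{\mathrm{reg}}$ is not sign-definite, and a Schur test controls absolute values, so estimating $|\Phi^\lambda_{\mathrm{off}}|$ and keeping $\Phi_{\mathrm{reg}}\geq 0$ separately loses precisely the cancellation between the negative exchange term \eqref{offPhi} and the positive three-body term \eqref{regPhi} that produces the value $\gamma_c$; with separate bounds one only reaches a strictly larger, non-sharp threshold. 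Moreover your sketch is silent on how the genuinely $3N$-dimensional form (whose off-diagonal kernel exchanges $\vect{k}_1$ and $\vect{k}_2$ inside $\hat{\xi}$) is reduced to a problem where such a weight argument could even be formulated.

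What the paper actually does, and what is missing from your proposal, is: (a) the change of variables and rescaling of the charge leading to \eqref{Phi-Phi3bodyConjuction}, which reduces the estimate to the three-dimensional form $\Theta^{\+\zeta}$ uniformly in the spectator momenta (this is where the $\zeta$-, hence $\lambda$-, uniformity really comes from, via monotonicity in $\zeta$, not from a generic ``increasing $\lambda$ improves things'' remark); (b) the partial-wave decomposition \eqref{partialWavesDecQF} and the exact diagonalization of each sector by the Mellin-type transform of lemma~\ref{offregDiagonalization}, after which $F^{\+0}_{\mathrm{off};\,\ell}$ and $F_{\mathrm{reg};\,\ell}$ are multiplication operators with explicit symbols $S_{\mathrm{off};\,\ell}(p)$, $S_{\mathrm{reg};\,\ell}(p)$, so the cancellation is captured pointwise; and (c) the analysis of lemma~\ref{auxiliaryPositivity}, which through hypergeometric identities and monotonicity arguments shows that the combined symbol is worst at $\ell=0$ and $p=0$, uniformly in $\ell$, $N$, $M$, and that this extremal value yields exactly \eqref{criticalGamma} (the $\arcsin$ term from $S_{\mathrm{reg};\,0}+S_{\mathrm{off};\,0}$ at $p=0$, the $N$-dependent term from the requirement $s_\ell<1$ on the sacrificed fraction of the diagonal form). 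Your ``$\Gamma(\vartheta)$ continuous with $\Gamma(1)=\gamma_c$'' device is morally the same as the paper's family $\Xi^{\,\zeta}_{\+\ell,\,s_\ell}$, but without steps (a)--(c) it remains an unproved claim, so the proposal has a genuine gap at its central point.
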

\n Proposition~\ref{coercivityLemma} implies that $\Phi^\lambda$ uniquely defines a s.a. and invertible operator $\Gamma^\lambda$ in $\Lp{2}[\R^{3N}]$ for any $\lambda\->\-\lambda_0\+$, as long as $\gamma>\gamma_c\+$.
Moreover, we shall see that the domain $D$ of $\Gamma^\lambda$ is independent of $\lambda\->\-\lambda_0\+$.
We shall also extend the definition of $\Gamma^\lambda,D$ to all $\lambda\-\in\--\rho(\mathcal{H}_0)$, preserving its invertibility for any $\lambda\in\C\smallsetminus(\mInfty,\lambda_0]$.

\n Using the above proposition, we can prove our main results that are summarized in the following two theorems.
In the first one, we characterize the quadratic form $Q$.
\begin{theo}\label{closedBoundedQF}\newline
Let us assume $\gamma>\gamma_c\+$.
Then, the quadratic form $Q$, $\dom{Q}$ in $\hilbert*_{N+1}$ is closed and bounded from below.
In particular, $Q>-\lambda_0\+$.
\end{theo}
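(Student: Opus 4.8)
The plan is to deduce Theorem~\ref{closedBoundedQF} from Proposition~\ref{coercivityLemma}, once $Q$ has been checked to be a well-defined quadratic form. Throughout I fix an auxiliary parameter $\lambda>\lambda_0$, with $\lambda_0>0$ as in Proposition~\ref{coercivityLemma}(ii). First I would record the \emph{well-posedness} of the splitting $\psi=w^\lambda+G^\lambda\xi$ in \eqref{QF}: since $G^\lambda$ is injective and $\ran{G^\lambda}\cap H^1(\R^{3(N+1)})=\{0\}$ (because $G^\lambda\eta\notin H^1(\R^{3(N+1)})$ whenever $\eta\neq0$), the charge $\xi$ and $w^\lambda$ are uniquely determined by $\psi$, and $w^\lambda=\psi-G^\lambda\xi\in H^1(\R^{3(N+1)})\cap\hilbert*_{N+1}$ because $\ran{G^\lambda}\subset\hilbert*_{N+1}$. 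The resolvent-type identity $G^\lambda\xi-G^{\lambda'}\xi=(\lambda'-\lambda)(\mathcal{H}_0+\lambda)^{-1}G^{\lambda'}\xi$, whose right-hand side lies in $H^2(\R^{3(N+1)})$, then shows that $\dom{Q}$ and the charge $\xi$ do not depend on the choice of $\lambda$; combined with the explicit $\lambda$-dependence of $\Phi^\lambda_{\mathrm{diag}}$ and $\Phi^\lambda_{\mathrm{off}}$ (the terms $\Phi_{\mathrm{reg}}$ and $\Phi_0$ being $\lambda$-free), this also gives that the value $Q[\psi]$ is $\lambda$-independent, as in the heuristic derivation of Section~\ref{actuallyBuildingQF}; I take this for granted.

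For the \emph{lower bound}, a trivial rearrangement of \eqref{QF} gives, for every $\lambda>\lambda_0$,
\begin{equation*}
Q[\psi]+\lambda\norm{\psi}^2=\mathscr{F}_\lambda[w^\lambda]+\Phi^\lambda[\xi]\geq0,
\end{equation*}
since $\mathscr{F}_\lambda[w^\lambda]=\norm{(\mathcal{H}_0+\lambda)^{\frac{1}{2}}w^\lambda}^2\geq0$ and, by Proposition~\ref{coercivityLemma}(ii), $\Phi^\lambda[\xi]\geq C_2\,\Phi^\lambda_{\mathrm{diag}}[\xi]\geq0$. Thus $Q[\psi]\geq-\lambda\norm{\psi}^2$ for every $\lambda>\lambda_0$, and the $\lambda$-independence of $Q[\psi]$ lets me send $\lambda\downarrow\lambda_0$ to conclude $Q\geq-\lambda_0$. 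To upgrade this to the strict bound $Q>-\lambda_0$ I would sharpen the right-hand side: using $\mathscr{F}_\lambda[w^\lambda]\geq\lambda\norm{w^\lambda}^2$, the elementary estimate $\Phi^\lambda_{\mathrm{diag}}[\xi]\geq\tfrac{4\pi N\sqrt{\lambda}}{\sqrt{\mu}}\norm{\xi}^2$ from \eqref{diagPhi}, and $\norm{\psi}^2\leq2\norm{w^\lambda}^2+2\norm{G^\lambda}^2\norm{\xi}^2$, one gets $Q[\psi]+\lambda\norm{\psi}^2\geq\varepsilon_0\norm{\psi}^2$ with some $\varepsilon_0>0$ valid for all $\lambda\geq\lambda_0$, and therefore $Q[\psi]\geq(\varepsilon_0-\lambda_0)\norm{\psi}^2$ with $\varepsilon_0-\lambda_0>-\lambda_0$.

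For \emph{closedness}, I would equip $\dom{Q}$ with the form norm $\norm{\psi}_Q^2:=Q[\psi]+\lambda\norm{\psi}^2=\mathscr{F}_\lambda[w^\lambda]+\Phi^\lambda[\xi]$, which is a genuine norm because the right-hand side vanishes only for $w^\lambda=0$, $\xi=0$. Now $\mathscr{F}_\lambda[w]$ is equivalent to $\norm{w}_{H^1(\R^{3(N+1)})}^2$ by the ellipticity of $\mathcal{H}_0$ (at fixed $M>0$), while $\Phi^\lambda[\xi]$ is equivalent to $\Phi^\lambda_{\mathrm{diag}}[\xi]$ by Proposition~\ref{coercivityLemma}(i)--(ii), which by \eqref{diagPhi} is in turn equivalent to $\norm{\xi}_{H^{\frac{1}{2}}(\R^{3N})}^2$. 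Consequently the linear bijection $(w,\xi)\longmapsto w+G^\lambda\xi$ from $\big(H^1(\R^{3(N+1)})\cap\hilbert*_{N+1}\big)\times\big(\hilbert*_N\cap H^{\frac{1}{2}}(\R^{3N})\big)$ onto $\dom{Q}$ is a topological isomorphism once the target carries $\norm{\cdot}_Q$; since the source is a Hilbert space — a product of closed subspaces of Sobolev spaces — $(\dom{Q},\norm{\cdot}_Q)$ is complete, i.e.\ $Q$ is closed. Equivalently, any $\norm{\cdot}_Q$-Cauchy sequence $\psi_n=w^\lambda_n+G^\lambda\xi_n$ makes $(w^\lambda_n)$ Cauchy in $H^1(\R^{3(N+1)})$ and $(\xi_n)$ Cauchy in $\hilbert*_N\cap H^{\frac{1}{2}}(\R^{3N})$, so $\psi_n$ converges in $\dom{Q}$.

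Granting Proposition~\ref{coercivityLemma}, which carries the analytic weight, what remains is mostly bookkeeping; I expect the only genuinely delicate points to be the uniqueness of the decomposition $\psi=w^\lambda+G^\lambda\xi$ and the $\lambda$-independence of $Q[\psi]$, both relying on the mapping properties of $G^\lambda$ and the resolvent identity of the first step — and it is precisely this $\lambda$-independence that turns the trivial bound $Q\geq-\lambda$ into the claimed $Q>-\lambda_0$.
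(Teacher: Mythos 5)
Your proof follows the paper's own route in both halves: the lower bound is obtained exactly as in the paper from $Q[\psi]+\lambda\norm{\psi}^2=\mathscr{F}_\lambda[w^\lambda]+\Phi^\lambda[\xi]\geq 0$ for every $\lambda>\lambda_0$ (using proposition~\ref{coercivityLemma}), and your closedness argument --- equivalence of $\mathscr{F}_\lambda$ with the $H^1$-norm and of $\Phi^\lambda$ with $\Phi^\lambda_{\mathrm{diag}}$, hence with the $H^{\frac{1}{2}}$-norm, together with boundedness and injectivity of $G^\lambda$ --- is a repackaging (as a topological isomorphism with a product Hilbert space) of the paper's Cauchy-sequence argument for the norm $\norm{\psi}[Q]^2=Q[\psi]+(1+\lambda)\norm{\psi}^2$. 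Your preliminary remarks on uniqueness of the decomposition and $\lambda$-independence are consistent with remarks~\ref{uniqueDecomposition} and~\ref{wellPosednessQFDomain}, which the paper also uses implicitly.

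The one step that does not hold up is the attempted ``upgrade'' to a strict bound: you claim $Q[\psi]+\lambda\norm{\psi}^2\geq\varepsilon_0\norm{\psi}^2$ with a single $\varepsilon_0>0$ valid for all $\lambda\geq\lambda_0$. Proposition~\ref{coercivityLemma}(ii) is only stated for $\lambda>\lambda_0$, and the constant $C_2=C_2(\lambda)$ actually produced in its proof (see~\eqref{PhiLowerBound} and~\eqref{lambdanote}) tends to $0$ as $\lambda\downarrow\lambda_0$; hence the quantity you call $\varepsilon_0$ degenerates in that limit and no uniform constant exists, while at a fixed $\lambda>\lambda_0$ the resulting bound $Q\geq(\varepsilon_\lambda-\lambda)$ need not beat $-\lambda_0$. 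This flaw is, however, inessential: the paper's own proof likewise only derives $Q[\psi]\geq-\lambda\norm{\psi}^2$ for every $\lambda>\lambda_0$, i.e.\ the bound $Q\geq-\lambda_0$ that your first, correct, limiting argument already delivers, so you may simply drop the strictness paragraph.
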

\n In the second theorem we characterize the Hamiltonian $\mathcal{H}$ defined by the quadratic form $Q$.
\begin{theo}\label{hamiltonianCharacterization}$ $

\n
Let us assume $\gamma>\gamma_c\+$.
Then, the quadratic form $Q$, $\dom{Q}$ uniquely defines the self-adjoint and bounded from below operator $\mathcal{H}$, $\dom{\mathcal{H}}$ characterized as follows
\begin{equation}\label{hamiltonianAction}
    \begin{split}
    \dom{\mathcal{H}}&=\!\left\{\psi\in\dom{Q}\,\big|\; w^\lambda\!\in\-  H^2(\R^{3(N+1)}),\; \xi\- \in\- D,\,\Gamma^\lambda\xi\- =  \tfrac{4\pi}{\mu}\+\textstyle{\sum_{i=1}^N} w^\lambda\big|_{\pi_i}\+,\,\lambda>\!\lambda_0\right\}\-,\\
    \mathcal{H}\psi&=\mathcal{H}_0\+w^\lambda\--\lambda\+  G^\lambda \xi.
\end{split}
\end{equation}
Moreover, the resolvent $\define{\resolvent{\mathcal{H}}[z]; (\mathcal{H} -z)^{-1}}$ is given by 
\begin{equation}\label{resolventHamiltonian}
    \resolvent{\mathcal{H}}[z]\psi=\resolvent{\mathcal{H}_0}[z]\psi+G^{-z}\xi,\qquad \forall z\in\C\smallsetminus[-\lambda_0,\pInfty),
\end{equation}
where $\psi\in\hilbert*_{N+1}\+$, $\define{\resolvent{\mathcal{H}_0}[z]; (\mathcal H_0 - z)^{-1}}$ and $\xi\-\in\-D$ solves the  equation
\begin{equation}\label{resolventConstraint}
    \Gamma^{-z}\xi=\tfrac{4\pi}{\mu}\sum_{i=1}^N\-\Big(\resolvent{\mathcal{H}_0}[z]\+\psi\Big)\-\Big|_{\pi_i}.
\end{equation}
\end{theo}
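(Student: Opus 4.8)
\emph{Strategy.} The plan is to obtain $\mathcal{H}$ from Kato's first representation theorem and then to compute its domain, action and resolvent by hand. By Theorem~\ref{closedBoundedQF} the form $Q$ is densely defined (it extends the form of $\mathcal{H}_0$), closed and bounded from below by $-\lambda_0$, so the representation theorem produces a unique self-adjoint, lower bounded operator $\mathcal{H}$ with $\dom{\mathcal{H}}\subset\dom{Q}$, $\spectrum{\mathcal{H}}\subset[-\lambda_0,\pInfty)$ and
\[
Q(\psi,\phi)=\scalar{\mathcal{H}\psi}{\phi},\qquad \psi\in\dom{\mathcal{H}},\;\phi\in\dom{Q};
\]
the whole problem thus reduces to evaluating the left-hand side on the decomposition $\psi=w^\lambda+G^\lambda\xi$. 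Before doing this I would record two facts about that decomposition, both read off from the Fourier representation~\eqref{potentialDef} of $G^\lambda$ and belonging to the analysis of the potential carried out in the appendix. First, the resolvent-type identity $G^\lambda\xi-G^\mu\xi=(\mu-\lambda)\,\resolvent{\mathcal{H}_0}[-\lambda]\+G^\mu\xi$, whose right-hand side lies in $H^2(\R^{3(N+1)})$: combined with the injectivity of $G^\lambda$ and with the fact that $G^\lambda\zeta\in H^1$ forces $\zeta=0$, it shows that the charge $\xi$ attached to a given $\psi\in\dom{Q}$ — and likewise the vector $\mathcal{H}_0 w^\lambda-\lambda\+G^\lambda\xi$ whenever $w^\lambda\in H^2$ — does not depend on $\lambda$; together with the $\lambda$-independence of $D$ and the analogous behaviour of $\Gamma^\lambda$, this lets me work with one fixed value $\lambda=\lambda_1>\lambda_0$, the description in~\eqref{hamiltonianAction} being the same for every $\lambda>\lambda_0$. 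Second, the Green-type identity
\[
\scalar{(\mathcal{H}_0+\lambda)\+w}{G^\lambda\eta}=\tfrac{4\pi}{\mu}\,\scalar{\textstyle\sum_{i=1}^N w\big|_{\pi_i}}{\eta},\qquad w\in\dom{\mathcal{H}_0},\;\eta\in H^{\frac12}(\R^{3N}),
\]
i.e.\ that $(G^\lambda)^\ast(\mathcal{H}_0+\lambda)$ restricted to $\dom{\mathcal{H}_0}$ equals the weighted trace map $w\mapsto\tfrac{4\pi}{\mu}\sum_i w\big|_{\pi_i}$; it follows by inserting the Fourier formulas, cancelling the denominator $\tfrac1M p^2+\sum_m k_m^2+\lambda$, and recognising the remaining momentum integral as the Fourier transform of a trace on $\pi_i$, the constant coming from $\sqrt{2/\pi}\,(2\pi)^{3/2}=4\pi$. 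For $w\in H^2$ each $w\big|_{\pi_i}$ lies in $H^{\frac12}(\R^{3N})\subset\dom{\Phi^\lambda}$, so both identities are well posed.

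\emph{The inclusion $\supseteq$.} Take $\psi=w^{\lambda_1}+G^{\lambda_1}\xi\in\dom{Q}$ with $w^{\lambda_1}\in H^2$, $\xi\in D$ and $\Gamma^{\lambda_1}\xi=\tfrac{4\pi}{\mu}\sum_i w^{\lambda_1}\big|_{\pi_i}$, and let $\phi=v^{\lambda_1}+G^{\lambda_1}\eta\in\dom{Q}$ be arbitrary. Since $w^{\lambda_1}\in\dom{\mathcal{H}_0}$ one has $\mathscr{F}_{\lambda_1}(w^{\lambda_1},v^{\lambda_1})=\scalar{(\mathcal{H}_0+\lambda_1)w^{\lambda_1}}{v^{\lambda_1}}$; inserting $v^{\lambda_1}=\phi-G^{\lambda_1}\eta$ and $\psi=w^{\lambda_1}+G^{\lambda_1}\xi$ into the three summands of $Q(\psi,\phi)$ and rearranging yields
\[
Q(\psi,\phi)=\scalar{\mathcal{H}_0 w^{\lambda_1}-\lambda_1 G^{\lambda_1}\xi}{\phi}-\scalar{(\mathcal{H}_0+\lambda_1)w^{\lambda_1}}{G^{\lambda_1}\eta}+\Phi^{\lambda_1}(\xi,\eta).
\]
By the Green identity and the boundary condition the last two terms cancel, since $\scalar{(\mathcal{H}_0+\lambda_1)w^{\lambda_1}}{G^{\lambda_1}\eta}=\tfrac{4\pi}{\mu}\scalar{\sum_i w^{\lambda_1}\big|_{\pi_i}}{\eta}=\scalar{\Gamma^{\lambda_1}\xi}{\eta}=\Phi^{\lambda_1}(\xi,\eta)$, the last equality because $\Gamma^{\lambda_1}$ is the operator associated with the form $\Phi^{\lambda_1}$, $\xi\in D$ and $\eta\in\dom{\Phi^{\lambda_1}}$. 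Hence $Q(\psi,\phi)=\scalar{\mathcal{H}_0 w^{\lambda_1}-\lambda_1 G^{\lambda_1}\xi}{\phi}$ for every $\phi\in\dom{Q}$, so $\psi\in\dom{\mathcal{H}}$ and $\mathcal{H}\psi=\mathcal{H}_0 w^{\lambda_1}-\lambda_1 G^{\lambda_1}\xi$, which is~\eqref{hamiltonianAction} on this side.

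\emph{The reverse inclusion and the resolvent.} Let $\tilde{\mathcal{H}}$ be the operator defined by~\eqref{hamiltonianAction} with $\lambda$ fixed to $\lambda_1$, i.e.\ with domain $\{\psi\in\dom{Q}\,|\,w^{\lambda_1}\in H^2,\ \xi\in D,\ \Gamma^{\lambda_1}\xi=\tfrac{4\pi}{\mu}\sum_i w^{\lambda_1}\big|_{\pi_i}\}$ and action $\tilde{\mathcal{H}}\psi=\mathcal{H}_0 w^{\lambda_1}-\lambda_1 G^{\lambda_1}\xi$. By the previous step $\tilde{\mathcal{H}}\subset\mathcal{H}$, hence $\tilde{\mathcal{H}}$ is symmetric, and since $-\lambda_1\in\rho(\mathcal{H})$ it suffices to show that $\tilde{\mathcal{H}}+\lambda_1$ maps $\dom{\tilde{\mathcal{H}}}$ onto $\hilbert*_{N+1}$ in order to conclude $\tilde{\mathcal{H}}=\mathcal{H}$. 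Given $\psi\in\hilbert*_{N+1}$, set $w^{\lambda_1}:=\resolvent{\mathcal{H}_0}[-\lambda_1]\psi\in H^2$; then $\sum_i w^{\lambda_1}\big|_{\pi_i}\in H^{\frac12}(\R^{3N})$, and, using the invertibility of $\Gamma^{\lambda_1}$ from Proposition~\ref{coercivityLemma}, put $\xi:=(\Gamma^{\lambda_1})^{-1}\big(\tfrac{4\pi}{\mu}\sum_i w^{\lambda_1}\big|_{\pi_i}\big)\in D$ and $\varphi:=w^{\lambda_1}+G^{\lambda_1}\xi$, which belongs to $\dom{\tilde{\mathcal{H}}}$ by construction. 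Then
\[
(\tilde{\mathcal{H}}+\lambda_1)\varphi=\mathcal{H}_0 w^{\lambda_1}-\lambda_1 G^{\lambda_1}\xi+\lambda_1\big(w^{\lambda_1}+G^{\lambda_1}\xi\big)=(\mathcal{H}_0+\lambda_1)w^{\lambda_1}=\psi,
\]
so $\tilde{\mathcal{H}}=\mathcal{H}$, which proves~\eqref{hamiltonianAction} (the description being $\lambda$-independent); reading the same computation as a formula for the inverse gives $\resolvent{\mathcal{H}}[-\lambda_1]\psi=\resolvent{\mathcal{H}_0}[-\lambda_1]\psi+G^{\lambda_1}\xi$ with $\xi$ solving~\eqref{resolventConstraint} at $z=-\lambda_1$. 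Finally I would pass from $z=-\lambda_1$ to all $z\in\C\smallsetminus[-\lambda_0,\pInfty)$ by analytic continuation: both sides of~\eqref{resolventHamiltonian} are analytic there — the left-hand side because $\spectrum{\mathcal{H}}\subset[-\lambda_0,\pInfty)$, the right-hand side because of the analyticity of $\resolvent{\mathcal{H}_0}[z]$, $G^{-z}$ and $(\Gamma^{-z})^{-1}$ on that set — and they agree on $(\mInfty,-\lambda_0)$, which yields~\eqref{resolventHamiltonian}--\eqref{resolventConstraint}.

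\emph{Main obstacle.} The crux is the Green identity together with its precise constant $\tfrac{4\pi}{\mu}$: this is the point where the singular behaviour $G^\lambda\xi\notin H^1$ along the hyperplanes $\pi_i$ is turned into genuine boundary data, and it rests on the explicit computation of the potential $G^\lambda$ (done in the appendix) and on the accompanying bookkeeping of trace regularities and of the $\lambda$-independence of the splitting $\psi=w^\lambda+G^\lambda\xi$. Once these are available, the identification of $\dom{\mathcal{H}}$, the action of $\mathcal{H}$ and the Krein-type resolvent formula are formal consequences of Kato's representation theorem and of the invertibility of $\Gamma^\lambda$ established in Proposition~\ref{coercivityLemma}.
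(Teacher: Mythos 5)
Your argument is correct, and it reaches the statement by a route that is organized differently from the paper's. The shared core is the same: Kato's representation theorem applied to the closed form $Q$ (theorem~\ref{closedBoundedQF}), and the duality identity $\scalar{(\mathcal{H}_0+\lambda)w}{G^\lambda\eta}=\tfrac{4\pi}{\mu}\scalar{\textstyle\sum_i w|_{\pi_i}}{\eta}$, which is nothing but the defining relation $G^\lambda=\adj{(T\resolvent{\mathcal{H}_0}[-\lambda])}$ from the appendix, together with the identification of $\Gamma^\lambda$ as the operator associated with $\Phi^\lambda$. The paper, however, proves the \emph{forward} inclusion: it takes $\psi\in\dom{\mathcal{H}}$ as given by the representation theorem, tests the identity $Q[v,\psi]=\scalar{v}{\mathcal{H}\psi}$ first against $v\in H^1\cap\hilbert*_{N+1}$ (forcing $w^\lambda\in H^2$ and the action formula) and then against general $v\in\dom{Q}$ (forcing $\xi\in D$ and the boundary condition $\Gamma^\lambda\xi=\tfrac{4\pi}{\mu}\sum_i w^\lambda|_{\pi_i}$), and then obtains the resolvent formula on all of $\C\smallsetminus[-\lambda_0,\pInfty)$ by invoking the Posilicano/Birman--Kre\u{\i}n--Vishik machinery of section~\ref{kreinRecall}, whose hypotheses were checked in remark~\ref{resolventRemark}. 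You instead prove the \emph{reverse} inclusion (every $\psi$ with the claimed structure lies in $\dom{\mathcal{H}}$, by an explicit cancellation in $Q(\psi,\phi)$), and then close the domain identification with the standard symmetric-restriction-plus-surjectivity argument at $z=-\lambda_1$, which simultaneously hands you the Krein-type formula at real $z<-\lambda_0$; the extension to complex $z$ is done by analytic continuation. Your version is more self-contained on the domain side (it does not need the abstract extension theorem to identify $\dom{\mathcal{H}}$, only the invertibility of $\Gamma^{\lambda}$ for $\lambda>\lambda_0$ from proposition~\ref{coercivityLemma}), while the paper's version reads off the full resolvent formula directly from the abstract theory. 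One point you should make explicit in the continuation step: analyticity of $z\mapsto (\Gamma^{-z})^{-1}$ and solvability of~\eqref{resolventConstraint} for nonreal $z$ require the \emph{bounded} invertibility of $\Gamma^{-z}$ off $[-\lambda_0,\pInfty)$, i.e.\ exactly the content of remark~\ref{resolventRemark} (and of condition~\eqref{invertibilityGammaTheory} in Posilicano's scheme), not merely injectivity; since you lean on the same remark the paper leans on, this is a dependency to acknowledge rather than a gap.
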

\n Let us comment on our lower bound $-\lambda_0$ of the quadratic form $Q$ (and then of the infimum of the spectrum of $\mathcal{H}$).
In the course of the proof of proposition~\ref{coercivityLemma} we explicitly find (see section~\ref{closure&Boundedness}) 
\begin{equation}
    -\-\lambda_0 = \begin{dcases}
        \; -\frac{(N\!-\!1)^2\,\gamma^2}{\mu\+ \Lambda_\gamma(N,M)^2\,b^2}\,,\quad&\text{if }\alpha\geq 0\+,\\
        \; -\frac{\left[(N\!-\!1)\+ \gamma+\abs{\alpha}\+b\+ \right]^2}{\mu\+ \Lambda_\gamma(N,M)^2\,b^2},\quad&\text{if }\alpha<0
        \end{dcases}
\end{equation}
where
\begin{equation}\label{betterLambda}
    \Lambda_\gamma(N,M) =\min\!\left\{1,\,\tfrac{\pi(N-1)}{2}\+\tfrac{M+1}{\!\-\sqrt{M(M+2)}\,}(\gamma-\gamma_c)\right\}\in (0,1].
\end{equation}
We notice that $\Lambda_\gamma(N,M) \to[\!\quad\!] 0$ for $\gamma \to[\!\quad\!] \gamma_c^+$ for any choice of $N\-\geq\- 2$ and $M\->\-0$ and therefore we have $-\lambda_0 \to[\!\quad\!] \mInfty$ for $\gamma \to[\!\quad\!] \gamma_c^+$.
It is also worth mentioning that, regardless of the sign of $\alpha$, $\lambda_0$ grows as $N^2$ for $N$ large.

\n We conclude this section with an outline of the strategy of the proof.
We stress that the main technical point is proposition~\ref{coercivityLemma}, where we estimate the quadratic form $\Phi^{\lambda}$.
It is worth to notice that the hard part of the work is devoted to finding the estimate from below.

\n In section~\ref{3bodyReduction} we introduce suitable changes of coordinates and we rewrite the quadratic form $\Phi^{\lambda}$ in $\Lp{2}[\R^{3N}]$ in terms of the quadratic form $\Theta^{\+\zeta}$ in $\Lp{2}[\R^3]$ (see~\eqref{tetaz}, \eqref{3bodyPhi} and~\eqref{Phi-Phi3bodyConjuction}), that is of the type studied in~\cite[section 3]{BCFT} for the three-particle case.
This allows us to reduce the analysis to the estimate of $\Theta^{\+\zeta}$.
As a first result, we prove an estimate from above for $\Theta^{\+\zeta}$ (see proposition~\ref{boundednessPhiH-half}).

\n In section~\ref{partialWaves} we consider the expressions~\eqref{3bodyPhi} of $\Theta^{\+\zeta}$ in the Fourier space and we expand the quadratic form in partial waves (see~\eqref{SphericalHarmonicsDec}, \eqref{partialWavesDecQF}).
We also recall some known results on the terms of the expansion $F_\ell^{\,\zeta}$, $\ell\-\in\-\No\+$.

\n In section~\ref{mainEstimates} we prove some estimates that are crucial to control $F_\ell^{\,\zeta}$.
We stress that we perform a careful analysis for each value of $\ell\-\in\-\No$ that leads to a detailed control of the upper bound and the lower bound.
In particular, the result of lemma~\ref{auxiliaryPositivity} allows us to prove proposition~\ref{coercivityLemma} by introducing the threshold value $\gamma_c$ that is uniformly bounded in $M\->\-0$ and $N\-\geq\- 2$.
It is worth to mention that~\cite[lemma 3.5]{BCFT} provides an analogous result for the $s$-wave, i.e., for the case $\ell=0$.
In our framework such a result is not sufficient to obtain a uniform control on $\gamma_c$ 
since a technical constraint on the mass (depending on $N$) would be required.

\n In section~\ref{estimateTheta} we use the above technical results to obtain the key estimate from below of $\Theta^{\+\zeta}$ (see~\eqref{ThetaLowerBound}, \eqref{LambdaDef}).
We also prove another estimate from above of $\Theta^{\+\zeta}$ (see~\eqref{Thetaup2}, \eqref{LambdaPrime}) which improves the result in proposition~\ref{boundednessPhiH-half}. 

\n In section~\ref{closure&Boundedness} we show that the estimates of $\Theta^{\+\zeta}$ imply those on $\Phi^{\lambda}$ and thus we conclude the proof of proposition~\ref{coercivityLemma}.
Then, following a standard procedure, we also prove theorems~\ref{closedBoundedQF} and~\ref{hamiltonianCharacterization}. 


\n We conclude the paper with an appendix.
In section~\ref{kreinRecall} we recall some basic facts of Birman-Kre\u{\i}n-Vishik's theory of s.a. extensions.
In section~\ref{computingPotential} we discuss some useful properties of the potential $G^{\lambda}$.
In section~\ref{actuallyBuildingQF} we give a heuristic derivation of the quadratic form $Q$.

\vs\vs

\section{Reduction to a three-body problem}\label{3bodyReduction}

\vs

We start the study of  $\Phi^\lambda$, defined by~\eqref{defPhi}, introducing  suitable changes of variables that reduce the analysis to a quadratic form of the type studied in~\cite[section 3]{BCFT} for the three-particle case.
In the end we shall prove that $\Phi^{\lambda}$ is bounded in $H^{\frac{1}{2}}(\R^{3N})$.

\n Let us define the change of coordinates
\begin{equation}\label{2bodyChangeCoordinates}
    \begin{cases}
        \vect{R}=\-\sqrt{\eta\+}\-\left(\-\sqrt{\-\frac{M}{\mu}}\+\vect{x}_0\-+\!\sqrt{\-\frac{\mu}{M}}\+\vect{x}_i\-\right)\!,\\
        \vect{r}=\-\sqrt{\mu\+}\+(\vect{x}_0-\vect{x}_i)
    \end{cases}\nquad\iff
    \begin{cases}
        \vect{x}_0=\-\sqrt{\frac{\mu\+\eta}{M}}\vect{R}+\frac{\sqrt{\mu}\+\eta}{M}\vect{r}\+,\\
        \vect{x}_i=\-\sqrt{\frac{\mu\+\eta}{M}}\vect{R}-\frac{\eta}{\!\-\sqrt{\mu\,}\+}\vect{r}
    \end{cases}
\end{equation}
where $\eta$ is the modified reduced mass
\begin{equation}
    \define{\eta;\frac{M\-+\-1}{M\-+\-2}}\,.
\end{equation}
Then, let us define a unitary operator that encodes the transformation~\eqref{2bodyChangeCoordinates}
\begin{subequations}
\begin{equation}\label{changeCoordinates2}
\begin{split}
    &\mspace{135mu}\maps{\mathscr{K}_i}{\Lp{2}[\R^{3N}];\Lp{2}[\R^{3N}\!,\,d\vect{r}d\vect{R}\+d\vect{x}_1\-\cdots d\check{\vect{x}}_i\cdots d\vect{x}_{N-1}]},\\
    &\define{(\mathscr{K}_i\psi)(\vect{r},\vect{R}\-,\vect{x}_1,\ldots\check{\vect{x}}_i\ldots,\vect{x}_{N-1})\-;\\
    &\mspace{90mu}\left(\-\tfrac{\eta}{M}\-\right)^{\!\frac{3}{4}}\!\psi\bigg(\!\sqrt{\tfrac{\mu\+\eta}{M}}\vect{R}\-+\tfrac{\sqrt{\mu}\+\eta}{M}\vect{r}\-,\vect{x}_1,\ldots,\vect{x}_{i-1},\sqrt{\tfrac{\mu\+ \eta}{M}}\vect{R}\--\tfrac{\eta}{\!\-\sqrt{\mu\,}\+}\+\vect{r},\vect{x}_{i+1},\ldots,\vect{x}_{N-1}\!\bigg)}\-.
\end{split} 
\end{equation}
Notice that $\mathscr{K}_i$ is unitary since $\left(\frac{\eta}{M}\right)^{\!\frac{3}{2}}$ is the Jacobian associated to~\eqref{2bodyChangeCoordinates}.
Similarly, one has
\begin{equation}\label{antiChangeCoordinates2}
\begin{split}
    &\mspace{135mu}\maps{\adj{\mathscr{K}}_i}{\Lp{2}[\R^{3N}\!,\,d\vect{r}d\vect{R}\+d\vect{x}_1\-\cdots d\check{\vect{x}}_i\cdots d\vect{x}_{N-1}];\Lp{2}[\R^{3N}]},\\
    &(\adj{\mathscr{K}}_i\psi)(\vect{x}_0,\vect{x}_1,\ldots,\vect{x}_{N-1})\-=\\
    &\mspace{126mu}\left(\-\tfrac{M}{\eta}\-\right)^{\!\-\frac{3}{4}}\!\-\psi\bigg(\!\sqrt{\mu\+}\+(\vect{x}_0\--\vect{x}_i), \sqrt{\-\tfrac{M\+\eta}{\mu}}\+\vect{x}_0\-+\!\sqrt{\-\tfrac{\mu\+\eta}{M}}\+\vect{x}_i\+,\vect{x}_1,\ldots\check{\vect{x}}_i\ldots,\vect{x}_{N-1}\!\bigg)\-.
\end{split} 
\end{equation}
\end{subequations}
Next, for any given $\xi\in H^{\frac{1}{2}}(\R^{3N})$, we define $\chi\-\in\- H^{\frac{1}{2}}(\R^{3N}\!,\,d\vect{r}d\vect{R}\+d\vect{x}_2\cdots d\vect{x}_{N-1})$ given by 
\begin{equation}\label{modifiedCharge}
    \define{\chi(\vect{r}\-,\vect{R},\vect{x}_2,\ldots,\vect{x}_{N-1});(\mathscr{K}_1\+\xi)(\vect{r}\-,\vect{R},\vect{x}_2,\ldots,\vect{x}_{N-1})}.
\end{equation}
Moreover we define 
\begin{equation}\label{rescaledModifiedCharge}
\begin{split}
    &\define{\FT{\phi}(\vect{\sigma},\tilde{\vect{k}});(\tilde{k}^2\-+\lambda)^{\frac{3}{4}}\FT{\chi}\!\left(\-\sqrt{\tilde{k}^2\-+\lambda\+}\,\vect{\sigma},\tilde{\vect{k}}\right)}\\
    & =\! (\tilde{k}^2\!+\-\lambda)^{\frac{3}{4}} \! \left( \-\tfrac{M}{\eta}\-\right)^{\!\-\frac{3}{4}}\! \FT{\xi}\! \left(\-\sqrt{\mu}\+\sqrt{ \tilde{k}^2\!+\-\lambda} \,\vect{\sigma} + \-\sqrt{\-\tfrac{M\+\eta}{\mu}} \vect{k}_1 , 
    \sqrt{\-\tfrac{\mu\+\eta}{M}} \vect{k}_1\--\sqrt{\mu}\+\sqrt{ \tilde{k}^2\!+\!\lambda}\, \vect{\sigma},\vect{k}_2,\ldots , \vect{k}_{N-1} \!
    \right)
    \end{split}
\end{equation}
where $\tilde{\vect{k}} = (\vect{k}_1,\ldots,\vect{k}_{N-1})$.
In the next lemma we rewrite $\Phi^{\lambda}[\xi]$ in terms of $\FT{\phi}$.
\begin{lemma}
    For any $\xi\in H^{\frac{1}{2}}(\R^{3N})$ one has
    \begin{equation}\label{Phi3b}
        \begin{split}
            \Phi^\lambda[\xi]\-=&\:\Phi_0[\xi]+\tfrac{4\pi N}{\sqrt{\mu}}\!\-\integrate[\R^{3(N-1)}]{\sqrt{\tilde{k}^2\-+\lambda\+};\mspace{-33mu}d\tilde{\vect{k}}}\+\bigg[\-\integrate[\R^3]{\!\- \sqrt{\tfrac{\mu}{\eta}\+ \sigma^2\- +1\+}\,\abs{\FT{\phi}(\vect{\sigma},\tilde{\vect{k}})}^2;\-d\vect{\sigma}}+\\
            &+\frac{(N\!-\!1)\gamma}{2\pi^2}\!\-\integrate[\R^6]{\frac{\conjugate*{\FT{\phi}(\vect{\sigma},\tilde{\vect{k}})}\, \FT{\phi}(\vect{\tau},\tilde{\vect{k}})}{\abs{\vect{\sigma}-\vect{\tau}}^2};\-d\vect{\sigma}d\vect{\tau}}-\frac{N\!-\!1}{2\pi^2}\!\-\integrate[\R^6]{\frac{\conjugate*{\FT{\phi} (\vect{\sigma},\tilde{\vect{k}})}\, \FT{\phi}(\vect{\tau},\tilde{\vect{k}})}{\sigma^2\- +\tau^2\-+\frac{2\,\vect{\sigma}\cdot\vect{\tau}}{M+1}+1};\-d\vect{\sigma}d\vect{\tau}}\+\bigg] .
        \end{split}
    \end{equation}
    \begin{proof}
    Let us define a new hermitian quadratic form $\maps{\tilde{\Phi}^\lambda}{\Lp{2}[\R^{3N}];\R}$, given by
    \begin{subequations}\label{componentsPhi2}
    \begin{gather}
        \define{\tilde{\Phi}^\lambda;\tilde{\Phi}_{\mathrm{diag}}^\lambda\-+\tilde{\Phi}_{\mathrm{off}}^\lambda\-+\tilde{\Phi}_{\mathrm{reg}}}\+,\qquad \dom{\tilde{\Phi}^\lambda}=H^{\frac{1}{2}}(\R^{3N}),\nonumber\\
        \define{\tilde{\Phi}^\lambda_{\mathrm{diag}}[\chi];\tfrac{4\pi \+N}{\sqrt{\mu}}\!\-\integrate[\R^{3N}]{\!\textstyle{\sqrt{\frac{\mu}{\eta}\+q^2\-+\-\sum_{m=1}^{N-1} k_m^2\-+\-\lambda\+}\,}\abs{\FT{\chi}(\vect{q},\vect{k}_1,\ldots,\vect{k}_{N-1})}^2;\mspace{-10mu}d\vect{q}d\vect{k}_1\-\cdots d\vect{k}_{N-1}}},\label{diagPhi2}\\
        \define{\tilde{\Phi}^\lambda_{\mathrm{off}}[\chi]; -\+ \tfrac{2N(N-1)}{\pi\+\sqrt{\mu}}\!\-\integrate[\R^{3(N+1)}]{\frac{\conjugate*{\FT{\chi}(\vect{q},\vect{k}_1,\ldots,\vect{k}_{N-1})}\+\FT{\chi}(\vect{p},\vect{k}_1,\ldots,\vect{k}_{N-1})}{q^2+p^2+\frac{2\,\vect{q}\,\cdot\+\vect{p}}{M+1}+\-\sum_{m=1}^{N-1} k_m^2+\lambda}; \mspace{-33mu}d\vect{q}d\vect{p}\+d\vect{k}_1\-\cdots d\vect{k}_{N-1}}},\label{offPhi2}\\
        \define{\tilde{\Phi}_{\mathrm{reg}}[\chi];\tfrac{2N(N-1)\,\gamma}{\pi\+\sqrt{\mu}}\!\-\integrate[\R^{3(N+1)}]{\frac{\conjugate*{\FT{\chi}(\vect{q},\vect{k}_1,\ldots,\vect{k}_{N-1})}\+\FT{\chi}(\vect{p},\vect{k}_1,\ldots,\vect{k}_{N-1})}{\abs{\vect{q}-\vect{p}}^2};\mspace{-33mu}d\vect{q}d\vect{p}\+d\vect{k}_1\-\cdots d\vect{k}_{N-1}}}\+ \label{regPhi2}
    \end{gather}
    \end{subequations}
    where $\chi$ is given by \eqref{modifiedCharge}.
    We want to show that
    \begin{equation*}
        \Phi^\lambda_{\mathrm{diag}}[\xi]=\tilde{\Phi}^\lambda_{\mathrm{diag}}[\chi],\quad\Phi^\lambda_{\mathrm{off}}[\xi]=\tilde{\Phi}^\lambda_{\mathrm{off}}[\chi],\quad\Phi_{\mathrm{reg}}[\xi]=\tilde{\Phi}_{\mathrm{reg}}[\chi].
    \end{equation*}
Let us show that $\tilde{\Phi}_{\mathrm{reg}}[\chi]=\Phi_{\mathrm{reg}}[\xi]$.
    To this end, we adopt the change of variables associated to $\mathscr{K}_1$ in~\eqref{regPhi}, yielding
    \begin{align*}
        \Phi_{\mathrm{reg}}[\xi]&=\tfrac{4\pi N(N-1)\,\gamma}{\mu}\!\-\integrate[\R^{3N}]{\frac{\abs{\xi(\vect{x}_0,\vect{x}_1,\ldots,\vect{x}_{N-1})}^2}{\abs{\vect{x}_1\--\vect{x}_0}};\mspace{-10mu}d\vect{x}_0d\vect{x}_1\-\cdots d\vect{x}_{N-1}}\\
        &=\tfrac{4\pi N(N-1)\,\gamma}{\mu}\!\-\integrate[\R^{3N}]{\frac{\!\-\sqrt{\mu\,}\+}{r}\,\abs{\chi(\vect{r},\vect{R},\vect{x}_2,\ldots,\vect{x}_{N-1})}^2;\mspace{-10mu}d\vect{r}d\vect{R}\+d\vect{x}_2\cdots d\vect{x}_{N-1}}\\[-2pt]
        &=\tfrac{2\+N(N-1)\,\gamma}{\pi\+\sqrt{\mu}}\!\-\integrate[\R^{3(N+1)}]{\frac{\conjugate*{\FT{\chi}(\vect{q},\vect{Q},\vect{k}_2,\ldots,\vect{k}_{N-1})}\+\FT{\chi}(\vect{p},\vect{Q},\vect{k}_2,\ldots,\vect{k}_{N-1})\-}{\abs{\vect{q}-\vect{p}}^2};\mspace{-33mu}d\vect{q}d\vect{p}\+d\vect{Q}\+d\vect{k}_2\cdots d\vect{k}_{N-1}}\\
        &=\tilde{\Phi}_{\mathrm{reg}}[\chi].
    \end{align*}
    Notice that in the last step we have used the identity
    \begin{equation}\label{distributionalFourier}
        \integrate[\R^3]{\frac{\abs{f(\vect{r})}^2\!\-}{r};\-d\vect{r}}=\frac{1}{2\pi^2\!\-}\-\integrate[\R^6]{\frac{\conjugate*{\FT{f}(\vect{p})}\+\FT{f}(\vect{q})}{\abs{\vect{p}-\vect{q}}^2};\-d\vect{p}\+d\vect{q}},\qquad\forall f\in H^{\frac{1}{2}}(\R^3).
    \end{equation}
    In order to prove the same result for $\tilde{\Phi}^\lambda_{\mathrm{diag}}$ and $\tilde{\Phi}^\lambda_{\mathrm{off}}\+$, it is helpful to deal with $\mathscr{K}_i$ in Fourier space.
    If we denote with $(\vect{q},\+\vect{Q})$ the couple of conjugate variables of $(\vect{r},\+\vect{R})$, while $(\vect{p},\+\vect{k}_i)$ are conjugate to $(\vect{x}_0,\+\vect{x}_i)$, in the space of momenta~\eqref{2bodyChangeCoordinates} reads
    \begin{equation}\label{2bodyFourierChangeCoordinates}
        \begin{cases}
            \vect{Q}=\-\sqrt{\-\frac{\mu\+\eta}{M}}(\vect{p}+\vect{k}_i),\\
            \vect{q}=\frac{\sqrt{\mu}\+\eta}{M}\+\vect{p}-\frac{\eta}{\!\-\sqrt{\mu\,}\+}\+\vect{k}_i
        \end{cases}\nquad\iff
        \begin{cases}
            \vect{p}=\-\sqrt{\mu}\, \vect{q}+\-\sqrt{\-\frac{M\+\eta}{\mu}}\+\vect{Q}\+,\\
            \vect{k}_i=\--\sqrt{\mu}\,\vect{q} +\-\sqrt{\-\frac{\mu\+\eta}{M}}\+\vect{Q}\+.
        \end{cases}
    \end{equation}
    In particular, one can verify that
    \begin{equation}\label{fourierChi}
        \FT{\chi}(\vect{q},\vect{Q},\vect{k}_2,\ldots,\vect{k}_{N-1})=\left(\-\tfrac{M}{\eta}\-\right)^{\!\frac{3}{4}}\!\FT{\xi}\!\left(\-\sqrt{\mu}\, \vect{q}+\-\sqrt{\-\tfrac{M\+\eta}{\mu}}\+\vect{Q}\+,\sqrt{\-\tfrac{\mu\+\eta}{M}}\+\vect{Q}-\-\sqrt{\mu}\,\vect{q},\vect{k}_2,\ldots,\vect{k}_{N-1}\right)\!.
    \end{equation}
    Hence, considering that $\frac{p^2}{M+1}+k_1^2=Q^2\-+\frac{\mu}{\eta}\+q^2$, one has
    \begin{align*}
        \Phi^\lambda_{\mathrm{diag}}[\xi]&=\tfrac{4\pi N}{\!\-\sqrt{\mu\,}\+}\!\-\integrate[\R^{3N}]{\!\textstyle{\sqrt{\frac{p^2}{M+1}+\-\sum_{m=1}^{N-1} k_m^2\-+\-\lambda\+}\,}\abs{\FT{\xi}(\vect{p},\vect{k}_1,\ldots,\vect{k}_{N-1})}^2;\mspace{-10mu}d\vect{p}\+d\vect{k}_1\-\cdots d\vect{k}_{N-1}}\\
        &=\tfrac{4\pi N}{\!\-\sqrt{\mu\,}\+}\!\-\integrate[\R^{3N}]{\!\textstyle{\sqrt{Q^2\-+\frac{\mu}{\eta}\+q^2\-+\-\sum_{m=2}^{N-1} k_m^2\-+\-\lambda\+}\,}\abs{\FT{\chi}(\vect{q},\vect{Q},\vect{k}_2,\ldots,\vect{k}_{N-1})}^2;\mspace{-10mu}d\vect{Q}d\vect{q}d\vect{k}_2\cdots d\vect{k}_{N-1}}\\
        &=\tilde{\Phi}^\lambda_{\mathrm{diag}}[\chi].
    \end{align*}
    In the remaining case, we need the change of coordinates
    \begin{equation*}
        \begin{cases}
            \vect{p}=\!\sqrt{\mu}\+ (\vect{q}_1\-+\vect{q}_2)\- +\- \sqrt{\mu\+\eta\+M}\+\vect{Q},\\
            \vect{k}_1=\- -\sqrt{\mu}\,\vect{q}_2\- +\!\sqrt{\-\frac{\mu\+\eta}{M}}\+\vect{Q},\\
            \vect{k}_2=\- -\sqrt{\mu}\,\vect{q}_1\-+\!\sqrt{\-\frac{\mu\+\eta}{M}}\+\vect{Q}
        \end{cases}\nquad\implies \begin{cases}
            \vect{p}+\vect{k}_1=\!\sqrt{\mu}\+ \vect{q}_1\-+\- \sqrt{\frac{\eta \+M}{\mu}}\+\vect{Q},\\
            \vect{p}+\vect{k}_2=\!\sqrt{\mu}\+ \vect{q}_2\- +\- \sqrt{\frac{\eta \+M}{\mu}}\+\vect{Q},\\
            \frac{p^2}{M}+k_1^2\-+k_2^2=\- q_1^2\-+q_2^2\-+Q^2\!+\frac{2}{M+1}\+ \vect{q}_1\!\cdot\mspace{-0.75mu}\vect{q}_2\+.
        \end{cases}
    \end{equation*}
    Indeed, notice that this substitution of Jacobian $\left(\- \frac{\mu M}{\eta}\- \right)^{\!\-\frac{3}{2}}\!$, together with~\eqref{fourierChi} lead to
    \begin{align*}
    \Phi^\lambda_{\mathrm{off}}[\xi]&= -\tfrac{2N(N-1)\-}{\pi\,\mu^2}\!\- \integrate[\R^{3(N+1)}]{\-\frac{\conjugate*{\FT{\xi}(\vect{p}\-+\-\vect{k}_1, \vect{k}_2,\ldots, \vect{k}_N)}\+\FT{\xi}(\vect{p}\-+\-\vect{k}_2, \vect{k}_1, \vect{k}_3,\ldots, \vect{k}_N)}{\frac{1}{M}p^2+\sum_{m=1}^N k_m^2+\lambda};\mspace{-33mu}d\vect{p}\+d\vect{k}_1\- \cdots d\vect{k}_N}\\
    &=-\+ \tfrac{2N(N-1)}{\pi\+\sqrt{\mu}}\!\-\integrate[\R^{3(N+1)}]{\frac{\conjugate*{\FT{\chi}(\vect{q}_1,\vect{Q},\vect{k}_3,\ldots,\vect{k}_N)}\+\FT{\chi}(\vect{q}_2,\vect{Q},\vect{k}_3,\ldots,\vect{k}_N)}{q_1^2+q_2^2+\frac{2\,\vect{q}_1\cdot\+\vect{q}_2}{M+1}+Q^2\-+\-\sum_{m=3}^N k_m^2+\lambda}; \mspace{-33mu}d\vect{q}_1d\vect{q}_2d\vect{Q}\+d\vect{k}_3\cdots d\vect{k}_N}\\
    &=\tilde{\Phi}_{\mathrm{off}}^\lambda[\chi].
    \end{align*}
    Next, let us define the unitary scaling operator in $\Lp{2}[\R^{3N}]$
    \begin{equation}\label{scalingOperator}
    \begin{split}
        &\maps{U}{\Lp{2}[\R^3]\-\otimes\-\Lp{2}[\R^{3(N-1)}];\Lp{2}[\R^3]\-\otimes\-\Lp{2}[\R^{3(N-1)}]},\\
        \psi(\vect{p},\vect{k}_1,\ldots,\vect{k}_{N-1})&\longmapsto \textstyle{\left(\sum_{m=1}^{N-1} k_m^2+\lambda\right)^{\!\- \frac{3}{4}}\!\psi\!\left(\!\sqrt{\sum_{m=1}^{N-1}k_m^2+\lambda\+}\,\vect{p},\vect{k}_1,\ldots,\vect{k}_{N-1}\-\right)}.
    \end{split}
    \end{equation}
    We stress that $\ran{U|_{H^{1/2}(\R^{3N})}}\-\subset\- H^{\frac{1}{2}}(\R^{3N})$.
    Using the scaling defined by  $\FT{\phi}=U\FT{\chi}$ in $\tilde{\Phi}^{\lambda}$, we conclude the proof. 
    
    \end{proof}
\end{lemma}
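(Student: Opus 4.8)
The plan is to obtain~\eqref{Phi3b} in two moves: a linear change of variables separating the relative and centre-of-mass coordinates of the pair (impurity, first boson), encoded by the unitary $\mathscr{K}_1$, followed by a fibrewise rescaling in the momentum conjugate to the relative coordinate, encoded by the unitary $U$ of~\eqref{scalingOperator}. Along the way one checks that each of the four pieces $\Phi^\lambda_{\mathrm{diag}}$, $\Phi^\lambda_{\mathrm{off}}$, $\Phi_{\mathrm{reg}}$, $\Phi_0$ transforms as claimed, with $\Phi_0$ not transformed at all and appearing unchanged on the right-hand side.

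\emph{Step 1: the change of coordinates $\mathscr{K}_1$.} First I would introduce the auxiliary hermitian form $\tilde{\Phi}^\lambda=\tilde{\Phi}^\lambda_{\mathrm{diag}}+\tilde{\Phi}^\lambda_{\mathrm{off}}+\tilde{\Phi}_{\mathrm{reg}}$ of~\eqref{componentsPhi2}, written in terms of $\chi=\mathscr{K}_1\xi$, and prove the three identities $\Phi^\lambda_{\mathrm{diag}}[\xi]=\tilde{\Phi}^\lambda_{\mathrm{diag}}[\chi]$, $\Phi^\lambda_{\mathrm{off}}[\xi]=\tilde{\Phi}^\lambda_{\mathrm{off}}[\chi]$, $\Phi_{\mathrm{reg}}[\xi]=\tilde{\Phi}_{\mathrm{reg}}[\chi]$. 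For $\Phi_{\mathrm{reg}}$ I would apply the substitution $(\vect{x}_0,\vect{x}_1)\mapsto(\vect{r},\vect{R})$ of~\eqref{2bodyChangeCoordinates} with $i=1$, whose Jacobian $(\eta/M)^{3/2}$ is precisely the normalisation built into $\mathscr{K}_1$; this turns $|\vect{x}_1-\vect{x}_0|^{-1}$ into $\sqrt{\mu}\,r^{-1}$, and then Plancherel together with the distributional identity~\eqref{distributionalFourier}, applied fibrewise in $\vect{r}$ at fixed remaining variables, converts $\int|\chi|^2 r^{-1}$ into the double-momentum kernel $|\vect{q}-\vect{p}|^{-2}$. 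For the diagonal term I would pass to Fourier variables, use the momentum-space form~\eqref{2bodyFourierChangeCoordinates} of the coordinate change and~\eqref{fourierChi}, and invoke the algebraic identity $\tfrac{p^2}{M+1}+k_1^2=Q^2+\tfrac{\mu}{\eta}q^2$ so that the square-root weight transforms correctly while the Jacobian cancels the unitary normalisation.

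\emph{Step 2: the off-diagonal term.} This is the delicate piece, since the two factors $\hat{\xi}(\vect{p}+\vect{k}_1,\vect{k}_2,\dots)$ and $\hat{\xi}(\vect{p}+\vect{k}_2,\vect{k}_1,\vect{k}_3,\dots)$ are evaluated at distinct arguments, so a single $\mathscr{K}_1$-type substitution is not enough. I would introduce two auxiliary momenta $\vect{q}_1,\vect{q}_2$ via $\vect{p}=\sqrt{\mu}(\vect{q}_1+\vect{q}_2)+\sqrt{\mu\eta M}\,\vect{Q}$, $\vect{k}_1=-\sqrt{\mu}\,\vect{q}_2+\sqrt{\mu\eta/M}\,\vect{Q}$, $\vect{k}_2=-\sqrt{\mu}\,\vect{q}_1+\sqrt{\mu\eta/M}\,\vect{Q}$, with Jacobian $(\mu M/\eta)^{3/2}$, designed so that $\vect{p}+\vect{k}_1=\sqrt{\mu}\,\vect{q}_1+\sqrt{\eta M/\mu}\,\vect{Q}$, $\vect{p}+\vect{k}_2=\sqrt{\mu}\,\vect{q}_2+\sqrt{\eta M/\mu}\,\vect{Q}$ and the denominator collapses to $q_1^2+q_2^2+\tfrac{2}{M+1}\vect{q}_1\cdot\vect{q}_2+Q^2+\sum_{m\geq3}k_m^2+\lambda$. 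Using~\eqref{fourierChi} once more, both factors become $\hat{\chi}$ evaluated at $(\vect{q}_1,\vect{Q},\dots)$ and $(\vect{q}_2,\vect{Q},\dots)$, i.e.\ exactly $\tilde{\Phi}^\lambda_{\mathrm{off}}[\chi]$.

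\emph{Step 3: the rescaling $U$.} Finally I would set $\hat{\phi}=U\hat{\chi}$, i.e.\ perform in each of $\tilde{\Phi}^\lambda_{\mathrm{diag}},\tilde{\Phi}^\lambda_{\mathrm{off}},\tilde{\Phi}_{\mathrm{reg}}$ the substitution $\vect{q}=\sqrt{\tilde{k}^2+\lambda}\,\vect{\sigma}$ (and $\vect{p}=\sqrt{\tilde{k}^2+\lambda}\,\vect{\tau}$). A short computation shows this pulls the common factor $\sqrt{\tilde{k}^2+\lambda}$ outside the $\tilde{\vect{k}}$-integral: the diagonal weight becomes $\sqrt{\tilde{k}^2+\lambda}\,\sqrt{\tfrac{\mu}{\eta}\sigma^2+1}$, while the two kernels become $|\vect{\sigma}-\vect{\tau}|^{-2}$ and $(\sigma^2+\tau^2+\tfrac{2\vect{\sigma}\cdot\vect{\tau}}{M+1}+1)^{-1}$; matching the prefactors $\tfrac{4\pi N}{\sqrt{\mu}}$, $-\tfrac{2N(N-1)}{\pi\sqrt{\mu}}$, $\tfrac{2N(N-1)\gamma}{\pi\sqrt{\mu}}$ against those in~\eqref{Phi3b} and restoring $\Phi_0[\xi]$ finishes the proof. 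The only real obstacle is the bookkeeping: correctly pairing each Jacobian against the unitary normalisations of $\mathscr{K}_1$ and $U$ and verifying the quadratic identities relating old and new momenta. There is no analytic subtlety, because membership in $H^{1/2}$ is preserved at every step (as already noted for $U$ and $\mathscr{K}_i$) and all the integrals converge absolutely for $\xi\in H^{1/2}(\R^{3N})$.
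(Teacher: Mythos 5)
Your proposal is correct and follows essentially the same route as the paper: the auxiliary form $\tilde{\Phi}^\lambda$ via $\chi=\mathscr{K}_1\xi$, the termwise identities using the Jacobian, the identity~\eqref{distributionalFourier}, the relation $\tfrac{p^2}{M+1}+k_1^2=Q^2+\tfrac{\mu}{\eta}q^2$, the two-momenta substitution for the off-diagonal piece, and the final rescaling $\FT{\phi}=U\FT{\chi}$. Nothing is missing.
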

\n Formula~\eqref{Phi3b} suggests to define the hermitian quadratic form
$\Theta^{\+\zeta}$, for $\zeta\geq 0$, with
\begin{equation}\label{tetaz}
    \dom{\Theta^{\+\zeta}}=H^{\frac{1}{2}}(\R^3),\quad \define{\Theta^{\+\zeta};\Theta^{\+\zeta}_{\mathrm{diag}}\!+\Theta^{\+\zeta}_{\mathrm{off}}\-+\Theta_{\mathrm{reg}}}\+,
\end{equation}
where, for a given $\varphi\in H^{\frac{1}{2}}(\R^3)$,
\begin{subequations}\label{3bodyPhi}
\begin{gather}
    \define{\Theta^{\+\zeta}_{\mathrm{diag}}[\varphi];\!\- \integrate[\R^3]{\!\- \textstyle{\sqrt{\frac{\mu}{\eta}\+ \sigma^2\- +\zeta\+}}\,\abs{\FT{\varphi}(\vect{\sigma})}^2;\-d\vect{\sigma}}},\label{diag3bodyPhi}\\
    \define{\Theta_{\mathrm{reg}}[\varphi];\- \frac{(N\!-\!1)\+ \gamma}{2\pi^2}\!\- \integrate[\R^6]{\frac{\conjugate*{\FT{\varphi}(\vect{\sigma})}\, \FT{\varphi}(\vect{\tau})}{\abs{\vect{\sigma}-\vect{\tau}}^2};\-d\vect{\sigma}d\vect{\tau}}}\+ ,\label{reg3bodyPhi}\\
    \define{\Theta^{\+\zeta}_{\mathrm{off}}[\varphi];\- -\frac{N\!-\!1}{2\pi^2}\!\- \integrate[\R^6]{\frac{\conjugate*{\FT{\varphi} (\vect{\sigma})}\, \FT{\varphi}(\vect{\tau})}{\sigma^2\- +\tau^2\-+\frac{2\,\vect{\sigma}\cdot\vect{\tau}}{M+1}+\zeta};\-d\vect{\sigma}d\vect{\tau}}}\+ .\label{off3bodyPhi}
\end{gather}
\end{subequations}
Observe that
\begin{equation}\label{Phi-Phi3bodyConjuction}
    \Phi^{\lambda} [\xi]= \Phi_0[\xi]+\!\tfrac{4\pi N}{\!\-\sqrt{\mu\,}\+}\!\-\integrate[\R^{3(N-1)}]{\!\textstyle{\sqrt{\sum_{m=1}^{N-1} k_m^2 +\lambda\+}\,}\Theta^{\+1}[\+\phi\+](\vect{k}_1,\ldots,\vect{k}_{N-1}); \mspace{-33mu}d\vect{k}_1\-\cdots d\vect{k}_{N-1}}\+
\end{equation}
where $\phi$ is given by~\eqref{rescaledModifiedCharge}.
Equation~\eqref{Phi-Phi3bodyConjuction} shows that the analysis of $\Phi^{\lambda}$ in $\Lp{2}[\R^{3N}]$ can be reduced to the analysis of $\Theta^{\+\zeta}$ in $\Lp{2}[\R^3]$.
In the rest of this section and in sections~\ref{partialWaves}, \ref{mainEstimates} and~ \ref{estimateTheta} we shall concentrate on the estimate from above and from below of $\Theta^{\+\zeta}$. 

\n We notice that $\varphi\in H^{\frac{1}{2}}(\R^3)$ is equivalent to $\Theta^{\+\zeta}_{\mathrm{diag}}[\varphi] <\pInfty$. 
As a first result we prove that  the whole quadratic form $\Theta^{\+\zeta}$ is well defined in $H^{\frac{1}{2}}(\R^3)$.
\begin{prop}\label{boundednessPhiH-half}
    Given $\varphi \in H^{\frac{1}{2}}(\R^{3N})$, $\gamma>0$  and $\zeta \geq 0$, we have
    \begin{equation}\label{Thetaup1}
       \abs{\Theta^{\+\zeta}[\varphi]}\leq \left[ 1+\tfrac{(N-1)(M+1)}{\sqrt{M(M+2)}} \left( \tfrac{M+1}{M} + \tfrac{\pi}{2}\+ \gamma \right) \-\right]\!
       \Theta^{\+\zeta}_{\mathrm{diag}}[\varphi]\,.
    \end{equation}
    \begin{proof}
        Concerning $\Theta^{\+\zeta}_{\mathrm{off}}\+$, we have
        \begin{equation*}
            \abs{\Theta^{\+\zeta}_{\mathrm{off}}[\varphi]}\leq\frac{N\- -\- 1}{2\pi^2} \!\- \integrate[\R^6]{\frac{\abs{\FT{\varphi}(\vect{p})}\,\abs{\FT{\varphi}(\vect{q})}}{p^2\-+q^2\-+\frac{2\,\vect{p}\,\cdot\+ \vect{q}}{M+1}}; \- d\vect{p}\+d\vect{q}}
            \leq \frac{N\- -\- 1}{2\mu\+\pi^2} \!\- \integrate[\R^6]{\frac{\abs{\FT{\varphi}(\vect{p})}\,\abs{\FT{\varphi}(\vect{q})}}{p^2\-+q^2}; \- d\vect{p}\+d\vect{q}},
        \end{equation*}
        where we have used
        $$x^2\!+y^2\!-\tfrac{2}{M+1}\,xy\geq (x^2\!+y^2)\!\left(1-\tfrac{1}{M+1}\right)\!=\mu\+(x^2\!+y^2).$$
        Let us define the integral operator $\maps{Q}{\Lp{2}(\R^3);\Lp{2}(\R^3)}$ acting as
        \begin{equation}
            \define{\left(Q\,\psi\right)\!(\vect{p})\-;\!\- \integrate[\R^3]{\frac{\psi(\vect{k})}{\!\-\sqrt{p\+}\+ (p^2\-+k^2)\sqrt{k\+}\,};\-d\vect{k}}}\+.
        \end{equation}
        Thanks to~\cite[lemma 2.1]{FT}, we know that $Q$ is bounded with norm $2\pi^2$.
        
        \n Hence, denoted $\define{g(\vect{k});\sqrt{k\+}\,\abs{\FT{\varphi}(\vect{k})}}$, we have 
        \begin{align*}
            \abs{\Theta^{\+\zeta}_{\mathrm{off}}[\varphi]} &\leq\frac{N\- -\- 1}{2\mu\+\pi^2}\+ \scalar{g}{Q\,g}[\Lp{2}(\R^3)]\leq \frac{N\- -\- 1}{\mu} \norm{g}[\Lp{2}(\R^3)]^2 = \frac{N\--\-1}{\mu}\integrate[\R^3]{\abs{\vect{k}} \abs{\FT{\varphi}(\vect{k})}^2;\-d\vect{k}}\\
            &\leq (N\--\-1)\+\sqrt{\frac{\eta}{\mu^3}} \,\Theta^{\+\zeta}_{\mathrm{diag}}[\varphi]\+.
        \end{align*}
        Let us consider $\Theta_{\mathrm{reg}}$. 
        According to~\cite{Y}\footnote{There is a typo in~\cite[equation~(1.4)]{Y}: the power $2$ on the Euler Gamma function in the numerator is missing.}, we have the following sharp Hardy-Rellich inequality
        for any $u\in H^{\alpha}(\R^d)$ and $\alpha<\frac{d}{2}$
        \begin{equation}\label{inequalityHardyRellich}
            \integrate[\R^d]{\frac{\abs{u(\vect{x})}^2\!\-}{\;\abs{\vect{x}}^{2\alpha}}; \-d\vect{x}}\leq \frac{1}{2^{2\alpha}}\frac{\Gamma^2\!\left(\frac{d}{4}-\frac{\alpha}{2}\right)}{\Gamma^2\!\left(\frac{d}{4}+\frac{\alpha}{2}\right)}\-  \integrate[\R^d]{\abs{\vect{k}}^{2\alpha}\abs{\FT{u}(\vect{k})}^2; \-d\vect{k}},
        \end{equation}
        where $\maps{\Gamma}{\R\smallsetminus \!-\No;\R}$ denotes the Euler Gamma function.
        In case $\alpha=\frac{1}{2}$ and $d=3$, using equation~\eqref{distributionalFourier}, inequality~\eqref{inequalityHardyRellich} reads
        \begin{equation*}
            \frac{1}{2\pi^2\!\-}\- \integrate[\R^6]{\frac{\conjugate*{\FT{u}(\vect{p})\-}\, \FT{u}(\vect{q})}{\abs{\vect{p}-\vect{q}}^2};\-d\vect{p}\+d\vect{q}}
            =\!\integrate[\R^3]{\frac{\abs{u(\vect{x})}^2\!\-}{\abs{\vect{x}}};\- d\vect{x}}\leq\frac{\pi}{2}\- \integrate[\R^3]{\abs{\vect{k}}\+ \abs{\FT{u}(\vect{k})}^2;\- d\vect{k}}\,. 
        \end{equation*}
        Therefore
        \begin{equation}
            0\leq\Theta_{\mathrm{reg}}[\varphi]\-\leq \- \frac{\pi}{2}(N\--\-1)\+ \gamma\, \sqrt{\frac{\eta}{\mu}} \, \Theta^{\+\zeta}_{\mathrm{diag}}[\varphi]\,.
        \end{equation}
        and the proof is complete.
    \end{proof}
\end{prop}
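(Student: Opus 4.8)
The plan is to estimate each of the three summands of $\Theta^{\zeta}=\Theta^{\zeta}_{\mathrm{diag}}+\Theta^{\zeta}_{\mathrm{off}}+\Theta_{\mathrm{reg}}$ by a constant multiple of the diagonal part $\Theta^{\zeta}_{\mathrm{diag}}$ and then add the three bounds. No work is needed for $\Theta^{\zeta}_{\mathrm{diag}}$ itself, and the parameter $\zeta\ge 0$ is essentially inert: it appears only added to nonnegative quantities under a square root or in a denominator, so simply discarding it can only enlarge the right-hand sides. A reduction common to the other two terms is to compare $\int_{\R^3}\abs{\vect k}\,\abs{\FT{\varphi}(\vect k)}^2\,d\vect k$ with $\Theta^{\zeta}_{\mathrm{diag}}[\varphi]$: since $\sqrt{\tfrac\mu\eta\,\sigma^2+\zeta}\ge\sqrt{\mu/\eta}\,\abs{\vect\sigma}$, one gets $\int\abs{\vect k}\,\abs{\FT{\varphi}(\vect k)}^2\,d\vect k\le\sqrt{\eta/\mu}\;\Theta^{\zeta}_{\mathrm{diag}}[\varphi]$.

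For $\Theta^{\zeta}_{\mathrm{off}}$ I would pass to $\abs{\FT{\varphi}}$ by the triangle inequality, drop $\zeta$, and use the elementary estimate $\sigma^2+\tau^2+\tfrac{2\,\vect\sigma\cdot\vect\tau}{M+1}\ge\mu\,(\sigma^2+\tau^2)$ (which is $2\,\vect\sigma\cdot\vect\tau\ge-(\sigma^2+\tau^2)$ together with $1-\tfrac1{M+1}=\mu$), reducing the kernel to $(p^2+q^2)^{-1}$. Writing $g(\vect k)=\sqrt{k}\,\abs{\FT{\varphi}(\vect k)}$, the remaining double integral equals $\tfrac{N-1}{2\mu\pi^2}\,\scalar{g}{Q\+g}$ with $Q$ the integral operator of kernel $\bigl(\sqrt{p}\,(p^2+k^2)\sqrt{k}\bigr)^{-1}$ on $\Lp{2}[\R^3]$, which by \cite[lemma~2.1]{FT} is bounded with norm $2\pi^2$; hence $\abs{\Theta^{\zeta}_{\mathrm{off}}[\varphi]}\le\tfrac{N-1}{\mu}\int_{\R^3}\abs{\vect k}\,\abs{\FT{\varphi}(\vect k)}^2\,d\vect k\le(N-1)\sqrt{\eta/\mu^3}\;\Theta^{\zeta}_{\mathrm{diag}}[\varphi]$. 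For $\Theta_{\mathrm{reg}}$ I would first use the distributional identity~\eqref{distributionalFourier} to see that $\Theta_{\mathrm{reg}}[\varphi]=(N-1)\gamma\int_{\R^3}\abs{\varphi(\vect x)}^2\abs{\vect x}^{-1}\,d\vect x\ge 0$, and then apply the sharp Hardy--Rellich inequality~\eqref{inequalityHardyRellich} with $d=3$, $\alpha=\tfrac12$ (whose constant equals $\tfrac12\,\Gamma(1/2)^2/\Gamma(1)^2=\pi/2$), obtaining $0\le\Theta_{\mathrm{reg}}[\varphi]\le\tfrac\pi2(N-1)\gamma\sqrt{\eta/\mu}\;\Theta^{\zeta}_{\mathrm{diag}}[\varphi]$ via the same $H^{1/2}$ comparison.

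Adding the three estimates yields $\abs{\Theta^{\zeta}[\varphi]}\le\bigl[\,1+(N-1)\sqrt{\eta/\mu^3}+\tfrac\pi2(N-1)\gamma\sqrt{\eta/\mu}\,\bigr]\Theta^{\zeta}_{\mathrm{diag}}[\varphi]$, and inserting $\mu=\tfrac{M}{M+1}$, $\eta=\tfrac{M+1}{M+2}$ turns $\sqrt{\eta/\mu^3}$ into $\tfrac{(M+1)^2}{M\sqrt{M(M+2)}}$ and $\sqrt{\eta/\mu}$ into $\tfrac{M+1}{\sqrt{M(M+2)}}$, reproducing exactly the constant in~\eqref{Thetaup1}. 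I expect the only genuinely non-routine input to be the boundedness, with the sharp norm $2\pi^2$, of the kernel operator $Q$, which is quoted from~\cite{FT}; the Hardy--Rellich estimate is classical, and the remaining work is the triangle inequality, the completing-the-square bound, and a careful bookkeeping of the mass factors $\mu$ and $\eta$ (the step most prone to numerical slips).
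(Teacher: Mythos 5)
Your proposal is correct and follows essentially the same route as the paper: the same elementary bound $\sigma^2+\tau^2+\tfrac{2\vect\sigma\cdot\vect\tau}{M+1}\ge\mu(\sigma^2+\tau^2)$ plus the kernel operator of norm $2\pi^2$ from~\cite[lemma 2.1]{FT} for the off-diagonal term, the sharp Hardy--Rellich inequality with constant $\pi/2$ for the regularizing term, and the comparison $\int\abs{\vect k}\abs{\FT{\varphi}}^2\le\sqrt{\eta/\mu}\,\Theta^{\+\zeta}_{\mathrm{diag}}[\varphi]$, with the mass factors $\sqrt{\eta/\mu^3}$ and $\sqrt{\eta/\mu}$ correctly reproducing the constant in~\eqref{Thetaup1}.
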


\vs\vs

\section{Partial waves decomposition}\label{partialWaves}

\vs

In order to establish a lower bound for $\Theta^{\+\zeta}$, in this section we start by following the first steps of~\cite[section 3]{BCFT} properly adapted to our case, i.e., we study the quadratic form decomposed in partial waves.
Given $\FT{\varphi}\in\Lp{2}[\R^3,\,\sqrt{p^2+1}\,d\vect{p}]$, one has
\begin{equation}\label{SphericalHarmonicsDec}
    \FT{\varphi}(\vect{p})=\!\sum_{\ell\+ \in\+ \No}\sum_{m\+ =\+ -\ell}^{\ell}\FT{\varphi}_{\ell,m}(p)\,Y^m_\ell(\vers{\omega}).
\end{equation}
Here, $\maps{Y_\ell^m}{\mathbb{S}^2;\C}$ denotes the Spherical Harmonic of order $\ell, m$, while $(p, \vers{\omega})\-\in\-\Rplus\!\-\times\-\mathbb{S}^2$ represents $\vect{p}\-\in\-\R^3$ in spherical coordinates and $\FT{\varphi}_{\ell,m}\in\Lp{2}[\Rplus,\,p^2\sqrt{p^2+1}\,dp]$ are the Fourier coefficients of $\FT{\varphi}$.
Accordingly, we decompose the quadratic form $\Theta^{\+\zeta}$ for any $\zeta\geq 0$
\begin{gather}\label{partialWavesDecQF}
    \Theta^{\+\zeta}[\varphi]=\!\sum_{\ell\+ \in\+ \No}\sum_{m\+ =\+ -\ell}^\ell F^{\,\zeta}_\ell[\FT{\varphi}_{\ell,m}],\\
    \maps{F^{\,\zeta}_{\ell}}{\Lp{2}[\Rplus,\,p^2dp];\R},\qquad     \dom{F^{\,\zeta}_\ell}=\Lp{2}[\Rplus,\,p^2\sqrt{p^2+1}\,dp].
\end{gather}
As usual, we consider the three-components
\begin{equation}\label{sphericalDecQF}
    \define{F^{\,\zeta}_{\ell}\!;F^{\,\zeta}_{\mathrm{diag}}\!+\- F^{\,\zeta}_{\mathrm{off};\,\ell}\-+\- F_{\mathrm{reg};\,\ell\,}},
\end{equation}
each of which is described in the following lemma.
From now on, let $P_\ell(y)=\frac{1}{2^\ell \ell!}\frac{\mathrm{d}^\ell}{\mathrm{d}y^\ell}(y^2-1)^\ell$ be the Legendre polynomial of degree $\ell\-\in\-\No\+$.
\begin{lemma}\label{partialWavesQF}
    For any $\psi\-\in\-\Lp{2}[\Rplus,\, p^2\sqrt{p^2+1}\, dp]$, taking into account decomposition~\eqref{partialWavesDecQF} and definition~\eqref{sphericalDecQF}, we have the following expressions for any $\zeta\geq 0,\, \ell\in\N_0$
    \begin{subequations}
    \begin{gather}\label{diagPartialWavesQF}
        F^{\,\zeta}_{\mathrm{diag}}[\psi]\- =\!\- \integrate[0;\pInfty]{k^2\!\sqrt{\tfrac{\mu}{\eta}\+ k^2\- +\zeta\+}\,\abs{\psi(k)}^2;\nquad dk},\\
        \label{regPartialWavesQF}
        F_{\mathrm{reg};\,\ell}[\psi]\- =\tfrac{(N-1)\gamma}{\pi}\-\!\integrate[0;\pInfty]{p^2;\nquad dp}\!\- \integrate[0;\pInfty]{q^2\,\conjugate*{\psi(p)}\+\psi(q);\nquad dq}\-\!\integrate[-1;1]{\frac{P_{\ell}(y)}{p^2\-+q^2\--2\+p\+q\,y\,};\-dy}\+,\\
        \label{offPartialWavesQF}
        F^{\,\zeta}_{\mathrm{off};\,\ell}[\psi]\- =\- -\tfrac{N-1}{\pi}\-\!\integrate[0;\pInfty]{p^2;\nquad dp}\!\- \integrate[0;\pInfty]{q^2\,\conjugate*{\psi(p)}\+\psi(q);\nquad dq}\-\!\integrate[-1;1]{\frac{P_{\ell}(y)}{p^2\-+q^2\-+\frac{2}{M+1}\+p\+q\,y+\zeta\,};\-dy}\+.
    \end{gather}
    \end{subequations}
    \begin{proof}
        The result is proved in~\cite[lemma 3.1]{CDFMT} and here we 
        give some details for reader's convenience.
       Let us consider the diagonal contribution. Using spherical coordinates in~\eqref{diag3bodyPhi}
        so that any $\FT{\varphi}(\vect{p})\-\in\-\Lp{2}[\R^3,\,\sqrt{p^2+1}\,d\vect{p}]$ can be replaced in $\Theta^{\+\zeta}_{\mathrm{diag}}[\varphi]$ via decomposition~\eqref{SphericalHarmonicsDec}, one gets
        \begin{equation}
           \Theta^{\+\zeta}_{\mathrm{diag}}[\varphi]=\!\sum_{\ell\+ \in\+ \No}\sum_{m\+ =\+ -\ell}^\ell F^{\,\zeta}_{\mathrm{diag}}[\FT{\varphi}_{\ell,m}],
        \end{equation}
        thanks to the orthonormality of $Y^m_\ell\-$, with $F^{\,\zeta}_{\mathrm{diag}}$ given by~\eqref{diagPartialWavesQF}.
        
        \n Regarding the regularizing contribution, the same procedure yields
        \begin{equation}\label{regPartialWavesInProgress}
           \begin{split}
           \Theta_{\mathrm{reg}}[\varphi]=\!\sum_{\substack{\ell_1\+ \in\+ \No\\\ell_2\+ \in\+ \No}}\+\sum_{m_1\+ =\+ -\ell_1}^{\ell_1}
           \+\sum_{m_2\+ =\+ -\ell_2}^{\ell_2}\!\!\! \tfrac{(N-1)\+\gamma}{2\pi^2}\!\-\integrate[0;\pInfty]{p_1^2; \nquad dp_1}\!\-\integrate[0;\pInfty]{p_2^2\,\conjugate*{\FT{\varphi}_{\ell_1,m_1}(p_1)}\+\FT{\varphi}_{\ell_2,m_2}(p_2); \nquad dp_2}\times\\[-12.5pt]
           \times\!\-\integrate[\mathbb{S}^2]{;d\vers{\omega}_1}\!\!\-\integrate[\mathbb{S}^2]{\frac{\conjugate*{Y^{m_1}_{\ell_1}(\vers{\omega}_1)}\+ Y^{m_2}_{\ell_2}(\vers{\omega}_2)}{p_1^2+p_2^2-2p_1p_2\,\vers{\omega}_1\!\cdot\vers{\omega}_2};d\vers{\omega}_2}\+.
           \end{split}
        \end{equation}
Let us  decompose the function $f_{p_1\+p_2}\-: x\longmapsto\frac{1}{p_1^2+\+p_2^2-2\,p_1 p_2\,x\,}$ in terms of Legendre polynomials, for almost every $(p_1,p_2)\-\in\-\Rplus\!\-\times\-\Rplus$
        \begin{equation}\label{decLegendreP}
        f_{p_1\+p_2}\- (x)=\!\sum_{\ell\+\in\+\N_0}\tfrac{2\ell+1}{2}\scalar{P_\ell\+}{f_{p_1\+p_2}}[\Lp{2}(-1,\,1)]\,P_\ell(x),\quad \forall x\in[-1,\,1].
        \end{equation}
        Making use of the addition formula for the Spherical Harmonics
        \begin{equation}\label{additionFormulaSH}
            P_\ell(\vers{\omega}_1\!\cdot\vers{\omega}_2)=\frac{4\pi}{2\ell+1}\!\sum_{m=-\ell}^\ell Y^m_\ell(\vers{\omega}_1)\, \conjugate*{Y^m_\ell(\vers{\omega}_2)},
        \end{equation}
        one has the following decomposition
        \begin{equation}
        f_{p_1\+p_2}(\vers{\omega}_1\!\cdot\vers{\omega}_2)\!=\!\sum_{\ell\+\in\+\N_0}2\pi\!\- \integrate[-1;1]{\frac{P_\ell(y)}{p_1^2+p_2^2-2p_1p_2\,y};\-dy}\sum_{m=-\ell}^{\ell}Y^m_\ell(\vers{\omega}_1)\,\conjugate*{Y^m_\ell(\vers{\omega}_2)}.
        \end{equation}
Replacing the previous expression in \eqref{regPartialWavesInProgress}, we find
        \begin{equation}
           \Theta_{\mathrm{reg}}[\varphi]=\!\sum_{\ell\+ \in\+ \No}\sum_{m\+ =\+ -\ell}^\ell F_{\mathrm{reg;\,\ell}}[\FT{\varphi}_{\ell,m}],
        \end{equation}
        with $F_{\mathrm{reg};\,\ell}$ given by~\eqref{regPartialWavesQF}.
        
        \n The computation related to the off-diagonal contribution is exactly the same.
        
    \end{proof}
\end{lemma}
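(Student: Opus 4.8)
Since this lemma is essentially the partial-wave reduction carried out in \cite[lemma 3.1]{CDFMT}, the plan is to give a self-contained derivation by inserting the spherical-harmonics expansion \eqref{SphericalHarmonicsDec} of $\FT{\varphi}$ into each of the three pieces \eqref{diag3bodyPhi}, \eqref{reg3bodyPhi}, \eqref{off3bodyPhi} of $\Theta^{\+\zeta}$, passing to spherical coordinates $\vect{p}=p\,\vers{\omega}$ in $\R^3$, and using the orthonormality of $\{Y^m_\ell\}$ on $\mathbb{S}^2$ to collapse the double $(\ell,m)$-sum to a single one. This simultaneously yields the decomposition \eqref{partialWavesDecQF} and the explicit formulas for the three components in \eqref{sphericalDecQF}.

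For $\Theta^{\+\zeta}_{\mathrm{diag}}$ the argument is immediate: the weight $\sqrt{\tfrac{\mu}{\eta}\,\sigma^2+\zeta}$ depends only on $\sigma=\abs{\vect{\sigma}}$, so after expanding $\abs{\FT{\varphi}(\vect{\sigma})}^2$ via \eqref{SphericalHarmonicsDec} and integrating in $\vers{\omega}\in\mathbb{S}^2$, orthonormality of the $Y^m_\ell$ removes all cross terms and leaves exactly \eqref{diagPartialWavesQF}.

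For $\Theta_{\mathrm{reg}}$ and $\Theta^{\+\zeta}_{\mathrm{off}}$ the kernels depend on the angular variables only through $\vers{\omega}_1\!\cdot\vers{\omega}_2$, and the key step is a Legendre expansion. For a.e.\ fixed $(p,q)\in\Rplus\times\Rplus$ the function $f_{p,q}:y\mapsto(p^2+q^2-2pq\,y)^{-1}$ (respectively $y\mapsto(p^2+q^2+\tfrac{2}{M+1}pq\,y+\zeta)^{-1}$) belongs to $L^2(-1,1)$ --- it is continuous on $[-1,1]$ off the null set $\{p=q\}$ in the first case, and uniformly bounded for $\zeta\geq0$ in the second --- hence $f_{p,q}=\sum_{\ell\in\No}\tfrac{2\ell+1}{2}\langle P_\ell,f_{p,q}\rangle_{L^2(-1,1)}\,P_\ell$ with convergence in $L^2(-1,1)$. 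Combining this with the addition formula \eqref{additionFormulaSH} rewrites the angular kernel as $\sum_{\ell\in\No}2\pi\big(\int_{-1}^{1}\tfrac{P_\ell(y)}{p^2+q^2-2pq\,y}\,dy\big)\sum_{m=-\ell}^{\ell}Y^m_\ell(\vers{\omega}_1)\,\conjugate*{Y^m_\ell(\vers{\omega}_2)}$; plugging this into the radial--angular integral and invoking orthonormality of the $Y^m_\ell$ once more kills every cross term between distinct $(\ell_1,m_1)$ and $(\ell_2,m_2)$, and the factor $2\pi$ cancels against the prefactor $\pm\tfrac{1}{2\pi^2}$, producing \eqref{regPartialWavesQF} and \eqref{offPartialWavesQF}.

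The only genuinely delicate point is the legitimacy of these interchanges of sums and integrals. For $F_{\mathrm{reg};\,\ell}$ every integrand is non-negative, so Tonelli's theorem applies without further comment. For $F^{\,\zeta}_{\mathrm{off};\,\ell}$, which is not sign-definite, I would first establish absolute convergence of the full six-dimensional integral: using $p^2+q^2+\tfrac{2}{M+1}pq\,y\geq\mu(p^2+q^2)$ (as in the proof of Proposition~\ref{boundednessPhiH-half}) the kernel is dominated by $\mu^{-1}(p^2+q^2)^{-1}$, and boundedness of the operator $Q$ from \cite[lemma 2.1]{FT} together with $\FT{\varphi}\in L^2(\R^3,\sqrt{p^2+1}\,d\vect{p})$ gives absolute integrability; Fubini then legitimizes all the above manipulations, including the termwise angular integration of the Legendre series. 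I expect this bookkeeping, rather than any conceptual difficulty, to be the main obstacle.
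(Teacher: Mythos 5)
Your proposal follows essentially the same route as the paper: expand $\FT{\varphi}$ in spherical harmonics, expand the angular kernels in Legendre polynomials, use the addition formula \eqref{additionFormulaSH}, and let orthonormality of the $Y^m_\ell$ collapse the double sum; your extra care about Fubini/Tonelli is a justification the paper delegates to the cited reference. One small imprecision: for $F_{\mathrm{reg};\,\ell}$ the integrand is not non-negative (the product $\conjugate*{\FT{\varphi}_{\ell_1,m_1}}\FT{\varphi}_{\ell_2,m_2}$ is complex), so you should apply Tonelli to absolute values and invoke the Hardy--Rellich bound from proposition~\ref{boundednessPhiH-half}, exactly as you already do for the off-diagonal term.
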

\n In the next lemma we characterize the sign of $F^{\,\zeta}_{\mathrm{off};\,\ell}$ and $F_{\mathrm{reg};\,\ell}\+$.
\begin{lemma}\label{FsignLemma}
Let $\psi \in\Lp{2}(\Rplus,\,p^2\sqrt{p^2+1}\,dp)$.
Then, for all $\ell\in\No$ one has
$F_{\mathrm{reg};\,\ell}\geq 0$ and
\begin{equation*}
    \begin{cases}
       F^{\+0}_{\mathrm{off};\,\ell}[\psi]\geq F^{\,\zeta}_{\mathrm{off};\,\ell}[\psi]\geq 0,\quad & \text{if }\ell \text{ is odd,}\\
       F^{\+0}_{\mathrm{off};\,\ell}[\psi]\leq F^{\,\zeta}_{\mathrm{off};\,\ell}[\psi]\leq 0,\quad & \text{if }\ell \text{ is even.}
    \end{cases}
\end{equation*}
\begin{proof}
The result concerning $F^{\,\zeta}_{\mathrm{off};\,\ell}$ has been proved in~\cite[lemma 3.3]{CDFMT} and the same procedure can be adapted to obtain $F_{\mathrm{reg};\,\ell}\geq 0$.

\end{proof}
\end{lemma}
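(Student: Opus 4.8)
The plan is to reduce both assertions to a single elementary property of Legendre polynomials: writing $y^n$ in the Legendre basis, all connection coefficients are nonnegative, i.e.
\begin{equation*}
    c^{(\ell)}_n \coloneqq \int_{-1}^{1} P_\ell(y)\,y^n\,dy \;\geq\; 0 \qquad (n\in\No),
\end{equation*}
with $c^{(\ell)}_n=0$ unless $n\geq\ell$ and $n-\ell$ is even. First I would use the Gaussian/Laplace representation $\tfrac1X=\int_0^\infty e^{-tX}\,dt$ for the denominators appearing in~\eqref{regPartialWavesQF} and~\eqref{offPartialWavesQF}; both are strictly positive for $y\in[-1,1]$ and a.e.\ $(p,q)$, since $p^2+q^2-2pqy\geq(p-q)^2$ and $p^2+q^2+\zeta+\tfrac{2}{M+1}pqy\geq(p-q)^2+\zeta$. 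Expanding the exponential of the term linear in $y$ and integrating term by term reduces the angular integral to $\sum \frac{(\pm1)^n}{n!}(\text{positive})^n\,c^{(\ell)}_n$, the sum running over $n\geq\ell$ with $n-\ell$ even, after which everything is manifest.

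Carrying this out for $F_{\mathrm{reg};\,\ell}$, where the cross term enters with a $+$ sign, Tonelli's theorem (all summands being nonnegative) gives
\begin{equation*}
    F_{\mathrm{reg};\,\ell}[\psi] \;=\; \frac{(N-1)\gamma}{\pi}\int_0^\infty \sum_{\substack{n\geq\ell\\ n-\ell\ \mathrm{even}}}\frac{(2t)^n\,c^{(\ell)}_n}{n!}\,\Big|\int_0^\infty p^{\,n+2}\,\psi(p)\,e^{-tp^2}\,dp\Big|^2 dt \;\geq\; 0 ,
\end{equation*}
which is $F_{\mathrm{reg};\,\ell}\geq0$. (Alternatively, $F_{\mathrm{reg};\,\ell}\geq0$ follows at once from $\Theta_{\mathrm{reg}}\geq0$, established in Proposition~\ref{boundednessPhiH-half}, by evaluating the partial wave decomposition~\eqref{SphericalHarmonicsDec} on $\FT{\varphi}(\vect{p})=\psi(p)\,Y^m_\ell(\vers{\omega})$, for which $\Theta_{\mathrm{reg}}[\varphi]=F_{\mathrm{reg};\,\ell}[\psi]$.) For $F^{\,\zeta}_{\mathrm{off};\,\ell}$ the linear term enters with a $-$ sign, so each surviving summand carries $(-1)^n=(-1)^\ell$; pulling that factor out, the same computation produces
\begin{equation*}
    F^{\,\zeta}_{\mathrm{off};\,\ell}[\psi] \;=\; \frac{(-1)^{\ell+1}(N-1)}{\pi}\int_0^\infty e^{-\zeta t}\!\!\sum_{\substack{n\geq\ell\\ n-\ell\ \mathrm{even}}}\frac{c^{(\ell)}_n}{n!}\Big(\tfrac{2t}{M+1}\Big)^{\!n}\Big|\int_0^\infty p^{\,n+2}\,\psi(p)\,e^{-tp^2}\,dp\Big|^2 dt ,
\end{equation*}
an expression of the form $(-1)^{\ell+1}\times(\text{nonnegative})$, which yields $F^{\,\zeta}_{\mathrm{off};\,\ell}\geq0$ for odd $\ell$ and $\leq0$ for even $\ell$. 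Finally, since the whole $\zeta$-dependence is carried by the factor $e^{-\zeta t}\in(0,1]$, subtracting the $\zeta=0$ identity gives $F^{\,0}_{\mathrm{off};\,\ell}[\psi]-F^{\,\zeta}_{\mathrm{off};\,\ell}[\psi]$ equal to $(-1)^{\ell+1}\tfrac{N-1}{\pi}$ times $\int_0^\infty(1-e^{-\zeta t})(\cdots)\,dt$ with the same nonnegative integrand; as $1-e^{-\zeta t}\geq0$ this has sign $(-1)^{\ell+1}$, i.e.\ $F^{\,0}_{\mathrm{off};\,\ell}\geq F^{\,\zeta}_{\mathrm{off};\,\ell}$ for odd $\ell$ and the reverse for even $\ell$, which is the claimed monotonicity in $\zeta$.

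I do not anticipate a real obstacle. The interchanges of sum and integral are between nonnegative quantities (Tonelli), and finiteness of $\int_0^\infty p^{\,n+2}|\psi(p)|e^{-tp^2}\,dp$ for $t>0$ is immediate from $\psi\in\Lp{2}[\Rplus,\,p^2\sqrt{p^2+1}\,dp]$ by Cauchy--Schwarz; the only nonelementary input is the classical nonnegativity of $c^{(\ell)}_n$. The one point needing care is keeping the two independent signs straight --- the parity factor $(-1)^\ell$ coming from the cross term, and the sign of $1-e^{-\zeta t}$ --- since it is precisely their interplay that produces the odd/even dichotomy and fixes the direction of the $\zeta$-monotonicity. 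An alternative in the spirit of~\cite[lemma~3.3]{CDFMT} would instead invoke $\int_{-1}^1\frac{P_\ell(y)}{z-y}\,dy=2Q_\ell(z)$, with $Q_\ell$ the Legendre function of the second kind, together with its positivity and strict monotonicity on $(1,\infty)$ and the bound $\tfrac{(M+1)(p^2+q^2+\zeta)}{2pq}>1$; the Laplace route above has the merit of being manifestly uniform in $\zeta\geq0$ and of disposing of the regularising term by the very same calculation.
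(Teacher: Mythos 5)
Your argument is correct, but it is genuinely different in character from what the paper does: the paper disposes of this lemma by citing~\cite[lemma 3.3]{CDFMT} for the off-diagonal part and asserting that the same procedure adapts to $F_{\mathrm{reg};\,\ell}$, whereas you give a self-contained proof via the Laplace representation $\tfrac1X=\int_0^\infty e^{-tX}dt$, term-by-term expansion of the exponential in the $y$-linear part, and the classical nonnegativity of $\int_{-1}^1 P_\ell(y)\,y^n dy$ (supported on $n\geq\ell$ with $n-\ell$ even, which is exactly what produces the parity factor $(-1)^\ell$ and, through $e^{-\zeta t}$, the $\zeta$-monotonicity in one stroke). The route you sketch as an ``alternative'' --- $\int_{-1}^1\frac{P_\ell(y)}{z-y}dy=2Q_\ell(z)$ with positivity/monotonicity of the Legendre function of the second kind on $(1,\infty)$ --- is in fact the one closer in spirit to the cited lemma; your Laplace route buys a single computation that handles both terms, all $\zeta\geq0$, and the comparison $F^{\,0}_{\mathrm{off};\,\ell}$ versus $F^{\,\zeta}_{\mathrm{off};\,\ell}$ simultaneously, while the citation route is shorter on paper but leaves the regularizing term to an unwritten adaptation. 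One point to tighten: the interchange of $\sum_n$ with the $dp\,dq\,dy\,dt$ integrations is not literally Tonelli, since $\conjugate*{\psi(p)}\+\psi(q)\+P_\ell(y)$ is not of one sign; you need Fubini, i.e.\ absolute convergence, which does hold here (bound $\abs{P_\ell}\leq1$, integrate out $y$ to get the kernel $\tfrac{1}{2pq}\ln\!\big(\tfrac{p+q}{\abs{p-q}}\big)^{2}$, and use its boundedness on the weighted space exactly as in the upper bounds of proposition~\ref{boundednessPhiH-half} / \cite[lemma 2.1]{FT}), but it should be said. Your parenthetical shortcut for $F_{\mathrm{reg};\,\ell}\geq0$ --- evaluate $\Theta_{\mathrm{reg}}\geq0$ (Hardy--Rellich, i.e.\ identity~\eqref{distributionalFourier}) on $\FT{\varphi}=\psi\+Y_\ell^m$ and invoke lemma~\ref{partialWavesQF} --- is clean and non-circular, and is arguably the quickest rigorous way to get that half of the statement.
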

\n Notice that, thanks to lemma~\ref{FsignLemma}, for the sake of a lower bound, we can neglect $F^{\,\zeta}_{\mathrm{off};\,\ell}$ with $\ell$ odd and focus on $F^{\+0}_{\mathrm{off};\,\ell}$ that represents a lower estimates for $F^{\,\zeta}_{\mathrm{off};\,\ell}$ in case $\ell$ is even.
We conclude this section presenting the diagonalization of the quadratic form $F^{\+0}_\ell$.
\begin{lemma}\label{offregDiagonalization}
Given $\psi\in\Lp{2}(\Rplus,\,k^2\sqrt{k^2+1}\,dk)$, let $g_\psi\in\Lp{2}(\R)$ be defined by
\begin{equation}
    g_\psi(p)\vcentcolon=\frac{1}{\!\-\sqrt{2\pi\,}\+}\!\- \integrate[\R]{e^{-i p t}e^{2t}\psi(e^t); dt}.
\end{equation}
Then, considering the quantities computed in lemma~\ref{partialWavesQF}, one has
\begin{subequations}
\begin{align}
    F^{\+0}_{\mathrm{diag}}[\psi]&=\sqrt{\frac{\mu}{\eta}}
    \integrate[\R]{\abs{g_\psi(p)}^2; dp},\label{diagDiagonalizedQF}\\
    F^{\+0}_{\mathrm{off};\,\ell}[\psi]&=\frac{N\! -\! 1}{2}\!\- \integrate[\R]{\abs{g_\psi(p)}^2\+ S_{\mathrm{off};\,\ell}(p); dp},\label{offDiagonalizedQF}\\
    F_{\mathrm{reg};\,\ell}[\psi]&=\frac{N\! -\! 1}{2}\!\- \integrate[\R]{\abs{g_\psi(p)}^2\+ S_{\mathrm{reg};\,\ell}(p); dp},\label{regDiagonalizedQF}
\end{align}
\end{subequations}
where
\begin{subequations}
\begin{equation}\label{offS}
    S_{\mathrm{off};\,\ell}(p)=\begin{dcases}
    -\!\-\integrate[-1;1]{P_\ell(y)\;\frac{\cosh\!\left(p\arcsin\frac{y}{M+1}\right)}{\!\-\sqrt{1-\frac{y^2}{(M+1)^2\mspace{-9mu}}\,} \,\cosh\!\left(\frac{\pi}{2} p\right)};\-dy}\+,\quad &\text{if }\ell \text{ is even},\\
    \integrate[-1;1]{P_\ell(y)\;\frac{\sinh\!\left(p\arcsin\frac{y}{M+1}\right)}{\!\-\sqrt{1-\frac{y^2}{(M+1)^2\mspace{-9mu}}\,} \,\sinh\!\left(\frac{\pi}{2} p\right)};\-dy}\+,\quad &\text{if }\ell \text{ is odd}.
    \end{dcases}
\end{equation}
\begin{equation}\label{regS}
    S_{\mathrm{reg};\,\ell}(p)=\begin{dcases}
    \gamma\!\integrate[-1;1]{P_\ell(y)\;\frac{\cosh\!\left(p\arcsin{y}\right)}{\!\-\sqrt{1-y^2} \,\cosh\!\left(\frac{\pi}{2} p\right)};\-dy}\+,\quad &\text{if }\ell \text{ is even},\\
    \gamma\!\integrate[-1;1]{P_\ell(y)\;\frac{\sinh\!\left(p\arcsin{y}\right)}{\!\-\sqrt{1-y^2} \,\sinh\!\left(\frac{\pi}{2} p\right)};\-dy}\+,\quad &\text{if }\ell \text{ is odd}.
    \end{dcases}
\end{equation}
\end{subequations}
Moreover,
\begin{subequations}\label{SregoffSigns}
\begin{equation}\label{SoffSign}
    \begin{cases}
        S_{\mathrm{off};\,\ell}(p)\leq S_{\mathrm{off};\,\ell+2}(p)\leq 0,\quad \text{if }\ell\text{ is even};\\
        S_{\mathrm{off};\,\ell}(p)\geq S_{\mathrm{off};\,\ell+2}(p)\geq 0,\quad \text{if }\ell\text{ is odd},
    \end{cases}
\end{equation}
while 
\begin{equation}\label{SregSign}
    S_{\mathrm{reg};\,\ell}(p)\geq S_{\mathrm{reg};\,\ell+2}(p)\geq 0,\quad\forall \ell\in\N_0\+ .
\end{equation}
\end{subequations}
\begin{proof}
The results about the diagonal and off-diagonal terms are proved in~\cite[lemma 3.4]{CDFMT}, whereas the statements regarding the regularizing contribution have been shown in~\cite[lemma 3.4]{BCFT}.

\end{proof}
\end{lemma}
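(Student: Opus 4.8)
The plan is to diagonalise $F^{\,0}_\ell$ by the logarithmic (Mellin) change of variable $k=e^t$: since at $\zeta=0$ the three kernels appearing in Lemma~\ref{partialWavesQF} are scale–covariant, this change of variable turns them into convolution kernels on $\R$ and hence into multiplication operators in the conjugate variable. Concretely, given $\psi\in L^2(\Rplus,k^2\sqrt{k^2+1}\,dk)$ I set $u(t):=e^{2t}\psi(e^t)$; then $\|u\|_{L^2(\R)}^2=\int_0^{+\infty}k^3|\psi(k)|^2\,dk<+\infty$, so $u\in L^2(\R)$, and by the definition of $g_\psi$ one has $g_\psi=\hat u$, the Fourier transform being normalised as $\hat f(p)=(2\pi)^{-1/2}\int_{\R}e^{-ipt}f(t)\,dt$. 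Plancherel's theorem then gives at once $F^{\,0}_{\mathrm{diag}}[\psi]=\sqrt{\mu/\eta}\int_0^{+\infty}k^3|\psi(k)|^2\,dk=\sqrt{\mu/\eta}\,\|g_\psi\|_{L^2(\R)}^2$, which is~\eqref{diagDiagonalizedQF}.

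Next I observe that, by Lemma~\ref{partialWavesQF}, both $F^{\,0}_{\mathrm{off};\,\ell}$ and $F_{\mathrm{reg};\,\ell}$ have the common shape $C\iint_{\Rplus\times\Rplus}p^2q^2\,\overline{\psi(p)}\,\psi(q)\,\mathcal K(p,q)\,dp\,dq$, with $\mathcal K(p,q)=\int_{-1}^1\frac{P_\ell(y)\,dy}{p^2+q^2+2\kappa\,pq\,y}$ and $(C,\kappa)=\big(\tfrac{(N-1)\gamma}{\pi},-1\big)$ for the regularising term, $(C,\kappa)=\big(-\tfrac{N-1}{\pi},\tfrac1{M+1}\big)$ for the off–diagonal one. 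Writing $p=e^s$, $q=e^t$ and using $p^2q^2\mathcal K(p,q)=pq\int_{-1}^1\frac{P_\ell(y)\,dy}{p/q+q/p+2\kappa y}$, one converts the form into $C\iint_{\R\times\R}\overline{u(s)}\,u(t)\,L(s-t)\,ds\,dt$, where $L(r):=\tfrac12\int_{-1}^1\frac{P_\ell(y)\,dy}{\cosh r+\kappa y}$. The kernel $L$ is real and even, decays exponentially as $|r|\to+\infty$, and is locally integrable at $r=0$ (it is bounded when $\kappa=\tfrac1{M+1}$, and has only a logarithmic singularity when $\kappa=-1$), hence $L\in L^1(\R)$; then $L*u\in L^2(\R)$, the Fourier transform of $L*u$ is $\sqrt{2\pi}\,\hat L\,\hat u$, and a second application of Plancherel gives $F[\psi]=C\sqrt{2\pi}\int_{\R}|g_\psi(p)|^2\,\hat L(p)\,dp$.

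It remains to evaluate $\hat L$. By Fubini (justified by $L\in L^1$) and the classical identity $\int_{\R}\frac{e^{-ipr}\,dr}{\cosh r+c}=\frac{2\pi}{\sqrt{1-c^2}}\,\frac{\sinh(p\arccos c)}{\sinh(\pi p)}$, valid for $|c|<1$ and proved by residues, applied with $c=\kappa y\in(-1,1)$ for a.e.\ $y$ and for both values of $\kappa$, together with $\arccos c=\tfrac\pi2-\arcsin c$ and $\sinh(\pi p)=2\sinh\tfrac{\pi p}2\cosh\tfrac{\pi p}2$, one obtains
\[
\int_{\R}\frac{e^{-ipr}\,dr}{\cosh r+\kappa y}=\frac{\pi}{\sqrt{1-\kappa^2y^2}}\left(\frac{\cosh\!\big(p\arcsin(\kappa y)\big)}{\cosh\tfrac{\pi p}{2}}-\frac{\sinh\!\big(p\arcsin(\kappa y)\big)}{\sinh\tfrac{\pi p}{2}}\right).
\]
Multiplying by $P_\ell(y)$ and integrating over $[-1,1]$, the parity of $P_\ell$ keeps the $\cosh$ summand when $\ell$ is even and the $\sinh$ summand when $\ell$ is odd; collecting the constants (one finds $C\sqrt{2\pi}\,\hat L(p)=\tfrac{C\pi}{2}\int_{-1}^1(\cdots)\,dy$) and inserting the two pairs $(C,\kappa)$ --- using $\arcsin(-y)=-\arcsin y$ in the regularising case, so that the signs are absorbed by the parities --- reproduces exactly~\eqref{offDiagonalizedQF}--\eqref{offS} and~\eqref{regDiagonalizedQF}--\eqref{regS}.

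For the sign and monotonicity relations~\eqref{SregoffSigns}: the sign of $S_{\mathrm{off};\,\ell}$ and $S_{\mathrm{reg};\,\ell}$ follows from the diagonalisation above together with the sign of the respective quadratic forms established in Lemma~\ref{FsignLemma} and the continuity of $S_{\cdot;\,\ell}$ in $p$; the monotonicity in $\ell$ reduces, via the Legendre identity $P_\ell(y)-P_{\ell+2}(y)=(2\ell+3)\int_y^1P_{\ell+1}(s)\,ds$ and Fubini, to the sign of $\int_{-1}^1P_{\ell+1}(t)\big(\int_{-1}^t\omega(y)\,dy\big)\,dt$ for the relevant non-negative weights $\omega$, which is precisely what is shown in~\cite[lemma~3.4]{CDFMT} for the off–diagonal term and in~\cite[lemma~3.4]{BCFT} for the regularising one; I would simply quote those results. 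The main obstacle is the explicit evaluation of $\hat L$ --- the residue computation of $\int_{\R}\frac{e^{-ipr}\,dr}{\cosh r+c}$, the careful bookkeeping of the numerical constants, and the rewriting in terms of $\arcsin$ and of $\cosh\tfrac{\pi p}2,\sinh\tfrac{\pi p}2$ --- together with the check that $L\in L^1(\R)$ (in particular the nature of its singularity at the origin in the regularising case), which legitimises the convolution/Plancherel step. The sign and monotonicity statements, by contrast, are elementary facts about Legendre polynomials already available in the literature.
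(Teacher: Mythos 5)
Your proposal is correct, but it is genuinely more self-contained than the paper's own proof, which consists entirely of citations: the identities \eqref{diagDiagonalizedQF}--\eqref{regS} are attributed to \cite[lemma 3.4]{CDFMT} and the regularizing analogues to \cite[lemma 3.4]{BCFT}, with no computation given. You instead carry out the diagonalisation explicitly, and the steps check out: the substitution $k=e^t$ turns the scale-covariant kernels of lemma~\ref{partialWavesQF} at $\zeta=0$ into the convolution kernel $L(r)=\tfrac12\int_{-1}^1 P_\ell(y)\,(\cosh r+\kappa y)^{-1}dy$ with the right pairs $(C,\kappa)$; your verification that $L\in L^1(\R)$ (bounded kernel for $\kappa=\tfrac1{M+1}$, logarithmic singularity for $\kappa=-1$) legitimises the convolution/Plancherel step with the stated normalisation; and the classical integral $\int_{\R}e^{-ipr}(\cosh r+c)^{-1}dr=\tfrac{2\pi}{\sqrt{1-c^2}}\sinh(p\arccos c)/\sinh(\pi p)$, split via $\arccos c=\tfrac\pi2-\arcsin c$ and $\sinh(\pi p)=2\sinh\tfrac{\pi p}2\cosh\tfrac{\pi p}2$, reproduces $S_{\mathrm{off};\,\ell}$ and $S_{\mathrm{reg};\,\ell}$ with the correct constants and parities (I rechecked the bookkeeping, including the sign absorption through $\arcsin(-y)=-\arcsin y$). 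For the relations \eqref{SregoffSigns} you do essentially what the paper does: the pointwise signs follow from lemma~\ref{FsignLemma} plus the diagonalisation (with a density/continuity argument), and the comparison between $\ell$ and $\ell+2$, reduced via $P_\ell(y)-P_{\ell+2}(y)=(2\ell+3)\int_y^1P_{\ell+1}(s)\,ds$, is deferred to the cited lemmata --- which is exactly the extent to which the paper itself relies on the literature. What your route buys is an explicit derivation of the multiplier functions within the present setting (so the reader need not adapt the fermionic computation of \cite{CDFMT} to this kernel); what it costs is the extra analytic care (the $L^1$ check near $y=\pm1$ for $\kappa=-1$ and the Fubini/Tonelli exchange in computing $\widehat L$) that the citation-only proof avoids.
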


\vs\vs

\section{Key estimates}\label{mainEstimates}

\vs

In the following, we obtain some key estimates useful to control $F^{\,\zeta}_{\ell}$.
We first provide the upper bound.
\begin{lemma}\label{SoffregAboveBound}
    For any $\psi\in\Lp{2}(\Rplus,\,k^2\sqrt{k^2+1}\,dk), \zeta\geq 0$ and $\ell\in\No\+ $, one has
    \begin{subequations}
    \begin{equation}
        F_{\mathrm{reg};\,\ell}[\psi]\leq \begin{cases}
            \frac{\pi \gamma}{2}\,\frac{(N-1)(M+1)}{\sqrt{M(M+2)}}\+ F^{\,\zeta}_{\mathrm{diag}}[\psi],\quad&\text{if }\ell\text{ is even},\\
            \frac{2 \gamma}{\pi}\,\frac{(N-1)(M+1)}{\sqrt{M(M+2)}}\+ F^{\,\zeta}_{\mathrm{diag}}[\psi],\quad&\text{if }\ell\text{ is odd},
        \end{cases}
    \end{equation}
    \begin{equation}
        F^{\+0}_{\mathrm{off};\,\ell}[\psi]\leq\begin{cases}
            0,\quad&\text{if }\ell\text{ is even},\\
            \frac{2(N-1)(M+1)^2\!\-}{\pi}\- \left[\+\frac{1}{\!\-\sqrt{M(M+2)}\,}-\arcsin{\frac{1}{1+M}}\right]\!F^{\,\zeta}_{\mathrm{diag}}[\psi],\quad&\text{if }\ell\text{ is odd}.
        \end{cases} 
    \end{equation}
    \end{subequations}
    \begin{proof}
        Let us first consider the case $\ell$ even.
        According to~\eqref{SoffSign} we have $S_{\mathrm{off};\,\ell}\leq 0$ while, from~\eqref{regS} and~\eqref{SregSign}, we know that
        \begin{equation*}
            S_{\mathrm{reg};\,\ell}(p)\leq S_{\mathrm{reg};\,0}(p)=2\gamma\,\frac{\tanh\!\left(\frac{\pi}{2}p\right)}{p}\leq \pi\gamma.
        \end{equation*}
        Hence, it is straightforward to see that
        \begin{equation*}
            F_{\mathrm{reg};\,\ell}[\psi]\leq \tfrac{N-1}{2}\sqrt{\tfrac{\eta}{\mu}}\,S_{\mathrm{reg};\,0}(0)\+  F^{\+0}_{\mathrm{diag}}[\psi]\leq \tfrac{\pi\gamma}{2}\,\tfrac{(N-1)(M+1)}{\sqrt{M(M+2)\,}}\+  F^{\,\zeta}_{\mathrm{diag}}[\psi].
        \end{equation*}
        Next, we consider $\ell$ odd.
        In light of~\eqref{SregoffSigns}, both $S_{\mathrm{reg};\,\ell}(p)$ and $S_{\mathrm{off};\,\ell}(p)$ are maximised at $\ell=1$, uniformly in $p\in\R$.
        Furthermore, thanks to the parity of the integrand, we get
        \begin{align*}
            S_{\mathrm{off};\,1}(p)=2\!\- \integrate[0;1]{\frac{y\+ \sinh\!\left(p\arcsin\frac{y}{1+M}\right)}{\sqrt{1-\frac{y^2}{(1+M)^2\mspace{-9mu}}\,}\,\sinh\!\left(\frac{\pi}{2}p\right)};\-dy}=\frac{2(M\-+\-1)^2\!\-}{\sinh\!\left(\frac{\pi}{2}p\right)}\! \integrate[0;\arcsin{\frac{1}{M+1}}]{\sin{u}\,\sinh(p\+u);\mspace{-57mu}du}\+,\\
            S_{\mathrm{reg};\,1}(p)=2\gamma\!\- \integrate[0;1]{\frac{y\+ \sinh\!\left(p\arcsin{y}\right)}{\sqrt{1-y^2\,}\sinh\!\left(\frac{\pi}{2}p\right)};\-dy}=\frac{2\gamma}{\sinh\!\left(\frac{\pi}{2}p\right)}\!\integrate[0;\frac{\pi}{2}]{\sin{u}\,\sinh(p\+u);\!du}\+.
        \end{align*}
        In both cases, we are dealing with decreasing functions in $p>0$.
        Indeed, one has that the function $p\longmapsto\frac{\sinh(p\+u)}{\sinh\left(\-\frac{\pi}{2}p\-\right)}$ is non-increasing, uniformly in $u\in[0,\frac{\pi}{2}]$.
        Thus, performing the derivatives $\frac{\mathrm{d}}{\mathrm{d}p}S_{\mathrm{off};\,1}(p)$ and $\frac{\mathrm{d}}{\mathrm{d}p}S_{\mathrm{reg};\,1}(p)$, one gets negative quantities for any $p>0$, since the integrands are negative for almost every $u\in[0,\frac{\pi}{2}]$.
        
        \n Hence, since $S_{\mathrm{off};\,1}(p)$ and $S_{\mathrm{reg};\,1}(p)$ are even and decreasing in $p>0$, the maximum is attained at the point $p=0\+$.
        Therefore,
        \begin{align*}
            S_{\mathrm{off};\,\ell}(p)\-&\leq\- S_{\mathrm{off};\,1}(0)\-=\frac{4(M\!+\!1)^2\mspace{-6mu}}{\pi}\!\integrate[0;\arcsin{\frac{1}{1+M}}]{u\,\sin(u);\mspace{-60mu}du}\-=\frac{4(M\!+\!1)}{\pi}\-\left[1\--\-\sqrt{M(M\!+\- 2)}\+\arcsin\frac{1}{1\!+\!M}\right]\!,\\[-5pt]
            S_{\mathrm{reg};\,\ell}(p)\-&\leq\- S_{\mathrm{reg};\,1}(0)\-=\frac{4\gamma}{\pi}\- \integrate[0;\frac{\pi}{2}]{u\,\sin(u);du}\-=\frac{4\gamma}{\pi}.
        \end{align*}
        In conclusion, for $\ell$ odd we have
        \begin{align*}
            F_{\mathrm{reg};\,\ell}[\psi]&\leq \tfrac{N-1}{2}\sqrt{\tfrac{\eta}{\mu}}\,S_{\mathrm{reg};\,1}(0)\+  F^{\+0}_{\mathrm{diag}}[\psi]\leq \tfrac{2\gamma}{\pi}\,\tfrac{(N-1)(M+1)}{\sqrt{M(M+2)\,}}\+  F^{\,\zeta}_{\mathrm{diag}}[\psi];\\
            F_{\mathrm{off};\,\ell}[\psi]&\leq \tfrac{N-1}{2}\sqrt{\tfrac{\eta}{\mu}}\,S_{\mathrm{off},\,1}(0)\+  F^{\+0}_{\mathrm{diag}}[\psi]\leq \tfrac{2(N-1)(M+1)^2\!\-}{\pi}\- \left[\+\tfrac{1}{\!\-\sqrt{M(M+2)}\,}-\arcsin\tfrac{1}{1+M}\right]\!F^{\,\zeta}_{\mathrm{diag}}[\psi].
        \end{align*}
    \end{proof}
\end{lemma}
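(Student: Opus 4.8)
The plan is to pass to the diagonalized representation of Lemma~\ref{offregDiagonalization} and reduce each of the four inequalities to a pointwise upper bound on the multipliers $S_{\mathrm{off};\,\ell}(p)$ and $S_{\mathrm{reg};\,\ell}(p)$. First I would note that, by~\eqref{diagPartialWavesQF}, $F^{\,\zeta}_{\mathrm{diag}}[\psi]\ge F^{\+0}_{\mathrm{diag}}[\psi]$ for every $\zeta\ge 0$, since $\sqrt{\tfrac{\mu}{\eta}k^2+\zeta}\ge\sqrt{\tfrac{\mu}{\eta}}\,k$, and that by~\eqref{diagDiagonalizedQF} one has $F^{\+0}_{\mathrm{diag}}[\psi]=\sqrt{\mu/\eta}\int_{\R}\abs{g_\psi(p)}^2\,dp$. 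Hence, in view of~\eqref{offDiagonalizedQF}--\eqref{regDiagonalizedQF}, it suffices to bound $\tfrac{N-1}{2}S_{\mathrm{reg};\,\ell}(p)$ and $\tfrac{N-1}{2}S_{\mathrm{off};\,\ell}(p)$ by the relevant constant times $\sqrt{\mu/\eta}$, uniformly in $p\in\R$; here one uses $\sqrt{\eta/\mu}=(M+1)/\sqrt{M(M+2)}$.

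For $\ell$ even the off-diagonal bound is immediate: by~\eqref{SoffSign} one has $S_{\mathrm{off};\,\ell}(p)\le 0$, so $F^{\+0}_{\mathrm{off};\,\ell}[\psi]\le 0$. For the regularizing term with $\ell$ even, I would use the monotonicity~\eqref{SregSign} to reduce to $\ell=0$; the substitution $y=\sin u$ in~\eqref{regS} gives $S_{\mathrm{reg};\,0}(p)=2\gamma\,\tanh(\pi p/2)/p$, which, because $x\mapsto\tanh x/x$ is decreasing, attains its supremum $\pi\gamma$ at $p=0$. Inserting this into~\eqref{regDiagonalizedQF} yields the stated even-$\ell$ inequality.

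The remaining case $\ell$ odd is where the work lies. Using~\eqref{SregoffSigns} I would reduce both $S_{\mathrm{reg};\,\ell}$ and $S_{\mathrm{off};\,\ell}$ to $\ell=1$, and then, exploiting parity of the integrands and the substitutions $y=\sin u$ (respectively $y=(M+1)\sin u$), rewrite $S_{\mathrm{reg};\,1}(p)=\tfrac{2\gamma}{\sinh(\pi p/2)}\int_0^{\pi/2}\sin u\,\sinh(pu)\,du$ and $S_{\mathrm{off};\,1}(p)=\tfrac{2(M+1)^2}{\sinh(\pi p/2)}\int_0^{\arcsin\frac{1}{M+1}}\sin u\,\sinh(pu)\,du$. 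The hard part — and the only genuinely non-routine step — is to show that both of these are even and decreasing on $p>0$, so that their suprema are attained at $p=0$. This I would deduce from the fact that, for each fixed $u\in[0,\tfrac{\pi}{2}]$, the map $p\mapsto\sinh(pu)/\sinh(\pi p/2)$ is non-increasing on $(0,\infty)$; equivalently $x\mapsto x\coth x$ is increasing there (its derivative has numerator $\tfrac{1}{2}\sinh 2x-x>0$), so that differentiating under the integral sign produces a non-positive derivative. Finally, using $\sinh(pu)/\sinh(\pi p/2)\to 2u/\pi$ as $p\to 0$, together with $\int_0^{\pi/2}u\sin u\,du=1$ and $\int_0^{\arcsin\frac{1}{M+1}}u\sin u\,du=\tfrac{1}{M+1}\bigl(1-\sqrt{M(M+2)}\arcsin\tfrac{1}{M+1}\bigr)$ (the latter via $\cos(\arcsin\tfrac{1}{M+1})=\sqrt{M(M+2)}/(M+1)$), one gets $S_{\mathrm{reg};\,1}(0)=4\gamma/\pi$ and $S_{\mathrm{off};\,1}(0)=\tfrac{4(M+1)}{\pi}\bigl(1-\sqrt{M(M+2)}\arcsin\tfrac{1}{M+1}\bigr)$. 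Plugging these values into~\eqref{offDiagonalizedQF}--\eqref{regDiagonalizedQF}, invoking $F^{\+0}_{\mathrm{diag}}[\psi]\le F^{\,\zeta}_{\mathrm{diag}}[\psi]$ and noting that the resulting off-diagonal constant is positive (because $\tan\theta>\theta$ for $\theta=\arcsin\tfrac{1}{M+1}$, i.e. $\tfrac{1}{\sqrt{M(M+2)}}>\arcsin\tfrac{1}{M+1}$), gives exactly the two odd-$\ell$ inequalities.
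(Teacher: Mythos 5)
Your proposal is correct and follows essentially the same route as the paper: diagonalize via lemma~\ref{offregDiagonalization}, use~\eqref{SoffSign}--\eqref{SregSign} to reduce to $\ell=0$ (even case) and $\ell=1$ (odd case), show the $\ell=1$ multipliers are even and decreasing in $p>0$ through the monotonicity of $p\mapsto\sinh(pu)/\sinh(\pi p/2)$, and evaluate at $p=0$. The extra details you supply (the $x\coth x$ argument and the explicit $p\to 0$ limits) are consistent with, and merely make explicit, the computations in the paper's proof.
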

\n Let us introduce some further notation.
For any $a\in\C$ and $n\in\No\+$, let $(a)_n$ be the Pochhammer symbol, also known as rising factorial, given by
\begin{subequations}
\begin{equation}\label{risingFactorial}
    \define{(a)_n;\begin{cases}
        a(a+1)\cdots(a+n-\- 1),\quad&\text{if }n\in\N,\\
        1,\quad&\text{if }n=0.
    \end{cases}}
\end{equation}
It is easy to see that for any $n\in\No$
\begin{equation}\label{symbolPocchammer}
    (a)_n=\begin{cases}
        (-1)^n\+ n!\,\binom{\abs{a}}{\+\abs{a}-n\+},\quad&\text{if }\,a\in-\No\+,\\
        \frac{\Gamma(a\++\+n)}{\Gamma(a)},\quad&\text{otherwise.}
    \end{cases}
\end{equation}
\end{subequations}
In particular, notice that if $a\in-\No\+$, then $(a)_n=0$ for all $n>\abs{a}$. 
Next, we recall the definition of the Gauss hypergeometric function
\begin{equation}\label{hypergeometricGauss}
    {}_2 F_1(a,b\+;\+c\+;z)\vcentcolon= \!\sum_{k\+\in\+\No} \frac{(a)_k(b)_k}{(c)_k\!}\+\frac{z^k}{k!}.
\end{equation}
Representation~\eqref{hypergeometricGauss} is well defined for $a,b\in\C$, $c\in\C\smallsetminus\--\No$ and its radius of convergence is $1$.
However, if $a$ or $b$ is a non-positive integer, then the Gauss hypergeometric function reduces to a polynomial in $z$.
In this case, $c$ can also assume non-positive integer values, provided that $\abs{c}$ is greater than or equal to the degree of the polynomial. 
We also remind the Gauss' summation theorem
\begin{equation}\label{summationGauss}
        {}_2 F_1(a,b\+ ;\+ c\+ ;1)=\frac{\Gamma(c)\+\Gamma(c-a-b)}{\Gamma(c-a)\+\Gamma(c-b)}\+,\qquad \text{if }\, \Re(c-a-b)>0.
    \end{equation}
In the following lemma  we give the explicit computation of the integrals appearing in $ S_{\mathrm{off};\,\ell}$ and $ S_{\mathrm{reg};\,\ell}$ for $\ell$ even (see \eqref{offS}, \eqref{regS}).
\begin{lemma}\label{hypergeometricLegendre} For any $x\in[0,1], p\in\R$ and $\ell$ even we have \begin{equation*}
    \integrate[-1;1]{P_\ell(y)\+ \frac{\cosh\!\left[p\arcsin(xy)\right]}{\sqrt{1-x^2y^2}};\-dy}\- =\frac{2^{\ell+1}\ell\+ !\,x^\ell}{(2\ell\- +\!1)!}\prod_{k=1}^{\frac{\ell}{2}}\!\left[p^2\mspace{-5mu}+\!(2k\- -\mspace{-2.5mu}1)^2\right]\-  {}_2 F_1\!\- \left(\tfrac{\ell+1+ip}{2},\tfrac{\ell+1-ip}{2};\ell\!+\- \tfrac{3}{2};x^2\right)\!.
\end{equation*}
\begin{proof}
    First let $z\in(-1,1)$ and take into account ~\cite[p. 1007, 9.121.32]{GR}, so that
    \begin{equation}\label{hardTaylorExpansion}
    \begin{split}
        \frac{\cosh(p\arcsin{z})}{\sqrt{1-z^2}}&={}_2 F_1\!\left(\tfrac{1+ip}{2},\tfrac{1-ip}{2};\tfrac{1}{2}\+ ;z^2\right)\!=\!\- \sum_{k\+\in\+\No} \frac{\left(\frac{1+ip}{2}\right)_{\- k}\!\left(\frac{1-ip}{2}\right)_{\- k}}{\left(\frac{1}{2}\right)_{\- k}}\,\frac{\:z^{2k}\!\- }{k!}\\[-10pt]
        &=\!\- \sum_{k\+\in \+\No}\frac{\:z^{2k}\!\- }{(2k)!\!}\,\prod_{n=1}^k \!\- \left[\+ p^2\!+\- (2n-1)^2\right]\!,
    \end{split}
    \end{equation}
    where the last identity is given by the following simple computations
    \begin{gather}
        \left(\tfrac{1+ip}{2}\right)_{\- k}\!\left(\tfrac{1-ip}{2}\right)_{\- k}\! =\-\left\lvert\tfrac{1+ip}{2}\right\rvert^2\!\left\lvert\tfrac{1+ip+2}{2}\right\rvert^2\!\cdots\+ \left\lvert\tfrac{1+ip+2k-2}{2}\right\rvert^2\!=\frac{1}{\,2^{2k}\!\- }\,\prod_{n=1}^k\!\- \left[\+ p^2\!+\- (2n-1)^2\right]\!,\label{productPocchammer}\\
        \left(\tfrac{1}{2}\right)_{\- k}\- =\frac{1}{\!\- \sqrt{\pi\,}\,}\+ \Gamma\!\left(k+\tfrac{1}{2}\right)=\frac{1}{\,2^{2k}\!\- }\,\frac{(2k)!}{k!}\label{gammaHalfInteger}.
    \end{gather}
    Notice that~\eqref{gammaHalfInteger} is a particular case of the Legendre's duplication formula
    \begin{equation}\label{duplicationLegendre}
        \Gamma(z)\+\Gamma\!\left(z+\tfrac{1}{2}\right)=2^{1-2z}\sqrt{\pi}\+\Gamma(2z),\qquad z\in\C\smallsetminus\- -\tfrac{1}{2}\+\No.
    \end{equation}
    Using the Rodrigues’ formula for $P_\ell$ and integrating by parts $\ell$ times, one gets
    \begin{equation*}
        \integrate[-1;1]{P_\ell(y)\,\frac{\cosh[\+ p\arcsin(xy)]}{\sqrt{1-x^2y^2}};\-dy}=\frac{1}{2^\ell \ell\+ !}\integrate[-1;1]{(1-y^2)^\ell\frac{\partial^{\+\ell}}{\partial y^\ell\!\-} \,\frac{\cosh[\+ p\arcsin(xy)]}{\sqrt{1-x^2y^2}};\-dy}.
    \end{equation*}
    By~\eqref{hardTaylorExpansion}, the function $y\longmapsto \frac{\cosh[\+p\arcsin(xy)]}{\sqrt{1-x^2y^2}}$ is analytic in $(-1,1)$ for all $x\in[0,1]$ and $p\in\R$, thus one can compute the $\ell$-th derivative:
    \begin{equation*}
        \frac{\partial^{\+\ell}}{\partial y^\ell\!\-}\,\frac{\cosh\!\left[\+ p\arcsin(xy)\right]}{\sqrt{1-x^2y^2}}=\sum_{k=\frac{\ell}{2}}^{\pInfty}\frac{\:x^{2k}\!}{(2k)!\!\- }\;a_k(p^2)\,\frac{(2k)!\,y^{2k-\ell}}{(2k-\ell)!}=\sum_{k\+\in\+\No}\frac{\;x^{\ell+2k}\!\- }{(2k)!}\;a_{k+\frac{\ell}{2}}(p^2)\,y^{2k},
    \end{equation*}
    where we have set $a_k(p^2)\vcentcolon=\prod_{n=1}^k [\+p^2\!+\- (2n-1)^2]$ for the sake of notation.
    Using Tonelli's theorem to interchange the integral with the summation, one obtains
    \begin{equation*}
        \integrate[-1;1]{P_\ell(y)\,\frac{\cosh[\+ p\arcsin(xy)]}{\sqrt{1-x^2y^2}};\-dy}=\frac{x^\ell}{2^\ell \ell\+ !}\sum_{k\+\in\+\No}\!\frac{\:x^{2k}\!\- }{(2k)!\!}\;a_{k+\frac{\ell}{2}}(p^2)\!\integrate[-1;1]{(1-y^2)^\ell y^{2k};\-dy}.
    \end{equation*}
    The last integral can be explicitly computed, namely
    \begin{equation}\label{betaIntegralStep}
        \integrate[-1;1]{(1-y^2)^\ell\+  y^{2k};\-dy}=\frac{\Gamma(\ell+1)\+\Gamma\!\left(k+\tfrac{1}{2}\right)\!}{\Gamma\!\left(\ell+k+\tfrac{3}{2}\right)}= \frac{2^{2\ell+2}\,\ell\+ !\,(\ell+k+1)!\,(2k)!}{(2\ell+2k+2)!\,k!},
    \end{equation}
    where in the last equality we have used~\eqref{duplicationLegendre}.
    Therefore,
    \begin{equation*}
        \integrate[-1;1]{P_\ell(y)\,\frac{\cosh[\+ p\arcsin(xy)]}{\sqrt{1-x^2y^2}};\-dy}=2^{\ell+2}x^\ell\sum_{k\+\in\+\No}\frac{(\ell+k+1)!}{(2\ell+2k+2)!}\,a_{k+\frac{\ell}{2}}(p^2)\,\frac{\:x^{2k}\!}{k!}.
    \end{equation*}
    Using~\eqref{gammaHalfInteger} and~\eqref{productPocchammer}, the last expression can be rewritten in terms of the Pochhammer symbols
    \begin{align*}
    \integrate[-1;1]{P_\ell(y)\,\frac{\cosh[\+ p\arcsin(xy)]}{\sqrt{1-x^2y^2}};\-dy}&=\frac{x^\ell}{2^\ell}\sum_{k\+\in\+\No} \frac{\:x^{2k}\!}{k!}\,\frac{a_{k+\frac{\ell}{2}}(p^2)}{2^{2k}\-\left(\frac{1}{2}\right)_{\-\ell+k+1}}\\
    &=x^\ell\-\sum_{k\+\in\+\No} \frac{\:x^{2k}\!}{k!}\,\frac{\left(\frac{1+ip}{2}\right)_{\- k+\frac{\ell}{2}}\!\left(\frac{1-ip}{2}\right)_{\- k+\frac{\ell}{2}}}{\left(\frac{1}{2}\right)_{\-\ell+k+1}}.
    \end{align*}
    By definition~\eqref{risingFactorial}, one has
    \begin{equation}
        (\cdot)_{n+m}=(\cdot)_m\+ (\,\cdot+m)_n\+,\qquad \forall n,m\in\No\+,
    \end{equation}
    hence
    \begin{equation*}
        \integrate[-1;1]{P_\ell(y)\,\frac{\cosh[\+ p\arcsin(xy)]}{\sqrt{1-x^2y^2}};\-dy}=
        \frac{x^\ell\!\left(\frac{1+ip}{2}\right)_{\- \frac{\ell}{2}}\!\left(\frac{1-ip}{2}\right)_{\- \frac{\ell}{2}}}{\left(\frac{1}{2}\right)_{\-\ell+1}}\, \sum_{k\+\in\+\No}\frac{\:x^{2k}\!}{k!}\,\frac{\left(\frac{\ell+1+ip}{2}\right)_{\- k}\!\left(\frac{\ell+1-ip}{2}\right)_{\- k}}{\left(\ell+\frac{3}{2}\right)_{\-k}}.
    \end{equation*}
    Using again~\eqref{productPocchammer}, \eqref{gammaHalfInteger} and definition~\eqref{hypergeometricGauss} one concludes the proof.
    
    \end{proof}
\end{lemma}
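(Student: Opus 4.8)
The plan is to turn the integral into a manipulation of convergent power series. First I would recall from \cite{GR} the Maclaurin expansion
\[
\frac{\cosh(p\arcsin z)}{\sqrt{1-z^{2}}}
={}_2F_1\!\left(\tfrac{1+ip}{2},\tfrac{1-ip}{2};\tfrac12;z^{2}\right)
=\sum_{k\in\No}\frac{z^{2k}}{(2k)!}\prod_{n=1}^{k}\!\left[p^{2}+(2n-1)^{2}\right],\qquad |z|<1,
\]
the two right-hand sides being equal by \eqref{risingFactorial}, \eqref{symbolPocchammer} and the Legendre duplication formula applied to $(\tfrac12)_{k}$; the series converges locally uniformly on $(-1,1)$. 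Writing $a_{k}(p^{2}):=\prod_{n=1}^{k}[p^{2}+(2n-1)^{2}]$ and $g(z):=\cosh(p\arcsin z)/\sqrt{1-z^{2}}$, for $x\in[0,1)$ the function $y\mapsto g(xy)$ is analytic on a neighbourhood of $[-1,1]$.

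Next I would insert Rodrigues' formula $P_{\ell}(y)=\frac{1}{2^{\ell}\ell!}\frac{\mathrm d^{\ell}}{\mathrm dy^{\ell}}(y^{2}-1)^{\ell}$ into the integral and integrate by parts $\ell$ times. Since $(1-y^{2})^{\ell}$ and its first $\ell-1$ derivatives vanish at $y=\pm1$, every boundary term drops out, and since $\ell$ is even we have $(y^{2}-1)^{\ell}=(1-y^{2})^{\ell}$ and no sign from the integrations by parts, so the integral equals $\frac{1}{2^{\ell}\ell!}\int_{-1}^{1}(1-y^{2})^{\ell}\,\partial_{y}^{\ell}\,g(xy)\,\mathrm dy$. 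Differentiating the series for $g$ term by term — only the monomials $z^{2k}$ with $k\ge\ell/2$ survive the $\ell$-th derivative — and reindexing $k\mapsto k+\ell/2$ gives $\partial_{y}^{\ell}g(xy)=\sum_{k\in\No}\frac{x^{\ell+2k}}{(2k)!}\,a_{k+\ell/2}(p^{2})\,y^{2k}$.

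I would then interchange summation and integration by Tonelli's theorem — legitimate because every term is nonnegative, $a_{k+\ell/2}(p^{2})>0$ and $(1-y^{2})^{\ell}y^{2k}\ge0$ on $[-1,1]$ — and evaluate the elementary Beta integral $\int_{-1}^{1}(1-y^{2})^{\ell}y^{2k}\,\mathrm dy=\frac{\Gamma(\ell+1)\Gamma(k+\frac12)}{\Gamma(\ell+k+\frac32)}$, rewritten via the duplication formula. Converting the resulting $k$-series to Pochhammer symbols through $(\tfrac{1+ip}{2})_{k}(\tfrac{1-ip}{2})_{k}=4^{-k}a_{k}(p^{2})$ and $(\tfrac12)_{k}=4^{-k}(2k)!/k!$, and applying the shift identity $(a)_{n+m}=(a)_{m}(a+m)_{n}$ with $m=\ell/2$ in the numerator and $m=\ell+1$ in the denominator, factors out the $k$-independent constant $x^{\ell}(\tfrac{1+ip}{2})_{\ell/2}(\tfrac{1-ip}{2})_{\ell/2}/(\tfrac12)_{\ell+1}$, which collapses — once more by the duplication formula — to exactly $\frac{2^{\ell+1}\ell!\,x^{\ell}}{(2\ell+1)!}\prod_{k=1}^{\ell/2}[p^{2}+(2k-1)^{2}]$, while the residual series is ${}_2F_1(\tfrac{\ell+1+ip}{2},\tfrac{\ell+1-ip}{2};\ell+\tfrac32;x^{2})$ by \eqref{hypergeometricGauss}. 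This proves the formula for $x\in[0,1)$; the case $x=1$ follows by letting $x\to1^{-}$, the left-hand side converging by dominated convergence (the integrand is bounded, up to constants, by the integrable $\cosh(\tfrac{\pi}{2}|p|)/\sqrt{1-y^{2}}$) and the right-hand side being continuous at $x=1$ because $\Re\!\big(\ell+\tfrac32-(\ell+1)\big)=\tfrac12>0$, so the hypergeometric series converges at argument $1$ (cf.\ \eqref{summationGauss}).

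I expect the main obstacle to be the Pochhammer bookkeeping in the last step: extracting the $k$-independent prefactor and checking that it simplifies to the stated closed form requires careful and repeated use of the Legendre duplication formula and of the shift identity for rising factorials. The termwise differentiation of the series and the passage to the limit $x\to1^{-}$ at the endpoints of the interval are the only other points needing (routine) justification.
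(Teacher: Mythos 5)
Your proposal is correct and follows essentially the same route as the paper's proof: the series expansion from \cite[9.121.32]{GR}, Rodrigues' formula with $\ell$ integrations by parts, termwise differentiation, Tonelli, the Beta integral, and the Pochhammer shift identity with the duplication formula. The only (harmless) difference is that you restrict to $x\in[0,1)$ and recover $x=1$ by a limiting argument, whereas the paper carries out the same computation directly for all $x\in[0,1]$.
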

\begin{note}
    We point out that the integral evaluated in lemma~\ref{hypergeometricLegendre} considerably simplifies in case $x=1$ or $p=0$.
    Indeed, making use of~\eqref{summationGauss} and~\eqref{gammaHalfInteger}, one gets
    \begin{align*}
        \integrate[-1;1]{\!P_\ell(y)\,\frac{\cosh(p\arcsin{y})}{\sqrt{1-y^2}};\-dy}=\frac{2^{\ell+1}\-\sqrt{\pi}\,\ell\+ !\,\Gamma\!\left(\ell+\frac{3}{2}\right)\!}{(2\ell\- +\!1)!\left\lvert\Gamma\!\left(\frac{\ell+2+ip}{2}\right)\- \right\rvert^2\!\-}\+\prod_{n=1}^{\frac{\ell}{2}}\!\left[p^2\mspace{-5mu}+\!(2n\- -\mspace{-2.5mu}1)^2\right]\\
        =\frac{2^{\ell}\-\sqrt{\pi}\,\ell\+ !\,\Gamma\!\left(\ell+\frac{1}{2}\right)\!}{(2\ell)!\left\lvert\Gamma\!\left(\frac{\ell+2+ip}{2}\right)\- \right\rvert^2\!\-}\,\prod_{n=1}^{\frac{\ell}{2}}\!\left[p^2\mspace{-5mu}+\!(2n\- -\mspace{-2.5mu}1)^2\right]=\frac{\pi}{2^\ell\left\lvert\Gamma\!\left(\frac{\ell+2+ip}{2}\right)\- \right\rvert^2\!\-}\,\prod\limits_{n=1}^{\frac{\ell}{2}} \!\left[\+ p^2\!+\- (2n\- -\!1)^2\right]\!.
    \end{align*}
    Now, exploiting the identity
    \begin{equation}\label{gammaComplexPlusInteger}
        \left\lvert\Gamma\!\left(n+\-1\-+ib\right)\- \right\rvert^2\!=\frac{\pi b}{\sinh\!\left(\pi b\right)}\, \prod_{k=1}^n \+ (k^2\!+b^2),\qquad \forall b\in\R,\,n\in\No\+,
    \end{equation}
    one obtains
    \begin{equation}\label{hypergeometricLegendreXOne}
        \integrate[-1;1]{\!P_\ell(y)\,\frac{\cosh(p\arcsin{y})}{\sqrt{1-y^2}};\-dy}=\frac{2\sinh\!\left(\frac{\pi}{2}p\right)\!}{p}\prod_{k=1}^{\frac{\ell}{2}}\frac{p^2\!+\- (2k\- -\!1)^2\!}{p^2+4k^2}.
    \end{equation}
    Let us consider the case $p=0$.
    Taking into account that
    \begin{equation*}
        \prod_{k=1}^{\frac{\ell}{2}}(2k-1)^2=(\ell-1)!!^{\+2}=\frac{\ell\+ !^{\+ 2}}{2^\ell\- \left(\frac{\ell}{2}\right)!^{\+ 2}\!\-}\:,
    \end{equation*}
    where $(\cdot)!!$ denotes the double factorial, i.e.
    \begin{equation}\label{doubleFactorial}
        n!!\vcentcolon=\!\prod_{k=0}^{\left\lceil\! \frac{n}{2}\!\right\rceil-1}\!\-(n-2k)=
        \begin{dcases}
            2^{\frac{n}{2}}\!\left(\tfrac{n}{2}\right)!\+ ,\quad&\text{if }n\text{ is even},\\
            \frac{(n\- +\!1)!}{2^{\frac{n+1}{2}}\!\left(\frac{n+1}{2}\right)!\! }\,,\quad&\text{if }n\text{ is odd},
        \end{dcases}
    \end{equation}
    one obtains
    \begin{equation}\label{hypergeometricLegendrePZero}
        \integrate[-1;1]{\frac{P_\ell(y)}{\sqrt{1-x^2y^2}\,};\-dy}=\frac{2\+ x^\ell\, \ell\+ !^{\+ 3}}{(2\ell+1)!\left(\frac{\ell}{2}\right)!^{\+ 2}\!}\; {}_{2}F_1\!\left(\tfrac{\ell+1}{2},\tfrac{\ell+1}{2};\ell+\tfrac{3}{2}\+ ;x^2\right)\!.
    \end{equation}
    In the special case $x=1$ and $p=0$, one has
    \begin{equation}\label{hypergeometricLegendrePZeroXOne}
        \integrate[-1;1]{\frac{P_\ell(y)}{\sqrt{1-y^2}\,};\-dy}=\frac{2\sqrt{\pi}\+\ell\+ !^{\+ 3}\,\Gamma\!\left(\ell+\frac{3}{2}\right)\!}{(2\ell+1)!\left(\frac{\ell}{2}\right)!^{\+ 4}\!}=\frac{\sqrt{\pi}\+\ell\+ !^{\+ 3}\,\Gamma\!\left(\ell+\frac{1}{2}\right)\!}{(2\ell)!\left(\frac{\ell}{2}\right)!^{\+ 4}\!}=\frac{\pi\+\ell\+ !^{\+ 2}}{2^{2\ell}\left(\frac{\ell}{2}\right)!^{\+ 4}\!}
    \end{equation}
    where we have used~\eqref{summationGauss} and~\eqref{gammaHalfInteger}.
\end{note}

\n
Now, let $\Xi^{\,\zeta}_{\+\ell,\,s_\ell}$ be a sequence of auxiliary quadratic forms defined on $\Lp{2}[\Rplus,\,p^2dp]$ for any given $\ell\in\No$ and for some parameter $s_\ell\in(0,1)$ as follows
\begin{equation}\label{auxiliaryQF}
 \Xi^{\,\zeta}_{\+\ell,\,s_\ell}\- \vcentcolon=s_\ell F^{\,\zeta}_{\mathrm{diag}}\!+F^{\+0}_{\mathrm{off};\,\ell}\-+F_{\mathrm{reg};\,\ell}\+,\qquad\dom{\Xi^{\,\zeta}_{\+\ell,\,s_\ell}}=\Lp{2}[\Rplus,\,p^2\sqrt{p^2+1}\,dp]\+.
\end{equation}
These quadratic forms will be useful to obtain a lower bound for $F^{\,\zeta}_\ell$.

\n The next lemma is the key technical ingredient for the proof of proposition~\ref{coercivityLemma}.
\begin{lemma}\label{auxiliaryPositivity}
    Let $\psi\in\Lp{2}(\Rplus,\,p^2\sqrt{p^2+1}\,dp)$ and $\gamma_c$ given by~\eqref{criticalGamma}.
    Then, for $\gamma\->\-\gamma_c\,$, there exists $\{s^\ast_\ell\}_{\ell\+\in\+\No}\!\- \subset\!(0,1)$ such that each quadratic form $\Xi^{\,\zeta}_{\+\ell,\,s^\ast_\ell}$ defined by~\eqref{auxiliaryQF}, is non-negative for any $\zeta\geq 0$ and $\ell\in\No\+$.
\begin{proof}
    Taking into account the diagonalization derived in lemma~\ref{offregDiagonalization}, one has
    \begin{align*}
        \Xi^{\,\zeta}_{\+\ell,\,s_\ell}[\psi]\- \geq \!\left(s_\ell F^{\+0}_{\mathrm{diag}}\! +\-F^{\+0}_{\mathrm{off};\,\ell}\- +\-F_{\mathrm{reg};\,\ell} \right)\![\psi]\- =\! &\integrate[\R]{\abs{g_\psi(p)}^2;dp}\!\left[s_\ell\sqrt{\- \tfrac{\mu}{\eta}}\- +\- \tfrac{N-1}{2}\left(S_{\mathrm{off};\,\ell}\- +\- S_{\mathrm{reg};\,\ell}\right)\!(p)\right]\\
        =\vcentcolon \!\- &\integrate[\R]{\abs{g_\psi(p)}^2f^{N}_{\ell,\,s_\ell}(p);dp}.
    \end{align*}
    The lemma is proved if we show that for each order $\ell$, there exists $s_\ell\in(0,1)$ such that the function $f^N_{\ell,\,s_\ell}$ is non-negative uniformly in $N\-\geq\- 2$.
    Notice that this is actually the case for $\ell$ odd, in light of~\eqref{SregoffSigns}, so from now on we focus on the case $\ell$ even.
    
    \n Moreover we have
    \begin{equation*}
        \lim_{p\to\pInfty} f^N_{\ell,\,s_\ell}(p)=s_\ell\sqrt{\tfrac{\mu}{\eta}}>0.
    \end{equation*}
    We notice that  $S_{\mathrm{off};\,\ell}$  and $S_{\mathrm{reg};\,\ell}$, and then $ f^{N}_{\ell,\,s_\ell}$, are written in terms of the Gauss hypergeometric function ${}_{2}F_1$ (see \eqref{offS}, \eqref{regS} and lemma \ref{hypergeometricLegendre}) and therefore the main point is a careful control of such a function. 
    
    \n
    The proof  will be constructed in two steps: first we show that $f^N_{\ell,\,s_\ell}$ evaluated at zero is positive uniformly in $N\-\geq\- 2$ for a proper choice of $\{s_\ell\}_{\ell\+\in\+\No}\-\subset\-(0,1)$, then we prove that $f^N_{\ell,\,s_\ell}$ is bounded from below by a monotonic function $h^N_{\ell,\,s_\ell}$ that shares the same values with $f^N_{\ell,\,s_\ell}$ at zero and infinity.
    Once these statements are proven, we will have $f^N_{\ell,\,s_\ell}\geq h^N_{\ell,\,s_\ell}>0$ as long as $s_\ell$ is such that $f^N_{\ell,\,s_\ell}(0)>0$ for all $\ell\-\in\-\No$ and uniformly in $N\geq 2$.
    
    \vs
    \n\emph{Step 1.}\quad
    We observe that $f^N_{\ell,\,s_\ell}(0)$ is positive if and only if
    \begin{equation}\label{s-ellCondition}
    \begin{split}
        s_\ell&>-\tfrac{N-1}{2}\sqrt{\tfrac{\eta}{\mu}}\left(S_{\mathrm{off};\,\ell}+S_{\mathrm{reg};\,\ell}\right)\!(0)\\[-5pt]
        &=\tfrac{N-1}{2}\sqrt{\tfrac{\eta}{\mu}}\left[\integrate[-1;1]{\frac{P_\ell(y)}{\!\-\sqrt{1-\frac{y^2}{(1+M)^2\mspace{-9mu}}\,}\,};\-dy}-\gamma\!\integrate[-1;1]{\frac{P_\ell(y)}{\!\-\sqrt{1\--y^2}\,};\-dy}\right]\!.
    \end{split}
    \end{equation}
    The requirement $s_\ell\-\in\-(0,1)$ implies a constraint for the parameter $\gamma$, since we need the right hand side of~\eqref{s-ellCondition} to be strictly less than $1$.
    Therefore
    \begin{equation}\label{gamma-ellCriticalDef}
        \gamma>\gamma_M^\ell\vcentcolon=\left[\integrate[-1;1]{\frac{P_\ell(y)}{\!\-\sqrt{1\--y^2}\,};\-dy}\right]^{-1}\!\left[\integrate[-1;1]{\frac{P_\ell(y)}{\!\-\sqrt{1-\frac{y^2}{(1+M)^2\mspace{-9mu}}\,}\,};\-dy}-\frac{2}{N\!-\!1}\sqrt{\frac{\mu}{\eta}}\+\right]\!.
    \end{equation}
    Let us show that
    \begin{equation}\label{gammaDecreasesToShow}
        \gamma_c=\max_{k\+\in\+\No}\{\gamma^{2k}_M\}\-=\gamma_M^0.
    \end{equation}
    Taking into account  equations~\eqref{hypergeometricLegendrePZero} and~\eqref{hypergeometricLegendrePZeroXOne}, condition~\eqref{gamma-ellCriticalDef} reads
    \begin{equation*}
        \gamma>\gamma_M^\ell\-=\gamma^\ell_{M,\+1}\--\+\gamma^\ell_{M,\+2}\+,
    \end{equation*}
    with
    \begin{gather}\label{fractionMassLegendre}
        \define{\gamma_{M,\+1}^\ell;}\frac{2^{2\ell+1}\ell\+ !\- \left(\frac{\ell}{2}\right)!^{\+ 2}}{\pi\,(2\ell+1)!\,(M\-+\-1)^\ell\!\-}\;\,{}_2 F_1\!\left(\tfrac{\ell+1}{2},\tfrac{\ell+1}{2};\ell+\tfrac{3}{2}\+ ;\tfrac{1}{(M+1)^2\mspace{-9mu}}\,\right)\!,\\
        \define{\gamma_{M,\+2}^\ell;\frac{2^{2\ell+1}\!\left(\frac{\ell}{2}\right)!^{\+ 4}\sqrt{M(M\-+\-2)}}{\pi\,\ell\+ !^{\+ 2}(N\!-\!1)(M\-+\-1)}}.
    \end{gather}
    We observe that $\gamma_{M,\+2}^\ell$ is increasing in $\ell$, since
    \begin{equation*}
        \frac{2^{2(\ell+2)}\!\left(\frac{\ell+2}{2}\right)!^{\+ 4}\!\-}{(\ell+2)!^{\+ 2}}=\frac{2^{2\ell+4}\!\left(\frac{\ell}{2}+\mspace{-1mu}1\right)!^{\+ 4}\!\-}{(\ell+2)!^{\+ 2}}=\frac{2^{2\ell}\,2^4\!\left(\frac{\ell}{2}+\mspace{-1mu}1\right)^4\!\left(\frac{\ell}{2}\right)!^{\+ 4}\!\-}{(\ell+2)^2(\ell+1)^2\+ \ell\+ !^{\+ 2}}=\frac{2^{2\ell}\!\left(\ell+2\right)^2\!\left(\frac{\ell}{2}\right)!^{\+ 4}\!\-}{(\ell+1)^2\+ \ell\+ !^{\+ 2}}>\frac{2^{2\ell}\!\left(\frac{\ell}{2}\right)!^{\+ 4}\!\-}{\ell\+ !^{\+ 2}}\+.
    \end{equation*}
    Therefore
    \begin{equation}\label{gammaSecondTermOk}
        \gamma^\ell_M<\gamma^\ell_{M,\+1}\--\gamma^0_{M,\+2}=\gamma^\ell_{M,\+1}\--\frac{2}{\pi\+(N\!-\!1)}\+\frac{\sqrt{M(M\-+\-2)}}{M\-+\-1}.
    \end{equation}
    Let us consider $\gamma^\ell_{M,\+1}$.
Using the Euler's integral representation of the Gauss hypergeometric function
    \begin{equation}
        {}_2 F_1(a,b\+;\+c\+;z)=\frac{\Gamma(c)}{\Gamma(b)\+\Gamma(c-b)}\integrate[0;1]{\frac{t^{b-1}(1-t)^{c-b-1}}{(1-z\+t)^a};dt},\qquad \Re(c)>\Re(b)>0,
    \end{equation}
    one has for any $x\in[0,1]$
    \begin{align*}
        {}_2 F_1\!\left(\tfrac{\ell+1}{2},\tfrac{\ell+1}{2};\ell+\tfrac{3}{2}\+ ;x^2\right)\!=\frac{\Gamma\!\left(\ell+\frac{3}{2}\right)}{\Gamma\!\left(\frac{\ell+1}{2}\right)\!\-\left(\frac{\ell}{2}\right)!}\-\integrate[0;1]{\frac{t^{\frac{\ell-1}{2}}(1-t)^{\frac{\ell}{2}}\!\-}{(1-x^2t)^{\frac{\ell+1}{2}}};dt}
        =\frac{2\,\Gamma\!\left(\ell+\frac{3}{2}\right)}{\Gamma\!\left(\frac{\ell+1}{2}\right)\!\-\left(\frac{\ell}{2}\right)!}\-\integrate[0;1]{\frac{u^\ell(1-u^2)^{\frac{\ell}{2}}}{(1-x^2u^2)^{\frac{\ell+1}{2}}\!\-};du}\\
        =\frac{2^\ell(2\ell+1)\,\Gamma\!\left(\ell+\frac{1}{2}\right)}{\sqrt{\pi}\,\ell\+!}\!\integrate[0;1]{\frac{u^\ell(1-u^2)^{\frac{\ell}{2}}}{(1-x^2u^2)^{\frac{\ell+1}{2}}\!\-};du}=\frac{(2\ell\- +\- 1)!}{2^\ell\,\ell\+!^{\+ 2}}\!\integrate[0;1]{\frac{u^\ell(1-u^2)^{\frac{\ell}{2}}}{(1-x^2u^2)^{\frac{\ell+1}{2}}};du}.
    \end{align*}
    Exploiting the inequality $1-u^2\leq 1-x^2u^2$, we obtain an estimate from above
    \begin{equation}\label{hypergeometricUpperBound}
        {}_2F_1\!\left(\tfrac{\ell+1}{2},\tfrac{\ell+1}{2};\ell+\tfrac{3}{2}\+ ;x^2\right)\!\leq\frac{(2\ell\- +\- 1)!}{2^\ell\,\ell\+ !^{\+ 2}}\!\integrate[0;1]{\frac{u^\ell}{\sqrt{1-x^2u^2}\,};du},
    \end{equation}
    where equality holds if $\ell=0 \lor x=1$.
    From~\eqref{hypergeometricUpperBound} one gets
    \begin{equation}\label{gammasInequality}
        \gamma_{M,\+1}^\ell\leq\frac{\,2^{\ell+1}\!\left(\frac{\ell}{2}\right)!^{\+ 2}}{\pi\, \ell\+ !\,(M\- +\- 1)^\ell\mspace{-6mu}}\+ \integrate[0;1]{\frac{u^\ell}{\!\-\sqrt{1-\frac{u^2}{(1+M)^2\mspace{-9mu}}\,}\,};du}\define*{;\bar{\gamma}^\ell_M}\+,
    \end{equation}
    where equality holds if $\ell=0\lor M=0$.
    Hence, in particular we know that
    \begin{equation}
        \gamma_{M,\+1}^0=\bar{\gamma}^0_M=\tfrac{2(M+1)}{\pi}\arcsin\!\left(\tfrac{1}{M+1}\right)\!.\label{equalityHoldsAtZero}
    \end{equation}
    In the following computations we set $x=\frac{1}{M+1}$ for the sake of notation.
    Let us prove that $\left\{\bar{\gamma}^\ell_M\-\right\}_{\ell\+\in\+2\No}\-$ is a decreasing sequence for all fixed $M\->\-0$. We have
    \begin{align*}
        \bar{\gamma}^{\ell}_M-\bar{\gamma}^{\ell+2}_M&=\frac{2^{\ell+1} x^\ell\!\left(\frac{\ell}{2}\right)!^{\+ 2}}{\pi\, \ell\+ !}\!\integrate[0;1]{\frac{u^\ell}{\sqrt{1-x^2u^2}}\!\left[1-\frac{4x^2\!\left(\frac{\ell}{2}+1\right)^{\- 2}\!u^2}{(\ell+2)(\ell+1)}\right];du}\\
        &=\frac{2^{\ell+1} x^\ell\!\left(\frac{\ell}{2}\right)!^{\+ 2}}{\pi\, \ell\+ !}\!\integrate[0;1]{\frac{u^\ell}{\sqrt{1-x^2u^2}}\!\left[1-\frac{\left(\ell+2\right)\- x^2u^2}{\ell+1}\right];du}\!.
    \end{align*}
    Our goal is to show that the last integral is positive for any given $x\in(0,1)$ and $\ell$ even, so that $\bar{\gamma}_M^{\ell+2}\!<\bar{\gamma}_M^\ell\+$.
    To this end, we first point out that the integral is manifestly positive at $x=0$, whereas the evaluation of the integral at $x=1$ yields
    \begin{equation*}
        \integrate[0;1]{\frac{2u^\ell}{\sqrt{1-u^2}}\!\left[1-\frac{\ell+2}{\ell+1} \,u^2\right];du}=\frac{\pi\, \ell\+ !}{2^\ell\! \left(\frac{\ell}{2}\right)!^{\+ 2}}-\frac{\ell+2}{\ell+1}\,\frac{\pi\, (\ell+2)!}{2^{\ell+2}\! \left(\frac{\ell}{2}+1\right)!^{\+ 2}}=0.
    \end{equation*}
We observe that, in order to obtain $\inf\!\left\{\bar{\gamma}^\ell_M\--\bar{\gamma}^{\ell+2}_M\,|\;M\->\-0\right\}\!\geq 0$, it is sufficient to prove that the integral is a monotonic decreasing function in $x$ for any $\ell$.
    In other words, we want to show
    \begin{equation}\label{negativeIntegralWTS}
        \frac{\mathrm{d}}{\mathrm{d}x}\!\integrate[0;1]{\frac{u^\ell}{\sqrt{1-x^2u^2}\+}\!\left[1-\frac{\ell+2}{\ell+1} \,x^2u^2\right];du}\!<0,\quad \forall x\in(0,1),\,\ell \text{ even}.
    \end{equation}
    By the Leibniz integral rule, the derivative with respect to $x$ can be computed inside the integral.  Therefore, for any $x\in(0,1), u\in[0,1]$ and $\ell$ even, one has
    \begin{align*}
        \frac{\partial}{\partial x}\,\frac{u^\ell}{\sqrt{1-x^2u^2}\+}\!\left[1-\frac{\ell+2}{\ell+1} \,x^2u^2\right]\!=\frac{x\,u^{\ell+2}}{(1-x^2u^2)^{\frac{3}{2}}\mspace{-10.5mu}}\:\left[1-\frac{\ell+2}{\ell+1}\,x^2u^2\right]\!-\frac{\ell+2}{\ell+1}\,\frac{2\+x\,u^{\ell+2}}{\sqrt{1-x^2u^2}\+}\\
        =\frac{x\,u^{\ell+2}}{(1-x^2u^2)^{\frac{3}{2}}\mspace{-10.5mu}}\:\left[1-\frac{\ell+2}{\ell+1}\left(x^2u^2+2-2\+x^2u^2\right)\right]\!
        =\frac{x\,u^{\ell+2}}{(1-x^2u^2)^{\frac{3}{2}}\mspace{-10.5mu}}\:\left[\frac{\ell+2}{\ell+1}\,x^2u^2-\frac{\ell+3}{\ell+1}\right]\!<0.
    \end{align*}
    Since the integral of a negative function obviously yields a negative quantity,~\eqref{negativeIntegralWTS} is proven.
    This means that $\!\left\{\bar{\gamma}^\ell_M\-\right\}_{\ell\+\in\+2\No}\!$ is decreasing for any fixed $M\->\-0$. Thus, taking into account \eqref{gammasInequality} and~\eqref{equalityHoldsAtZero}, we finally get
    \begin{equation*}
        \gamma^\ell_{M,\+1}\leq\bar{\gamma}^\ell_M\leq\bar{\gamma}^0_M=\gamma^0_{M,\,1}\+,\qquad \forall \ell \text{ even}.
    \end{equation*}
    Hence, thanks to~\eqref{gammaSecondTermOk}, equation~\eqref{gammaDecreasesToShow} is proved.
    
    \vs
    \n \emph{Step 2.}\quad Let us define the following function
    \begin{equation}
        h^N_{\ell,\,s_\ell}(p)\vcentcolon=s_\ell\,\frac{\sqrt{M(M\!+\- 2)}}{M+1}+(N\! -\- 1)(\gamma-\gamma^\ell_{M,\+1})\,\frac{\tanh\!\left(\frac{\pi}{2}p\right)}{p}\prod_{k=1}^{\frac{\ell}{2}}\frac{p^2+(2k\- -\- 1)^2\mspace{-6mu}}{p^2+4k^2}\,,
    \end{equation}
    where $\gamma^\ell_{M,\+1}$ has been defined in~\eqref{fractionMassLegendre}.
    We shall prove that $h^N_{\ell,\,s_\ell}$ satisfies 
    \begin{subequations}\label{hPositivityConditions}
    \begin{gather}
        h^N_{\ell,\,s_\ell}\leq f^N_{\ell,\,s_\ell}\+ ,\label{hPositivityConditionA}\\
        h^N_{\ell,\,s_\ell}(0)=f^N_{\ell,\,s_\ell}(0),\label{hPositivityConditionB}\\
        \lim_{p\to\pInfty}h^N_{\ell,\,s_\ell}(p)=\!\-\lim_{p\to\pInfty}f^N_{\ell,\,s_\ell}(p),\label{hPositivityConditionC}\\
        h^N_{\ell,\,s_\ell}(p)\text{ is monotonic in } p\in\Rplus.\label{hPositivityConditionD}
    \end{gather}
    \end{subequations}
    
    \n Starting with~\eqref{hPositivityConditionA},
    we take into account  lemma~\ref{hypergeometricLegendre} and equation~\eqref{hypergeometricLegendreXOne} to obtain an explicit expression for $f^N_{\ell,\,s_\ell}$ 
    \begin{equation}\label{fExplicitFlag}
        \begin{split}
        f^N_{\ell,\,s_\ell}(p)&=s_\ell\,\frac{\sqrt{M(M+2)}}{M+1}+\frac{N\- -\!1}{2}\left(S_{\mathrm{off};\,\ell}+S_{\mathrm{reg};\,\ell}\right)\!(p)\\[-5pt] &=s_\ell\,\frac{\sqrt{M(M+2)}}{M+1}+\frac{(N\- -\!1)\+ \bar{h}_\ell(p)}{\cosh\!\left(\frac{\pi}{2}p\right)}\+ \prod_{k=1}^{\frac{\ell}{2}}\!\left[p^2+(2k\- -\!1)^2\right]\!,
        \end{split}
    \end{equation}
    where we have introduced, for the sake of notation, the function
    \begin{equation}
        \bar{h}_\ell(p)\vcentcolon=\gamma\,\frac{\sinh\!\left(\frac{\pi}{2}p\right)}{p}\prod_{k=1}^{\frac{\ell}{2}}\frac{1}{p^2+4k^2}-\frac{2^\ell\,\ell\+ !\,{}_2 F_1\!\left(\frac{\ell+1+ip}{2},\frac{\ell+1-ip}{2};\ell+\frac{3}{2}\+ ;\frac{1}{(1+M)^2\mspace{-9mu}}\,\right)\!}{(2\ell\- +\!1)!\,(M\!+\!1)^\ell}\+.
    \end{equation}
    To achieve the result, consider the Euler's transformation formula
    \begin{equation}\label{transformationEuler}
        {}_2 F_1(a,b\+ ;c\+ ;z)=(1-z)^{c-a-b}{}_2F_1(c-a,c-b\+ ;c\+ ;z)
    \end{equation}
    and the inequality
    \begin{equation}\label{gammaComplex-RealInequality}
        \left\lvert\Gamma\!\left(a+ib\right)\- \right\rvert^2\!\leq \left\lvert\Gamma\!\left(a\right)\- \right\rvert^2,\quad \forall a,b\in\R\+.
    \end{equation}
    Indeed, one can write
    \begin{align*}
        {}_2F_1\!\left(\tfrac{\ell+1+ip}{2},\right.\!\!&\left.\tfrac{\ell+1-ip}{2};\ell+\tfrac{3}{2}\+ ;x^2\right)\!=\sqrt{1-x^2\,}\+  {}_2F_1\!\left(\tfrac{\ell+2-ip}{2},\tfrac{\ell+2+ip}{2};\ell+\tfrac{3}{2}\+ ;x^2\right)\\
        &=\sqrt{1-x^2\,}\sum_{k\+\in\+\No}\frac{\;x^{2k}\!\- }{k!}\,\frac{\left(\frac{\ell+2-ip}{2}\right)_{\- k}\!\left(\frac{\ell+2+ip}{2}\right)_{\- k}}{\left(\ell+\frac{3}{2}\right)_{\- k}}\leq \frac{\sqrt{1-x^2\,}\left(\frac{\ell}{2}\right)!^{\+ 2}\!\- }{\left\lvert\Gamma\!\left(\frac{\ell+2+ip}{2}\right)\- \right\rvert^2}\sum_{k\in\No}\frac{\;x^{2k}\!\- }{k!}\,\frac{\left(\frac{\ell}{2}\- +\!1\right)^{\- 2}_{\- k}}{\left(\ell+\frac{3}{2}\right)_{\- k}}\\
        &=\frac{\sqrt{1-x^2\,}\left(\frac{\ell}{2}\right)!^{\+ 2}\!\- }{\left\lvert\Gamma\!\left(\frac{\ell+2+ip}{2}\right)\- \right\rvert^2}\,{}_2F_1\!\left(\tfrac{\ell}{2}+1,\tfrac{\ell}{2}+1;\ell+\tfrac{3}{2}\+ ;x^2\right)\!,
    \end{align*}
    where we have used inequality~\eqref{gammaComplex-RealInequality}, according to which
    \begin{equation*}
        \frac{\left\lvert\Gamma\!\left(\frac{\ell+2+ip}{2}+\-k\right)\- \right\rvert^2\!\- }{\,\left\lvert\Gamma\!\left(\frac{\ell+2+ip}{2}\right)\- \right\rvert^2}\leq\frac{\Gamma^{\+ 2}\!\left(\frac{\ell}{2}\- +\!1\right)}{\left\lvert\Gamma\!\left(\frac{\ell+2+ip}{2}\right)\- \right\rvert^2\mspace{-6mu}}\;\frac{ \Gamma^{\+ 2}\!\left(\tfrac{\ell}{2}\-+\!1\-+\-k\right)\!}{\Gamma^{\+ 2}\!\left(\frac{\ell}{2}\- +\!1\right)}=\frac{\left(\frac{\ell}{2}\right)!^{\+ 2}}{\left\lvert\Gamma\!\left(\frac{\ell+2+ip}{2}\right)\- \right\rvert^2\mspace{-6mu}}\;\left(\tfrac{\ell}{2}+\-1\right)^{\-2}_{\-k}\-.
    \end{equation*}
    Using again~\eqref{transformationEuler} to the right hand side, one obtains
    \begin{equation}\label{p-behaviorExtractHypergeometric}
        {}_2F_1\!\left(\tfrac{\ell+1+ip}{2},\tfrac{\ell+1-ip}{2};\ell+\tfrac{3}{2}\+ ;x^2\right)\!\leq\frac{\left(\frac{\ell}{2}\right)!^{\+ 2}\!\- }{\left\lvert\Gamma\!\left(\frac{\ell+2+ip}{2}\right)\- \right\rvert^2\mspace{-6mu}}\:\:{}_2F_1\!\left(\tfrac{\ell+1}{2},\tfrac{\ell+1}{2};\ell+\tfrac{3}{2}\+ ;x^2\right)\!.
    \end{equation}
    Making use of identity~\eqref{gammaComplexPlusInteger} in the previous inequality, one has
    \begin{equation*}
    \begin{split}
        \bar{h}_\ell(p)&\geq\frac{\sinh\!\left(\frac{\pi}{2}p\right)}{p}\prod_{k=1}^{\frac{\ell}{2}}\frac{1}{p^2+4k^2}\!\left[\gamma-\frac{2^{2\ell+1}\,\ell\+ !\left(\frac{\ell}{2}\right)!^{\+ 2}}{\pi\,(2\ell\- +\!1)!\,(M\!+\!1)^\ell}\,{}_2 F_1\!\left(\tfrac{\ell+1}{2},\tfrac{\ell+1}{2};\ell+\tfrac{3}{2}\+ ;\tfrac{1}{(1+M)^2\mspace{-9mu}}\,\right)\!\right]\\
        &=\frac{\sinh\!\left(\frac{\pi}{2}p\right)}{p}\,(\gamma-\gamma^\ell_{M,\+1})\prod_{k=1}^{\frac{\ell}{2}}\frac{1}{p^2+4k^2}\,.
    \end{split}
    \end{equation*}
    Exploiting this lower bound in~\eqref{fExplicitFlag}, one finds out that $h^N_{\ell,\,s_\ell}$ satisfies condition~\eqref{hPositivityConditionA}.
    Furthermore, we stress that we have obtained this estimate by using only inequality~\eqref{gammaComplex-RealInequality}, according to which the equality sign holds in case $p\-=\-0$.
    In other words, we have also proved~\eqref{hPositivityConditionB}.

    \n Next, we show~\eqref{hPositivityConditionC}.
    Since $\frac{p^2+(2k-1)^2}{p^2+4k^2}<1$ for all $k$,
    \begin{equation*}
        \left\lvert h^N_{\ell,\,s_\ell}(p)\--s_\ell\,\frac{\sqrt{M(M+2)}}{M+1}\right\rvert\leq (N-1)\abs{\gamma-\gamma^\ell_{M,\+1}}\,\frac{\tanh\!\left(\frac{\pi}{2}p\right)}{p}
    \end{equation*}
    where the right hand side vanishes as $p$ goes to infinity.
    Therefore,
    \begin{equation*}
        \lim_{p\to\pInfty}h^N_{\ell,\,s_\ell}(p)=s_\ell\,\frac{\sqrt{M(M+2)}}{M+1}=\lim_{p\to\pInfty}f^N_{\ell,\,s_\ell}(p).
    \end{equation*}
    It remains to prove the monotonicity of $h^N_{\ell,\,s_\ell}$ in $\Rplus\+$.
    In particular, it suffices to show that the function
    \begin{equation}\label{functionToBeDecreasing}
        p\longmapsto\frac{\tanh\!\left(\frac{\pi}{2}p\right)\-}{p}\prod\limits_{k=1}^{\ell/2}\frac{p^2\-+(2k\- -\!1)^2\!\-}{p^2\-+4k^2}
    \end{equation}
    is decreasing in $\Rplus\+$.
    Let us remind the product representation of the hyperbolic tangent 
    \begin{equation}
        \tanh\!\left(z\right)=z\prod_{k\+\in\+\N}\frac{1+\frac{z^2}{\pi^2k^2\!\- }}{1+\frac{4z^2}{\pi^2(2k-1)^2\mspace{-6mu}}}\,.
    \end{equation}
    Denoting $z=\frac{\pi}{2}p$, one has 
    \begin{align*}
        \frac{\tanh\!\left(\frac{\pi}{2}p\right)}{p}\prod_{k=1}^{\frac{\ell}{2}}\frac{p^2\!+\- (2k\!-\!1)^2}{p^2\!+\- 4k^2}=\frac{\pi}{2}\prod_{k=1}^{\frac{\ell}{2}}\frac{1+\frac{p^2}{4k^2\!\- }}{1+\frac{p^2}{(2k-1)^2\mspace{-6mu}}}\,\frac{p^2\!+\- (2k\!-\!1)^2}{p^2\!+\- 4k^2}\prod_{k=\frac{\ell}{2}+1}^{\pInfty}\frac{1+\frac{p^2}{4k^2\!\- }}{1+\frac{p^2}{(2k-1)^2\mspace{-6mu}}}\\
        =\frac{\pi}{2}\prod_{k=1}^{\frac{\ell}{2}}\frac{(2k\!-\!1)^2\!\- }{4k^2}\prod_{k=\frac{\ell}{2}+1}^{\pInfty}\frac{1+\frac{p^2}{4k^2\!\- }}{1+\frac{p^2}{(2k-1)^2\mspace{-6mu}}}=\frac{\pi}{2}\,\frac{(\ell-1)!!^{\+ 2}\!\- }{\ell\+ !!^{\+ 2}}\prod_{k=\frac{\ell}{2}+1}^{\pInfty}\frac{1+\frac{p^2}{4k^2\!\- }}{1+\frac{p^2}{(2k-1)^2\mspace{-6mu}}}\,.
    \end{align*}
    In order to prove that the function~\eqref{functionToBeDecreasing} is decreasing, we consider
    \begin{equation}\label{lnFunctionToBeDecreasing}
    \begin{split}
    \ln\!\left(\frac{\tanh\!\left(\frac{\pi}{2}p\right)}{p}\prod_{k=1}^{\frac{\ell}{2}}\frac{p^2\!+\- (2k\!-\!1)^2\!\-}{p^2\!+\- 4k^2}\+\right)\!\-=&\,\ln\!\left(\frac{\pi}{2}\right)\!+2\ln\!\left[\frac{(\ell-1)!!}{\ell\+ !!}\right]\!+\\[-10pt]
    &+\mspace{-6mu}\sum_{k=\frac{\ell}{2}+1}^{\pInfty}\!\!\ln\!\left(1+\frac{p^2}{4k^2}\right)\!-\ln\!\left[1+\frac{p^2}{(2k-1)^2\mspace{-6mu}}\,\right]\!.
    \end{split}
    \end{equation}
    We notice that for $p>0$
    \begin{align*}
    \frac{\partial}{\partial p}\left\{\ln\!\left(1+\frac{p^2}{4k^2}\right)\!-\ln\!\left[1+\frac{p^2}{(2k-1)^2\mspace{-6mu}}\,\right]\!\right\}&=\frac{2p}{p^2+4k^2}-\frac{2p}{p^2+(2k-1)^2\mspace{-6mu}}\\
    &=\frac{2p\,(1-4k)}{(p^2+4k^2)[p^2+(2k-1)^2]}<0,\quad \forall k\geq 1.
    \end{align*}
    Hence,~\eqref{lnFunctionToBeDecreasing} is decreasing in $p\->\-0$ since it is a sum of decreasing functions.
    Therefore, also~\eqref{functionToBeDecreasing} is decreasing and~\eqref{hPositivityConditionD} is proven.
    
    \n In conclusion, we know that whenever $\gamma\->\-\gamma_c\+$, there exists $s^\ast_\ell\in(0,1)$ for any $\ell\in\No\+$, such that $f^N_{\ell,\,s^\ast_\ell}(0)> 0$ uniformly in $N\geq 2$.
    Since we also know that $f^N_{\ell,\,s_\ell}$ is eventually positive, conditions~\eqref{hPositivityConditions} imply that $f^N_{\ell,\,s^\ast_\ell}\geq h^N_{\ell,\,s^\ast_\ell}\!> 0$ and the proof is completed.

\end{proof}
\end{lemma}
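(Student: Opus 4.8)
The plan is to combine the diagonalization of Lemma~\ref{offregDiagonalization} with a two‑step pointwise analysis of the resulting symbol. First I would reduce to $\zeta=0$: since $F^{\,\zeta}_{\mathrm{diag}}[\psi]\geq F^{\+0}_{\mathrm{diag}}[\psi]$ for every $\zeta\geq 0$ and $s_\ell>0$, it suffices to prove $\Xi^{\+0}_{\+\ell,\,s_\ell}=s_\ell F^{\+0}_{\mathrm{diag}}+F^{\+0}_{\mathrm{off};\,\ell}+F_{\mathrm{reg};\,\ell}\geq 0$. For $\ell$ odd this is immediate from Lemma~\ref{FsignLemma}, since $F^{\+0}_{\mathrm{off};\,\ell}$ and $F_{\mathrm{reg};\,\ell}$ are then both non‑negative, so the work is entirely for $\ell$ even. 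Using the isometry $\psi\mapsto g_\psi$ of Lemma~\ref{offregDiagonalization} the three contributions become multiplication operators and $\Xi^{\+0}_{\+\ell,\,s_\ell}[\psi]=\int_{\R}|g_\psi(p)|^2\, f^{N}_{\ell,\,s_\ell}(p)\,dp$, with $f^{N}_{\ell,\,s_\ell}(p):=s_\ell\sqrt{\mu/\eta}+\tfrac{N-1}{2}\big(S_{\mathrm{off};\,\ell}+S_{\mathrm{reg};\,\ell}\big)(p)$. Everything then reduces to finding, for each even $\ell$, some $s_\ell\in(0,1)$ for which the scalar function $f^{N}_{\ell,\,s_\ell}$ is non‑negative on $\R$ uniformly in $N\geq 2$.

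\emph{Step 1 (behaviour at the origin).} I would first observe that $f^{N}_{\ell,\,s_\ell}(0)>0$ is equivalent to a lower bound on $s_\ell$ in terms of the Legendre integrals $\int_{-1}^{1}P_\ell(y)(1-y^2/(M+1)^2)^{-1/2}\,dy$ and $\int_{-1}^{1}P_\ell(y)(1-y^2)^{-1/2}\,dy$, which by Lemma~\ref{hypergeometricLegendre} at $p=0$ are explicit hypergeometric numbers. Imposing in addition $s_\ell<1$ forces $\gamma>\gamma^\ell_M$ for an explicit threshold $\gamma^\ell_M=\gamma^\ell_{M,1}-\gamma^\ell_{M,2}$, and the crux of Step~1 is to show $\gamma_c=\max_{\ell\in 2\No}\gamma^\ell_M=\gamma^0_M$. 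This I would do by: (i) noting $\gamma^\ell_{M,2}$ is increasing in $\ell$, so it is enough to control $\gamma^\ell_{M,1}-\gamma^0_{M,2}$; (ii) bounding $\gamma^\ell_{M,1}$ via the Euler integral representation of ${}_2F_1$ and the elementary inequality $1-u^2\leq 1-x^2u^2$ by a quantity $\bar\gamma^\ell_M$, with equality at $\ell=0$; (iii) proving $\{\bar\gamma^\ell_M\}_{\ell\in 2\No}$ is decreasing in $\ell$ for every fixed $M>0$, which reduces to the sign of $\frac{\mathrm{d}}{\mathrm{d}x}\int_0^1 u^\ell(1-x^2u^2)^{-1/2}[1-\tfrac{\ell+2}{\ell+1}x^2u^2]\,du$ — a one‑line computation showing the integrand is negative on $(0,1)$ — together with the vanishing of this integral at $x=1$.

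\emph{Step 2 (propagating positivity to all $p$).} I would construct the explicit minorant $h^{N}_{\ell,\,s_\ell}(p):=s_\ell\frac{\sqrt{M(M+2)}}{M+1}+(N-1)(\gamma-\gamma^\ell_{M,1})\,\frac{\tanh(\pi p/2)}{p}\prod_{k=1}^{\ell/2}\frac{p^2+(2k-1)^2}{p^2+4k^2}$ and check that $h^{N}_{\ell,\,s_\ell}\leq f^{N}_{\ell,\,s_\ell}$, with equality at $p=0$ and the same limit at $p\to\pInfty$. The inequality $h\leq f$ comes from writing $f^{N}_{\ell,\,s_\ell}$ explicitly through Lemma~\ref{hypergeometricLegendre} and estimating the $p$‑dependent hypergeometric factor ${}_2F_1(\tfrac{\ell+1+ip}{2},\tfrac{\ell+1-ip}{2};\ell+\tfrac32;x^2)$ from above: Euler's transformation formula plus the inequality $|\Gamma(a+ib)|^2\leq|\Gamma(a)|^2$ applied term by term give a bound by $\frac{(\ell/2)!^2}{|\Gamma((\ell+2+ip)/2)|^2}\,{}_2F_1(\tfrac{\ell+1}{2},\tfrac{\ell+1}{2};\ell+\tfrac32;x^2)$, and the identity $|\Gamma(n+1+ib)|^2=\frac{\pi b}{\sinh(\pi b)}\prod_{k=1}^n(k^2+b^2)$ turns this into precisely the $\tanh$–product in $h$; since $|\Gamma(a+ib)|\leq|\Gamma(a)|$ is sharp at $p=0$, equality at the origin is automatic. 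Monotonicity of $h^{N}_{\ell,\,s_\ell}$ on $\Rplus$ reduces to that of $p\mapsto\frac{\tanh(\pi p/2)}{p}\prod_{k=1}^{\ell/2}\frac{p^2+(2k-1)^2}{p^2+4k^2}$, which by the Weierstrass product for $\tanh$ equals $\frac{\pi}{2}\,\frac{(\ell-1)!!^2}{\ell!!^2}\prod_{k>\ell/2}\frac{1+p^2/(4k^2)}{1+p^2/(2k-1)^2}$, a product of decreasing factors. As $h$ is monotone and strictly positive at both $0$ (by the choice of $s_\ell$ in Step~1) and $\pInfty$ (its limit being $s_\ell\sqrt{\mu/\eta}>0$), it is positive throughout, hence $f^{N}_{\ell,\,s^\ast_\ell}\geq h^{N}_{\ell,\,s^\ast_\ell}>0$ and $\Xi^{\,\zeta}_{\+\ell,\,s^\ast_\ell}\geq 0$.

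The main obstacle is Step~1 — showing that the $s$‑wave threshold $\gamma^0_M$ dominates all the higher even partial‑wave thresholds \emph{uniformly in} the mass $M$. This is exactly where the argument improves on the three‑boson analysis of~\cite{BCFT}, and it rests on the monotonicity of $\bar\gamma^\ell_M$ in $\ell$, itself a consequence of the Euler‑integral reformulation and the sign computation in (iii). Step~2 is more mechanical, but the device of bounding the hypergeometric factor through $|\Gamma(a+ib)|\leq|\Gamma(a)|$ — sharp precisely at $p=0$, so that it delivers both $h\leq f$ and equality at the origin in one stroke — is what makes the minorant construction close.
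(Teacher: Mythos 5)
Your proposal is correct and follows essentially the same route as the paper's proof: the same reduction to $\zeta=0$ and to even $\ell$, the same scalar symbol $f^N_{\ell,\,s_\ell}$, the same Step~1 identification of $\gamma_c=\gamma^0_M$ via the Euler integral representation and the monotonicity of $\bar\gamma^\ell_M$, and the same Step~2 minorant $h^N_{\ell,\,s_\ell}$ built from Euler's transformation, $\abs{\Gamma(a+ib)}\leq\abs{\Gamma(a)}$, and the Weierstrass product for $\tanh$. No gaps to report.
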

\begin{note}\label{partialWaveUniformity}
    In lemma~\ref{auxiliaryPositivity}, we have shown that, if $\gamma\-\geq\-\gamma^\ell_{M,\+1}\+$, any $s^\ast_\ell\-\in\-(0,1)$ is such that $f^N_{\ell,\,s^\ast_\ell}\geq 0$, whereas in case $\gamma\-\in\!\left(\gamma_c\+ ,\,\gamma^\ell_{M,\+1}\right)\!,$ the function $f^N_{\ell,\,s^\ast_\ell}$ is still non negative for all $s^\ast_\ell$ s.t.
    \begin{equation*}
        \frac{\pi\,\ell\+ !^{\+ 2}}{2^{2\ell+1}\!\left(\frac{\ell}{2}\right)!^{\+ 4}}\,\frac{(N\!-\!1)(M\!+\!1)}{\sqrt{M(M\!+\- 2)\,}}\,(\gamma^\ell_{M,\+1}\--\gamma)<s^\ast_\ell<\-1.
    \end{equation*}
    Notice that the lower bound is non-increasing in $\ell$, hence the sequence $\{s^\ast_\ell\}$ that makes $\Xi^{\,\zeta}_{\+\ell,\,s^\ast_\ell}$ non-negative for all $\zeta\geq 0$ and $\ell\in\No$ can be chosen within an interval that does not depend on $\ell$, namely
    $$\max\!\left\{\!0,\,\frac{\pi}{2}\,\frac{(N\--\-1)(M\-+\-1)}{\sqrt{M(M\-+\-2)\,}}\,(\gamma^0_{M,\+1}-\gamma)\!\right\}\!<s^\ast_\ell<\-1,\qquad\forall\ell\in\No\+.$$
\end{note}

\vs\vs

\section{Estimate of \texorpdfstring{$\,\Theta^{\+\zeta}$}{TEXT}}\label{estimateTheta}

\vs

Collecting the results obtained in the previous two sections, we can now establish detailed estimates for $\Theta^{\+\zeta}$.
Indeed, in the next proposition we prove a lower bound, which is the crucial ingredient for the proof of our main results.
We also prove an upper bound, which improves the result already obtained in proposition~\ref{boundednessPhiH-half}.

\begin{prop}\label{stimaTheta}
 Given $\varphi\in H^{\frac{1}{2}}(\R^{3})$ and $\zeta \geq 0$, we have
\begin{equation}\label{ThetaLowerBound}
         \Theta^{\+\zeta}[\varphi]\geq \Lambda_\gamma(N,M)\,\Theta^{\+\zeta}_{\mathrm{diag}}[\varphi], \qquad \text{for }\,\gamma>\gamma_c\+,
    \end{equation}
    where 
    \begin{equation}\label{LambdaDef}
        \define{\Lambda_\gamma(N,M);\min\!\left\{1,\,\tfrac{\pi(N-1)}{2}\+\tfrac{M+1}{\!\-\sqrt{M(M+2)}\,}(\gamma-\gamma_c)\right\}}\in (0,1].
    \end{equation}
Moreover
    \begin{equation}\label{Thetaup2}
      \Theta^{\+\zeta}[\psi] \leq  \Lambda'_\gamma(N,M)\,\Theta^{\+\zeta}_{\mathrm{diag}}[\psi]\,, \qquad \text{for }\,\gamma>0,
    \end{equation}
    where 
    \begin{equation}\label{LambdaPrime}
        \define{\Lambda'_{\gamma}(N,M);1+\tfrac{(N-1)(M+1)}{\!\-\sqrt{M(M+2)}\,}\max\!\Big\{\tfrac{\pi}{2}\+\gamma,\,\tfrac{1}{2}\+S_{\mathrm{off};\,1}(0)+\tfrac{2}{\pi}\+ \gamma\- \Big\}}.
    \end{equation}
    \begin{proof}
        Let $\varphi\!\in\! H^{\frac{1}{2}}(\R^{3})$ and consider decomposition~\eqref{partialWavesDecQF} and lemma~\ref{FsignLemma}.
        Then,
        \begin{align*}
            \Theta^{\+\zeta}[\varphi]&=\sum_{\ell\+ \in\+ \No}\sum_{m\+ =\+ -\ell}^\ell F^{\,\zeta}_\ell[ \FT{\varphi}_{\ell,m}]=\sum_{\ell\+ \in\+ \No}\sum_{m\+ =\+ -\ell}^\ell \!\left(\- F^{\,\zeta}_{\mathrm{diag}}\!+\- F^{\,\zeta}_{\mathrm{off};\,\ell}\-+\- F_{\mathrm{reg};\,\ell}\right)\![\FT{\varphi}_{\ell,m}]\\[-2.5pt]
            &\geq \sum_{\substack{\ell\+ \in\+ \N\\ \ell\, \text{odd}}}\sum_{m\+ =\+ -\ell}^\ell F^{\,\zeta}_{\mathrm{diag}}[\FT{\varphi}_{\ell,m}]+\!\sum_{\substack{\ell\+ \in\+ \No\\ \ell\, \text{even}}}\sum_{m\+ =\+ -\ell}^\ell\!\left(\- F^{\,\zeta}_{\mathrm{diag}}\!+\- F^{\+0}_{\mathrm{off};\,\ell}\-+\- F_{\mathrm{reg};\,\ell}\right)\![\FT{\varphi}_{\ell,m}].
        \end{align*}
        Taking account of definition~\eqref{auxiliaryQF}, for any choice of $\{s_\ell\}_{\ell\+\in\+\No}\!\-\subset\!(0,1)$, the previous inequality reads
        \begin{equation*}
            \Theta^{\+\zeta}[\varphi]\geq\sum_{\substack{\ell\+ \in\+ \N\\ \ell\, \text{odd}}}\sum_{m\+ =\+ -\ell}^\ell F^{\,\zeta}_{\mathrm{diag}}[\FT{\varphi}_{\ell,m}]+\!\sum_{\substack{\ell\+ \in\+ \No\\ \ell\, \text{even}}}\sum_{m\+ =\+ -\ell}^\ell(1-s_\ell)F^{\,\zeta}_{\mathrm{diag}}[\FT{\varphi}_{\ell,m}]+\- \Xi^{\,\zeta}_{\+\ell,\,s_\ell}[\FT{\varphi}_{\ell,m}].
        \end{equation*}
        According to lemma~\ref{auxiliaryPositivity}, there exists a sequence $\{s^\ast_\ell\}_{\ell\+\in\+\No}\!\-\subset\! (0,1)$ such that $\Xi^{\,\zeta}_{\+\ell,\,s^\ast_\ell}\-\geq\- 0$, hence
        \begin{align*}
            \Theta^{\+\zeta}[\varphi]&\geq \sum_{\substack{\ell\+ \in\+ \N\\ \ell\, \text{odd}}}\sum_{m\+ =\+ -\ell}^\ell F^{\,\zeta}_{\mathrm{diag}}[\FT{\varphi}_{\ell,m}]+\!\sum_{\substack{\ell\+ \in\+ \No\\ \ell\, \text{even}}}\sum_{m\+ =\+ -\ell}^\ell(1-s^\ast_\ell)F^{\,\zeta}_{\mathrm{diag}}[\FT{\varphi}_{\ell,m}]\\[-7.5pt]
            &\geq \sum_{\ell\+ \in\+ \No}\sum_{m\+ =\+ -\ell}^\ell (1-s^\ast_\ell)F^{\,\zeta}_{\mathrm{diag}}[\FT{\varphi}_{\ell,m}]\geq\inf_{k\+ \in\+ \No}(1-s_k^\ast)\sum_{\ell\+ \in\+ \No}\sum_{m\+ =\+ -\ell}^\ell F^{\,\zeta}_{\mathrm{diag}}[\FT{\varphi}_{\ell,m}]\\
            &=\!\inf_{k\+ \in\+ \No}(1-s_k^\ast)\,\Theta^{\+\zeta}_{\mathrm{diag}}[\varphi]
        \end{align*}
        where, according to remark~\ref{partialWaveUniformity}, each $s^\ast_k$ can be arbitrarily chosen within an interval in $(0,1)$ that does not shrink as $k$ varies.
        Exploiting this fact, we can optimize the inequality by choosing 
        $$s^\ast_k=\tfrac{\pi(N-1)}{2} \tfrac{M+1}{\!\-\sqrt{M(M+2)}\,}\max\!\left\{0,\,\gamma^0_{M,\+1}-\gamma\right\}\!,\qquad \forall k\in\No$$
        so that $\Theta^{\+\zeta}[\varphi]\geq \Lambda_\gamma(N,M)\,\Theta^{\+\zeta}_{\mathrm{diag}}[\varphi]$, 
        where $\Lambda_\gamma(N,M)$ is given by~\eqref{LambdaDef}.
    
        \n Let us consider the upper bound.
        By lemmata~\ref{FsignLemma} and~\ref{SoffregAboveBound}, we have 
        \begin{align*}
            \Theta^{\+\zeta}[\varphi]&=\sum_{\ell\+ \in\+ \No}\sum_{m\+ =\+ -\ell}^\ell F^{\,\zeta}_\ell[\FT{\varphi}_{\ell,m}]=\sum_{\ell\+ \in\+ \No}\sum_{m\+ =\+ -\ell}^\ell \!\left(\- F^{\,\zeta}_{\mathrm{diag}}\!+\- F^{\,\zeta}_{\mathrm{off};\,\ell}\-+\- F_{\mathrm{reg};\,\ell}\right)\![\FT{\varphi}_{\ell,m}]\\
            &\leq \sum_{\substack{\ell\+ \in\+ \No\\ \ell\, \text{even}}}\sum_{m\+ =\+ -\ell}^\ell \!\left(\- F^{\,\zeta}_{\mathrm{diag}}\!+F_{\mathrm{reg};\,\ell}\- \right)\![\FT{\varphi}_{\ell,m}]+\!\sum_{\substack{\ell\+ \in\+ \N\\ \ell\, \text{odd}}}\sum_{m\+ =\+ -\ell}^\ell\!\left(\- F^{\,\zeta}_{\mathrm{diag}}\!+\- F^{\+0}_{\mathrm{off};\,\ell}\-+\- F_{\mathrm{reg};\,\ell}\right)\![\FT{\varphi}_{\ell,m}]\\[-7.5pt]
            &\leq \left[1+\tfrac{N-1}{2}\sqrt{\tfrac{\eta}{\mu}}\max\!\Big\{S_{\mathrm{reg};\,0}(0),\,S_{\mathrm{off};\,1}(0)+S_{\mathrm{reg};\,1}(0)\- \Big\}\right]\!\sum_{\ell\+ \in\+ \No}\sum_{m\+ =\+ -\ell}^\ell \!F^{\,\zeta}_{\mathrm{diag}}[\FT{\varphi}_{\ell,m}]\\[-5pt]
            &=\Lambda'_\gamma(N,M)\,\Theta^{\+\zeta}_{\mathrm{diag}}[\varphi]
        \end{align*}
        where $\Lambda'_{\gamma}(N,M)$ is given by~\eqref{LambdaPrime}. 

    \end{proof}
\end{prop}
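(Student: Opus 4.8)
The plan is to reduce everything to the partial wave decomposition of Lemma~\ref{partialWavesQF} and then combine the sign information of Lemma~\ref{FsignLemma}, the diagonalization of Lemma~\ref{offregDiagonalization}, and the uniform positivity of the auxiliary forms from Lemma~\ref{auxiliaryPositivity} and Remark~\ref{partialWaveUniformity}.

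\emph{Lower bound.} First I would write $\Theta^{\+\zeta}[\varphi]=\sum_{\ell\in\No}\sum_{m=-\ell}^{\ell}F^{\,\zeta}_\ell[\FT{\varphi}_{\ell,m}]$ and split the sum into odd and even $\ell$. For $\ell$ odd, Lemma~\ref{FsignLemma} gives $F^{\,\zeta}_{\mathrm{off};\,\ell}\geq 0$ and $F_{\mathrm{reg};\,\ell}\geq 0$, whence $F^{\,\zeta}_\ell\geq F^{\,\zeta}_{\mathrm{diag}}$. For $\ell$ even, the same lemma gives $F^{\,\zeta}_{\mathrm{off};\,\ell}\geq F^{\+0}_{\mathrm{off};\,\ell}$, so that for any $s_\ell\in(0,1)$ one has $F^{\,\zeta}_\ell\geq(1-s_\ell)F^{\,\zeta}_{\mathrm{diag}}+\Xi^{\,\zeta}_{\+\ell,\,s_\ell}$ with $\Xi^{\,\zeta}_{\+\ell,\,s_\ell}$ the auxiliary form \eqref{auxiliaryQF}. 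By Lemma~\ref{auxiliaryPositivity}, since $\gamma>\gamma_c$ there is a choice $s_\ell=s^\ast_\ell\in(0,1)$ making $\Xi^{\,\zeta}_{\+\ell,\,s^\ast_\ell}\geq 0$ for all $\zeta\geq 0$; this yields $F^{\,\zeta}_\ell\geq(1-s^\ast_\ell)F^{\,\zeta}_{\mathrm{diag}}$ for $\ell$ even, and a fortiori for $\ell$ odd. Summing over $\ell,m$ gives $\Theta^{\+\zeta}[\varphi]\geq\inf_{k\in\No}(1-s^\ast_k)\,\Theta^{\+\zeta}_{\mathrm{diag}}[\varphi]$. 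The final step is to optimise the constant: by Remark~\ref{partialWaveUniformity} the $s^\ast_k$ can be taken in an $\ell$-independent subinterval of $(0,1)$ whose left endpoint is $\tfrac{\pi(N-1)}{2}\tfrac{M+1}{\sqrt{M(M+2)}}\max\{0,\gamma^0_{M,\+1}-\gamma\}$; taking $s^\ast_k$ equal to that endpoint and using the identity $\gamma_c=\gamma^0_{M,\+1}-\gamma^0_{M,\+2}$ (from \eqref{criticalGamma} and \eqref{equalityHoldsAtZero}) one obtains $1-s^\ast_k=\Lambda_\gamma(N,M)$, which is \eqref{ThetaLowerBound}.

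\emph{Upper bound.} I would decompose into partial waves once more, but now for $\ell$ even I discard the non-positive term $F^{\,\zeta}_{\mathrm{off};\,\ell}$ and estimate $F_{\mathrm{reg};\,\ell}$ from above, while for $\ell$ odd I first replace $F^{\,\zeta}_{\mathrm{off};\,\ell}$ by $F^{\+0}_{\mathrm{off};\,\ell}$ (worse, by Lemma~\ref{FsignLemma}) and then apply Lemma~\ref{SoffregAboveBound}, which through the diagonalization shows that $S_{\mathrm{reg};\,\ell}$ and $S_{\mathrm{off};\,\ell}$ are maximised, uniformly in $p\in\R$, at $\ell\in\{0,1\}$ and $p=0$. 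Collecting the two cases gives $\Theta^{\+\zeta}[\varphi]\leq\bigl[1+\tfrac{N-1}{2}\sqrt{\eta/\mu}\,\max\{S_{\mathrm{reg};\,0}(0),\,S_{\mathrm{off};\,1}(0)+S_{\mathrm{reg};\,1}(0)\}\bigr]\Theta^{\+\zeta}_{\mathrm{diag}}[\varphi]$, and inserting $\sqrt{\eta/\mu}=(M+1)/\sqrt{M(M+2)}$, $S_{\mathrm{reg};\,0}(0)=\pi\gamma$, $S_{\mathrm{reg};\,1}(0)=4\gamma/\pi$ reproduces exactly \eqref{Thetaup2}--\eqref{LambdaPrime}. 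This improves Proposition~\ref{boundednessPhiH-half} because the $\ell=0$ off-diagonal contribution is now used with its favourable sign instead of being bounded in modulus.

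\emph{Expected difficulty.} Practically all the work is concentrated in Lemma~\ref{auxiliaryPositivity} and Remark~\ref{partialWaveUniformity}: the hard point is the uniform-in-$N$ and uniform-in-$\ell$ choice of $\{s^\ast_\ell\}$, which rests on the monotonicity analysis of the hypergeometric quantities $S_{\mathrm{off};\,\ell}$, $S_{\mathrm{reg};\,\ell}$ developed in Section~\ref{mainEstimates}. Granting that, the proof of the proposition is merely a bookkeeping of signs together with a single optimisation, while the upper bound is completely elementary given Lemma~\ref{SoffregAboveBound}.
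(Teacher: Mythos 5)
Your proposal is correct and follows essentially the same route as the paper: partial-wave decomposition, the sign information of lemma~\ref{FsignLemma}, the auxiliary forms $\Xi^{\,\zeta}_{\+\ell,\,s_\ell}$ with lemma~\ref{auxiliaryPositivity} and remark~\ref{partialWaveUniformity} for the lower bound (including the same optimal choice of $s^\ast_k$ and the identity $\gamma_c=\gamma^0_{M,\+1}-\gamma^0_{M,\+2}$), and lemma~\ref{SoffregAboveBound} for the upper bound. No gaps worth noting.
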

\begin{note}
    We stress that the upper bound obtained in~\eqref{Thetaup2} is an improvement of estimate~\eqref{Thetaup1}, since
    \begin{equation}
        \Lambda'_\gamma(N,M)\leq 1+\tfrac{(N-1)(M+1)}{\sqrt{M(M+2)}} \left( \tfrac{M+1}{M} + \tfrac{\pi}{2}\+ \gamma \right)\!.
    \end{equation}
    Indeed, in case $\gamma\geq \tfrac{\pi}{\pi^2-4}\+S_{\mathrm{off};\,1}(0)$ one  has
    \begin{equation*}
        1+\tfrac{(N-1)(M+1)}{\sqrt{M(M+2)}} \left( \tfrac{M+1}{M} + \tfrac{\pi}{2}\+ \gamma \right)\!\geq 1+\tfrac{(N-1)(M+1)}{\sqrt{M(M+2)}}\+\tfrac{\pi}{2}\+ \gamma = \Lambda'_\gamma(N,M).
    \end{equation*}
    On the other hand, consider $0<\gamma<\tfrac{\pi}{\pi^2-4}\+S_{\mathrm{off};\,1}(0)$.
    Taking into account  the following elementary estimate
    \begin{equation}\label{arcsinElementary}
        \arcsin(t)\geq\, t\+\geq t\,\sqrt{\frac{1-t}{1+t}}\,,\qquad 0\leq t\leq 1,
    \end{equation}
  we have   $\frac{1}{t}\+\sqrt{1-t^2}\+\arcsin(t)\geq 1-t$, or $\;1-\frac{1}{t}\+\sqrt{1-t^2}\+\arcsin(t)\leq t$.  Then 
    \begin{gather}
      1\--\sqrt{x^2-1}\+\arcsin{\tfrac{1}{x}}\leq \tfrac{1}{x}\+,\qquad x\geq 1.\label{arcsinNotSoElementary}
    \end{gather}
    Therefore, exploiting~\eqref{arcsinNotSoElementary} with $x=M+1$, one obtains 
    \begin{gather*}
        1+\tfrac{(N-1)(M+1)}{\sqrt{M(M+2)}} \left( \tfrac{M+1}{M} + \tfrac{\pi}{2}\+ \gamma \right)\!\geq 1+\tfrac{(N-1)(M+1)}{\sqrt{M(M+2)}} \left(1 + \tfrac{2}{\pi}\+ \gamma \right)\!\\
        \geq 1+\tfrac{(N-1)(M+1)}{\sqrt{M(M+2)}} \left[(M\-+\-1)\-\left(1\--\sqrt{M(M\-+\-2)}\+\arcsin{\tfrac{1}{M+1}}\right)\! + \tfrac{2}{\pi}\+ \gamma\+\right]\!\\
        \geq 1+\tfrac{2}{\pi}\+\tfrac{(N-1)(M+1)}{\sqrt{M(M+2)}} \left[(M\-+\-1)\-\left(1\--\sqrt{M(M\-+\-2)}\+\arcsin{\tfrac{1}{M+1}}\right)\- + \gamma\+ \right]\!=\Lambda'_\gamma(N,M).
    \end{gather*}
\end{note}

\vs\vs

\section{Proof of the main results}\label{closure&Boundedness}

\vs

In this section we complete the proof of the results stated in section~\ref{mainResults}.
\begin{proof}[Proof of proposition~\ref{coercivityLemma}]
     Let us recall that, for any charge $\xi\-\in\- H^{\frac{1}{2}}(\R^{3N})$, we have defined a  rescaled charge $\phi\-\in\- H^{\frac{1}{2}}(\R^{3N})$ given by \eqref{rescaledModifiedCharge}. 
     According to equations~\eqref{Phi-Phi3bodyConjuction} and~\eqref{ThetaLowerBound}, we can deduce a lower bound for the quadratic form $\Phi^\lambda$
    \begin{align*}
        \Phi^\lambda[\xi]&=\Phi_0[\xi]+ \tfrac{4\pi N}{\!\-\sqrt{\mu\,}\+}\!\-\integrate[\R^{3(N-1)}]{\!\textstyle{\sqrt{\sum_{j=2}^N k_j^2+\lambda\,}\,}\Theta^{\+1}[\phi](\vect{k}_2,\ldots,\vect{k}_N); \mspace{-33mu}d\vect{k}_2\cdots d\vect{k}_N}\\
        &\geq \Phi_0[\xi]+ \Lambda_\gamma(N,M)\,\tfrac{4\pi N}{\!\-\sqrt{\mu\,}\+}\!\-\integrate[\R^{3(N-1)}]{\!\textstyle{\sqrt{\sum_{j=2}^N k_j^2+\lambda\,}\,}\Theta^{\+1}_{\mathrm{diag}}[\phi](\vect{k}_2,\ldots,\vect{k}_N); \mspace{-33mu}d\vect{k}_2\cdots d\vect{k}_N}\\
        &=\Phi_0[\xi]+ \Lambda_\gamma(N,M)\, \Phi^\lambda_{\mathrm{diag}}[\xi]\+,\qquad\forall\lambda>0,\,\gamma>\gamma_c\+.
    \end{align*}
    Recalling definition~\eqref{notePhi} and assumption~\eqref{positiveBoundedCondition} (which implies that $\tilde{\alpha}$ is essentially bounded), we have 
    \begin{align*}
        \Phi_0[\xi]&\geq\tfrac{4\pi N}{\mu}\inf_{\;\Rplus\!}\{\tilde{\alpha}\}\,\scalar{\xi}{\xi}[\Lp{2}(\R^{3N})]=\tfrac{4\pi N}{\mu}\!\left(\alpha-\tfrac{(N-1)\,\gamma}{b}\right)\!\norm{\xi}^2\\
        &\geq \!\left(\min\{0,\alpha\}-\tfrac{(N-1)\,\gamma}{b}\right)\!\tfrac{4\pi N}{\mu}\+\lVert\FT{\xi}\rVert^2\\
        &\geq \frac{\min\{0,4\pi N\alpha\}\!-\- \tfrac{N(N-1)\,4\pi\+\gamma}{b}}{\sqrt{\lambda}\,\mu}\!\-\integrate[\R^{3N}]{\!\textstyle{\sqrt{\frac{k_1^2}{M+1}\- +\-\sum_{j=2}^N k_j^2\- +\!\lambda\,}\,}\abs{\FT{\xi}(\vect{k}_1,\ldots,\vect{k}_N)}^2;\mspace{-10mu}d\vect{k}_1\-\cdots d\vect{k}_N}\\
        &=\frac{\min\!\left\{0,\,\alpha\+ b\right\}-(N\!-\!1)\+ \gamma}{b\,\sqrt{\lambda\+\mu}}\,\Phi^\lambda_{\mathrm{diag}}[\xi].
    \end{align*}
    Collecting the results obtained so far, we get
    \begin{equation}\label{PhiLowerBound}
        \Phi^\lambda[\xi]\geq \-\left[\Lambda_\gamma(N,M)-\tfrac{\max\{(N-1)\,\gamma,\;(N-1)\,\gamma-\alpha \+b\}}{b\,\sqrt{\lambda\+\mu}}\right]\-\Phi^\lambda_{\mathrm{diag}}[\xi]\+.
    \end{equation}
    The last expression is positive if $\lambda$ is large enough, i.e. if $\lambda>\lambda_0\+$, with
    \begin{equation}\label{lambdanote}
    \define{\lambda_0;}\begin{dcases}
        \frac{(N\!-\!1)^2\,\gamma^2}{\mu\+ \Lambda_\gamma(N,M)^2\,b^2}\,,\quad&\text{if }\alpha\geq 0,\\
        \frac{\left[(N\!-\!1)\+ \gamma+\abs{\alpha}\+b\+ \right]^2}{\mu\+ \Lambda_\gamma(N,M)^2\,b^2},\quad&\text{if }\alpha<0.
        \end{dcases}
    \end{equation}
    Concerning the upper bound, we proceed in the same way and we find
    \begin{equation}
        \Phi^\lambda[\xi]\leq \Lambda'_{\gamma}(N,M)\,\Phi^\lambda_{\mathrm{diag}}[\xi]+\Phi_0[\xi]\+,\qquad \forall\lambda>0,\,\gamma>0.
    \end{equation}
Moreover,
    \begin{align*}
        \Phi_0[\xi]&\leq\tfrac{4\pi N}{\mu}\sup_{\;\Rplus\!}\{\tilde{\alpha}\}\norm{\xi}^2\!=\!\left(\alpha+\tfrac{(N-1)\,\gamma}{b}\right)\!\tfrac{4\pi N}{\mu}\lVert\FT{\xi}\rVert^2\\
        &\leq\frac{\max\{0,4\pi N\alpha\}\!+\- \tfrac{N(N-1)\,4\pi\+\gamma}{b}}{\sqrt{\lambda}\,\mu}\!\-\integrate[\R^{3N}]{\!\textstyle{\sqrt{\frac{k_1^2}{M+1}\- +\-\sum_{j=2}^N k_j^2\- +\!\lambda\,}\,}\abs{\FT{\xi}(\vect{k}_1,\ldots,\vect{k}_N)}^2;\mspace{-10mu}d\vect{k}_1\-\cdots d\vect{k}_N}\\
        &=\frac{\max\!\left\{0,\,\alpha\+ b\right\}+(N\!-\!1)\+ \gamma}{b\,\sqrt{\lambda\+\mu}}\,\Phi^\lambda_{\mathrm{diag}}[\xi].
    \end{align*}
    Using the last estimate, we obtain
    \begin{equation}\label{PhiUpperBound}
        \Phi^\lambda[\xi]\leq \left[\Lambda'_\gamma(N,M)+\tfrac{\max\{(N-1)\,\gamma,\;(N-1)\,\gamma+\alpha \+b\}}{b\,\sqrt{\lambda\+\mu}}\right]\-\Phi^\lambda_{\mathrm{diag}}[\xi].
    \end{equation}
\end{proof}

\n
The estimates~\eqref{PhiLowerBound} and~\eqref{PhiUpperBound} guarantee that $\Phi^\lambda$ is closed and bounded from below by a positive constant.
Then, if $\gamma\->\-\gamma_c\+$, the quadratic form $\Phi^\lambda$ uniquely defines a s.a. and positive operator $\Gamma^\lambda$ in $\Lp{2}[\R^{3N}]$ for all $\lambda\->\-\lambda_0\+ $.
Such operator is  characterized as follows
\begin{align*}
    \dom{\Gamma^\lambda}&=\!\left\{\xi\in H^{\frac{1}{2}}(\R^{3N})\,\big|\;\exists g\in \Lp{2}[\R^{3N}]\,\text{ s.t. }\Phi^\lambda[\varphi,\,\xi\+ ]=\scalar{\varphi}{g},\quad\forall\varphi\in H^{\frac{1}{2}}(\R^{3N})\right\}\!,\\
    \Gamma^\lambda \xi &= g,\quad \forall \xi\in\dom{\Gamma^\lambda}
\end{align*}
where $\Phi^\lambda[\+ \cdot\+ ,\+ \cdot\+ ]$ is the sesquilinear form associated to $\Phi^\lambda[\+ \cdot\+ ]$ via the polarization identity.
    
\n Moreover, $\Gamma^\lambda$ is invertible for all $\lambda\->\-\lambda_0$ and comparing~\eqref{QF} with equation~\eqref{inProgressQF1}, the following relation holds
\begin{equation}\label{invertibleGammaRelation}
    \Gamma^\lambda\xi=\tfrac{4\pi}{\mu}\sum_{i=1}^N(\Gamma_{\!\mathrm{diag}}^{i,\+\lambda}\-+\Gamma_{\!\mathrm{off}}^{i,\+\lambda}\-+\Gamma_{\!\mathrm{reg}}^i)\+\xi,\qquad \forall \xi\in\dom{\Gamma^\lambda}.
    \end{equation}
\begin{note}\label{resolventRemark}
    From equation~\eqref{theoreticalGammaConnection}, we have
    $\Gamma^\lambda\!=\Gamma_{\!\lambda}\+$.
    This means that,
    by~\eqref{firstResolventIdGamma}, we can extend the definition of $\,\Gamma^{-z}\!=\-\Gamma(z)$ to all $z\-\in\-\rho(\mathcal{H}_0)$ and condition~\eqref{firstResolventGammaTheory} of section~\ref{kreinRecall} holds true.
    This fact implies that the domain of $\,\Gamma(z)$ does not depend on $z\-\in\-\rho(\mathcal{H}_0)$ and it is denoted by $D$.
    Furthermore, using again~\eqref{firstResolventIdGamma} we deduce that condition~\eqref{adjointGammaTheory} of section~\ref{kreinRecall} is also satisfied.
    Indeed,
    \begin{gather*}
        \adj{\Gamma(\conjugate{z})}\!-\Gamma^\lambda=-(z+\lambda)\+\adj{G^\lambda}G(\conjugate{z})=-(z+\lambda)\+\adj{G(z)}G^\lambda=\Gamma(z)-\Gamma^\lambda\\
        \implies \adj{\Gamma(z)}\!=\Gamma(\conjugate{z}).
    \end{gather*}
    Finally, we claim that $\Gamma(z)$ is invertible for any $z\-\in\C\smallsetminus[-\lambda_0,\pInfty)$.
    We have already discussed the case $z\-<\!-\lambda_0\+$, thus let us consider $z\-\in\C\smallsetminus\R$.
    According to~\eqref{firstResolventGammaTheory}
    \begin{equation*}
        \Gamma(\conjugate{z})-\Gamma(z)=(z-\conjugate{z})\+\adj{G(z)}G(z),
    \end{equation*}
    whereas condition~\eqref{adjointGammaTheory} implies
    \begin{equation*}
        \scalar{\xi}{[\Gamma(\conjugate{z})-\Gamma(z)]\+\xi}[\hilbert*_N]\!=-2i\,\Im{\+\scalar{\xi}{\Gamma(z)\+\xi}[\hilbert*_N]}\+,\qquad\forall\xi\in D.
    \end{equation*}
    Hence, one obtains
    \begin{equation}
        \Im{\+\scalar{\xi}{\Gamma(z)\+\xi}[\hilbert*_N]}=-\Im(z)\norm{G(z)\+\xi}[\hilbert*_{N+1}]^2,\qquad\forall\xi\in D.
    \end{equation}
    Therefore
    \begin{align*}
        \abs{\scalar{\xi}{\Gamma(z)\+\xi}[\hilbert*_N]}^2&=|\scalar{\xi}{\tfrac{\Gamma(z)\++\+\Gamma(\conjugate{z})}{2}\+\xi}[\hilbert*_N]|^2+(\Im\+z)^2\norm{G(z)\+\xi}[\hilbert*_{N+1}]^4\\
        &\geq (\Im\+z)^2\norm{G(z)\+\xi}[\hilbert*_{N+1}]^4\!>0.
    \end{align*}
\end{note}

\n
We are now in position to conclude the proof of theorem~\ref{closedBoundedQF}.
\begin{proof}[Proof of theorem \ref{closedBoundedQF}]
    Taking into account  proposition~\ref{coercivityLemma}, $Q$ is bounded from below, since for any $\psi\in\dom{Q}$, one has
    \begin{equation*}
        Q[\psi]=\mathscr{F}_\lambda[w^\lambda]-\lambda\norm{\psi}^2\-+\Phi^\lambda[\xi]\geq -\lambda\norm{\psi}^2\!,\quad\forall \lambda>\lambda_0\+.
    \end{equation*}
    Now, let us fix $\lambda>\lambda_0\+$.
    By construction $Q$ is hermitian, hence, the associated sesquilinear form $Q[\+ \cdot,\cdot\+ ]$ is symmetric.
    In particular, this means that the sesquilinear form $s[\+ \cdot,\cdot\+ ]$ given by
    $$s[\psi,\varphi]\vcentcolon=Q[\psi,\varphi]+(1+\lambda)\scalar{\psi}{\varphi},\quad\forall\psi,\varphi\in\dom{Q}$$
    defines a scalar product in $\hilbert*_{N+1}\+$.
    Therefore, we equip $\dom{Q}\subset\hilbert*_{N+1}$ with the norm
    \begin{equation}\label{closednessStep}
        \norm{\psi}[Q]^2\vcentcolon=Q[\psi]+(1+\lambda)\- \norm{\psi}^2\-=\mathscr{F}_\lambda[w^\lambda]+\Phi^\lambda[\xi]+\norm{\psi}^2\!.
    \end{equation}
   We prove that $Q$ is closed by showing the completeness of $\dom{Q}$ with respect to $\norm{\cdot}[Q]\+$.
    To this end, let $\{\psi_n\}\subset\dom{Q}$ and $\psi\in\hilbert*_{N+1}$ be respectively a sequence and a vector s.t. $\normConverge[Q]{\psi_n}{\psi_m}$ as $n, m$ go to infinity and $\normConverge{\psi_n}{\psi}$.
    By~\eqref{closednessStep}, we have 
    \begin{equation}\label{simultaneousConvergence}
    \mathscr{F}_\lambda[w^\lambda_n-w^\lambda_m]+\Phi^\lambda[\xi_n-\xi_m]\!\longrightarrow \- 0
    \end{equation}
    and, since both $\mathscr{F}_\lambda$ and $\Phi^\lambda$ are closed and positive,~\eqref{simultaneousConvergence} means
    \begin{align*}
        &\mathscr{F}_\lambda[w^\lambda_n-w^\lambda_m]\!\longrightarrow \- 0, &\Phi^\lambda[ \xi_n-\xi_m]\!\longrightarrow \- 0,\qquad \text{as }n,m\to\pInfty.
    \end{align*}
    Hence, $\{w^\lambda_n\}$ and $\{\xi_n\}$ are Cauchy sequences in $H^1(\R^{3(N+1)})\+\cap\+\hilbert*_{N+1}$ and $H^{\frac{1}{2}}(\R^{3N})\+\cap\+\hilbert*_N\+ $, respectively.
    Thus, there exist $w^\lambda\-\in\- H^1(\R^{3(N+1)})\cap\hilbert*_{N+1}$ and $\xi\-\in\- H^{\frac{1}{2}}(\R^{3N})\cap\hilbert*_N$ such that
    \begin{align*}
        &\normConverge[H^1(\R^{3(N+1)})]{w^\lambda_n}{w^\lambda}, & \normConverge[H^{1/2}(\R^{3N})]{\xi_n}{\xi},\qquad\text{as }n\to\pInfty.
    \end{align*}
    Furthermore, since $G^\lambda$ defined in~\eqref{potentialDef} is bounded for all $\lambda>0$, one has that $\psi_n\-=w^\lambda_n+G^\lambda \xi_n$ converges in $\hilbert*_{N+1}$ to the vector $w^\lambda\-+G^\lambda\xi$.
    By uniqueness of the limit, $\psi=w^\lambda+G^\lambda\xi$ and thus, $\psi\-\in\-\dom{Q}$.
    We have shown that $(\dom{Q},\norm{\cdot}[Q])$ is a Banach space, hence $Q$ is closed.
    
\end{proof}

\n
From theorem~\ref{closedBoundedQF}, we know that $Q$ uniquely defines a self-adjoint and bounded from below Hamiltonian $\mathcal{H}$, $\dom{\mathcal{H}}$ in the Hilbert space $\hilbert*_{N+1}=\Lp{2}(\R^3)\otimes \LpS{2}(\R^{3N})$.
We conclude with the proof of theorem~\ref{hamiltonianCharacterization} which characterizes domain and action of $\mathcal{H}$.
\begin{proof}[Proof of theorem \ref{hamiltonianCharacterization}]
    Let us assume that $\psi=w^\lambda\-+G^\lambda\xi\-\in\-\dom{Q}$, with $\lambda\->\-\lambda_0$.
    Then, there exists $f\-\in\-\hilbert*_{N+1}$ such that the sesquilinear form $Q[\+\cdot,\cdot\+]$ associated to $Q[\+\cdot\+]$ via the polarization identity satisfies \begin{equation}\label{sesquilinearQRelation}
        Q[v,\psi]=\scalar{v}{f},\quad \forall v=w^\lambda_v+G^\lambda\xi_v\in\dom{Q}
    \end{equation}
    where $\define*{f;\mathcal{H}\psi}$.
    By definition one has
    \begin{equation}\label{sesquilinearQ}
        Q[v,\psi]=\scalar{(\mathcal{H}_0+\lambda)^{\frac{1}{2}}w^\lambda_v}{(\mathcal{H}_0+\lambda)^{\frac{1}{2}} w^\lambda}-\lambda\scalar{v}{\psi}+\Phi^\lambda[\xi_v,\xi].
    \end{equation}
    Let us consider $v\in H^1(\R^{3(N+1)})\cap\hilbert*_{N+1}\+$, so that $\xi_v\equiv 0$ by injectivity of $G^\lambda$.
    Then
    \begin{equation*}
        \scalar{(\mathcal{H}_0+\lambda)^{\frac{1}{2}} v}{(\mathcal{H}_0+\lambda)^{\frac{1}{2}} w^\lambda}-\lambda\scalar{v}{\psi}=\scalar{v}{f},\quad\forall v\in H^1(\R^{3(N+1)})\cap\hilbert*_{N+1}\+.
    \end{equation*}
    Hence, $w^\lambda\-\in\- H^2(\R^{3(N+1)})\cap\hilbert*_{N+1}$ and
    \begin{equation}\label{HpsiHfreew}
        (\mathcal{H}_0+\lambda)w^\lambda-\lambda\+ \psi=f
    \end{equation}
    which is equivalent to
    \begin{equation}
        \mathcal{H}\psi=\mathcal{H}_0\+ w^\lambda-\lambda G^\lambda\xi.
    \end{equation}
    Now, let $v\-\in\-\dom{Q}$.
    Taking account of~\eqref{HpsiHfreew}, we have
    \begin{equation*}
        \scalar{v}{f+\lambda\+ \psi}=\scalar{w^\lambda_v}{(\mathcal{H}_0+\lambda)w^\lambda}+\scalar{G^\lambda\xi_v}{(\mathcal{H}_0+\lambda)w^\lambda}.
    \end{equation*}
    On the other hand, recalling~\eqref{sesquilinearQRelation} and~\eqref{sesquilinearQ},
    \begin{gather*}
        \scalar{v}{f+\lambda\+ \psi}=Q[v,\psi]+\lambda\scalar{v}{\psi}=\scalar{w^\lambda_v}{(\mathcal{H}_0+\lambda)w^\lambda}+\Phi^\lambda[\xi_v,\xi],
    \intertext{hence,}
        \Phi^\lambda[\xi_v,\xi]=\scalar{G^\lambda\xi_v}{(\mathcal{H}_0+\lambda)w^\lambda}=\tfrac{4\pi}{\mu}\scalar{\xi_v}{\textstyle{\sum_{i=1}^N}w^\lambda|_{\pi_i}}[\hilbert*_N],\quad\forall\xi_v\in H^{\frac{1}{2}}(\R^{3N})\cap\hilbert*_N
    \end{gather*}
    where we have used definition~\eqref{potential} in the last step.
    Therefore, we conclude that $\xi\-\in\-D$ and $\Gamma^\lambda\xi=\frac{4\pi}{\mu}\sum_{i=1}^N w^\lambda\big|_{\pi_i}$.

    \n Concerning the resolvent, thanks to remark~\ref{resolventRemark}, we have shown that for any $z\-\in\-\rho(\mathcal{H}_0)$, the operator $\Gamma(z)$ given by~\eqref{firstResolventIdGamma} fulfils conditions~\eqref{firstResolventGammaTheory}, \eqref{adjointGammaTheory} and~\eqref{invertibilityGammaTheory} of section~\ref{kreinRecall}.
    In particular, we have obtained $\rho(\mathcal{H})\supseteq \C\smallsetminus[-\lambda_0,\pInfty)$ and
    $$\resolvent{\mathcal{H}_0}[z]+G(z)\Gamma(z)^{-1}\adj{G(\conjugate{z})}\-=(\mathcal{H}-z)^{-1}.$$
\end{proof}

\vs\vs

\appendix

\section{}

In this appendix we recall some known abstract results on the theory of s.a. extensions, we characterize the properties of the potential $G^{\lambda}$ and then we give a heuristic construction of  the quadratic form $Q$.

\vs

\subsection{Sketch of Birman-Kre\u{\i}n-Vishik's theory}\label{kreinRecall}
\n

\vs
Here we briefly recall some results of the theory of s.a. extensions which are relevant for our purposes.
Let $\maps{A}{\dom{A}\subseteq\hilbert;\hilbert}$ be a s.a. operator in a complex Hilbert space $\hilbert$. 
Then, consider a symmetric and densely defined operator $\maps{S}{\dom{S}\subset\dom{A};\hilbert}$ that is closed according to the graph norm of $A$ and s.t. $A|_{\dom{S}}=S$.
Following~\cite{P1} and~\cite{P2}, we show how to characterize the whole set of s.a. extensions of $S$.

\n To this end, let $\X*$ be an auxiliary complex Hilbert space and $\maps{\tau}{\dom{A};\X*}$ a linear operator.
Endowing $\dom{A}$ with the graph norm of $A$, we assume that $\tau$ is continuous, its range is dense and it fulfils $\ker{\tau}=\dom{S}$.

\n Then, if $z\-\in\-\rho(A)$, we define $\define*{\adj{(\tau\resolvent{A}[\conjugate{z}])}\!\-;\mathcal{G}(z)}\-\in\-\bounded{\X*,\hilbert}$ and we consider a densely defined, linear operator $\maps{\Gamma(z)}{D \subset \X* ;\X*}$, satisfying
\begin{subequations}\label{conditionsGammaTheory}
\begin{align}
    &\Gamma(z)-\Gamma(w)=(w\--\-z)\adj{\mathcal{G}(\bar{w})}\mathcal{G}(z)\-=\-(w\--\-z)\adj{\mathcal{G}(\bar{z})}\mathcal{G}(w),\quad\forall w,z\in\rho(A),\tag{i}\label{firstResolventGammaTheory}\\
    &\adj{\Gamma(z)}\!=\Gamma(\bar{z}),\tag{ii}\label{adjointGammaTheory}\\
    &\exists z\in\rho(A) \,:\: 0\in\rho(\Gamma(z)).\tag{iii}\label{invertibilityGammaTheory}
\end{align}
\end{subequations}
Notice that, by condition~\eqref{firstResolventGammaTheory}, the domain $D$ does not depend on $z\in\rho(A)$.
Actually, the choice of $\tau$ and $\Gamma$ completely identify the s.a. extension of $S$, denoted by $A^\tau_\Gamma\+$.
Indeed, one of the main results of~\cite{P1} is that, for any $z\-\in\-\rho(A)$ s.t. $\Gamma(z)^{-1}\-\in\bounded{\X*}$, the operator
\begin{equation}
    \define{R^\tau_{\Gamma}(z);\resolvent{A}[z]+\mathcal{G}(z)\Gamma(z)^{-1}\adj{\mathcal{G}}(\conjugate{z})}
\end{equation}
defines the resolvent of a s.a. operator $A^\tau_\Gamma$ that coincides with $A$ on $\ker{\tau}$.
It is characterized by
\begin{equation}\label{saExtensionA}
    \begin{dcases}
    \define{\dom{A^\tau_\Gamma};\left\{\phi\in\hilbert\,\big|\;\phi=\phi_z+\mathcal{G}(z)\Gamma(z)^{-1}\tau\phi_z\+,\:\phi_z\in\dom{A} \right\}}\-,\\
    (A^\tau_\Gamma-z)\phi=(A-z)\phi_z\+.
    \end{dcases}
\end{equation}
Clearly, $R^\tau_\Gamma(z)\in\bounded{\hilbert,\dom{A^\tau_\Gamma}}$ and $\define{A_\Gamma^\tau;R^\tau_\Gamma(z)^{-1}\!+\-z}$ does not depend on $z$.
\begin{note}\label{uniqueDecomposition}
    It is easy to see  that the decomposition of an element of  $\dom{A^\tau_\Gamma}$ is unique, because of the density of $\ker{\tau}$ in $\dom{A}$.
Indeed,  given $\phi\in\X*$ and $\psi\in\ker{\tau}$, one has
    \begin{align*}
        0=\scalar{\phi}{\tau\psi}[\X*]&=\scalar{\phi}{\tau\resolvent{A}[\conjugate{z}](A-\conjugate{z})\psi}[\X*]\\
        &=\scalar{\mathcal{G}(z)\+\phi}{(A-\conjugate{z})\psi}[\hilbert]=\scalar{(A-z)\mathcal{G}(z)\+\phi}{\psi}[\hilbert]
    \end{align*}
    that means $(A-z)\mathcal{G}(z)\+\phi\in\ker{\tau}^\perp$, i.e. $(A-z)\mathcal{G}(z)\+\phi=0$ for all $\phi\in\X*$ and this implies $\ran{\mathcal{G}(z)}\cap\dom{A}=\{0\}$.
\end{note}
\n In the following proposition we state a useful property of $\mathcal{G}(z)$.
\begin{prop}\label{potentialProp}
Let $z,w\in\rho(A)$ and $\psi_1,\psi_2\in\X*$. If $\ran{\tau}$ is dense in $\X*$, then
\begin{equation*}
\mathcal{G}(z)\psi_1-\mathcal{G}(w)\psi_2\in\dom{A} \iff \psi_1=\psi_2\+.
\end{equation*}
\begin{proof}
    $\Leftarrow)\quad$ As a consequence of the first resolvent identity, one has
    \begin{subequations}
    \begin{gather}
        (\conjugate{z}-\conjugate{w})\adj{\mathcal{G}(w)}\resolvent{A}[\conjugate{z}]=\adj{\mathcal{G}(z)}\!-\adj{\mathcal{G}(w)}\!,\\
        \label{potentialResolventIdentity}
        (z-w)\resolvent{A}[z]\mathcal{G}(w)=\mathcal{G}(z)-\mathcal{G}(w)
    \end{gather}
    \end{subequations}
    which proves that $\ran{\mathcal{G}(z)-\mathcal{G}(w)}\in\dom{A}$.
    
    $\Rightarrow)\quad$ Using equation~\eqref{potentialResolventIdentity}, one can rewrite $\mathcal{G}(z)$, obtaining
    \begin{equation*}
        \mathcal{G}(z)\psi_1-\mathcal{G}(w)\psi_2=\mathcal{G}(w)(\psi_1-\psi_2)+(z-w)\resolvent{A}[z]\mathcal{G}(w)\psi_1\+.
    \end{equation*}
    By hypothesis, the left hand side and the last term in the right hand side both belong to $\dom{A}$, while $\ran{\mathcal{G}(w)}\cap\dom{A}=\{0\}$. 
    This means that $\psi_1-\psi_2\in\ker{\mathcal{G}(w)}$, namely $\psi_1=\psi_2\+$, since
    $\ker{\mathcal{G}(z)}=\ran{\tau}^\perp=\{0\}$ as long as $\tau$ has dense range.
    
\end{proof}
\end{prop}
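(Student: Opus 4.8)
The plan is to reduce both implications to a single ``first resolvent identity'' for the family $\mathcal{G}(\cdot)$, used together with the two structural facts behind remark~\ref{uniqueDecomposition}: that $\ran{\mathcal{G}(w)}\cap\dom{A}=\{0\}$, and that $\mathcal{G}(w)$ is injective whenever $\ran{\tau}$ is dense. First I would establish that, for all $z,w\in\rho(A)$,
\[
\mathcal{G}(z)-\mathcal{G}(w)=(z-w)\,\resolvent{A}[z]\,\mathcal{G}(w)=(z-w)\,\resolvent{A}[w]\,\mathcal{G}(z),
\]
by composing $\tau$ on the left with the ordinary first resolvent identity for $\resolvent{A}[\conjugate{z}]-\resolvent{A}[\conjugate{w}]$, taking Hilbert-space adjoints, and invoking $\adj{\resolvent{A}[\conjugate{z}]}=\resolvent{A}[z]$ (self-adjointness of $A$); this is just condition~\eqref{firstResolventGammaTheory} written for the present operators. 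The implication $(\Leftarrow)$ is then immediate: if $\psi_1=\psi_2$, denoting the common value by $\psi$, the second identity gives $\mathcal{G}(z)\psi-\mathcal{G}(w)\psi=(z-w)\,\resolvent{A}[w]\,\mathcal{G}(z)\psi\in\ran{\resolvent{A}[w]}=\dom{A}$.

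For $(\Rightarrow)$ I would use the first identity to decompose
\[
\mathcal{G}(z)\psi_1-\mathcal{G}(w)\psi_2=\mathcal{G}(w)(\psi_1-\psi_2)+(z-w)\,\resolvent{A}[z]\,\mathcal{G}(w)\psi_1 .
\]
By hypothesis the left-hand side lies in $\dom{A}$, and the last term lies in $\ran{\resolvent{A}[z]}=\dom{A}$; hence $\mathcal{G}(w)(\psi_1-\psi_2)\in\dom{A}$. Since $\ran{\mathcal{G}(w)}\cap\dom{A}=\{0\}$ by remark~\ref{uniqueDecomposition}, this forces $\mathcal{G}(w)(\psi_1-\psi_2)=0$, i.e.\ $\psi_1-\psi_2\in\ker{\mathcal{G}(w)}$. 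Finally $\ker{\mathcal{G}(w)}=\ker{\adj{(\tau\resolvent{A}[\conjugate{w}])}}=\ran{\tau\resolvent{A}[\conjugate{w}]}^{\perp}=\ran{\tau}^{\perp}$, because $\resolvent{A}[\conjugate{w}]$ maps $\hilbert$ bijectively onto $\dom{A}$, so that $\ran{\tau\resolvent{A}[\conjugate{w}]}=\tau(\dom{A})=\ran{\tau}$; and $\ran{\tau}^{\perp}=\{0\}$ since $\ran{\tau}$ is dense. Therefore $\psi_1=\psi_2$.

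I do not expect a genuine obstacle: the whole argument is a careful reshuffling of the first resolvent identity. The two places that call for attention are the bookkeeping of adjoints in the $\mathcal{G}$-resolvent identity, where the self-adjointness of $A$ must enter at the right spot, and the remark that $\tau$ and $\tau\resolvent{A}[\conjugate{w}]$ have the same range --- which is exactly what lets the density of $\ran{\tau}$ be transferred to the statement $\ker{\mathcal{G}(w)}=\{0\}$.
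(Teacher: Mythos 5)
Your proposal is correct and follows essentially the same route as the paper: the same resolvent-type identity $\mathcal{G}(z)-\mathcal{G}(w)=(z-w)\resolvent{A}[z]\mathcal{G}(w)$, the same decomposition $\mathcal{G}(w)(\psi_1-\psi_2)+(z-w)\resolvent{A}[z]\mathcal{G}(w)\psi_1$ for the forward implication, and the same invocation of $\ran{\mathcal{G}(w)}\cap\dom{A}=\{0\}$ together with $\ker{\mathcal{G}(w)}=\ran{\tau}^{\perp}=\{0\}$. The only difference is that you spell out in slightly more detail why $\ran{\tau\resolvent{A}[\conjugate{w}]}=\ran{\tau}$, which the paper leaves implicit.
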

\n We can also give an equivalent representation of the operator $A^\tau_\Gamma$ where a sort of boundary condition appears.
More precisely, if we introduce the ``charge'' $\define{\xi\-;\-\Gamma(z)^{-1}\tau\phi_z}\-\in\-\X*$, it is straightforward to see that 
\begin{equation}
    \begin{dcases}
    \dom{A^\tau_\Gamma}=\left\{\phi\in\hilbert\,\big|\;\phi=\phi_z+\mathcal{G}(z)\+\xi,\,\Gamma(z)\+\xi=\tau\phi_z\+,\:\phi_z\in\dom{A},\,\xi\in D \right\}\-,\\
    A^\tau_\Gamma\+\phi=A\phi_z+z\+\mathcal{G}(z)\+\xi.
    \end{dcases}
\end{equation}
We stress that the definition of $\xi$ does not depend on $z$.

\n Indeed, the uniqueness of the decomposition of an element in $\dom{A^\tau_\Gamma}$, discussed in remark~\ref{uniqueDecomposition},  implies that $$\phi_z-\phi_w=\mathcal{G}(w)\+\xi_w-\mathcal{G}(z)\+\xi_z\+.$$ 
Since the left hand side is in $\dom{A}$, using proposition~\ref{potentialProp}, one has $\xi_z\-=\xi_w\+$, namely $$\Gamma(z)^{-1}\+\tau\phi_z=\Gamma(w)^{-1}\+\tau\phi_w\+.$$

\n Let us establish the connection of the abstract setting with our problem.
The Hilbert space $\hilbert$ is given by $\hilbert*_{N+1}\+$, while the operator $A$ plays the role of the free Hamiltonian $\mathcal{H}_0$ and $S$ corresponds to $\dot{\mathcal{H}}_0\+$.
The auxiliary Hilbert space $\X*$ in our case is $\hilbert*_N$.
Define
\begin{equation}\label{auxiliaryHilbert}
    \define{\X_i;\Lp{2}[\R^3,d\vect{x}_0]\-\otimes\-\LpS{2}[\R^{3(N-1)}\-,\+d\vect{x}_1\cdots d\check{\vect{x}}_i\cdots d\vect{x}_N]}
\end{equation}
and let $\maps{T_i}{\dom{\mathcal{H}_0};\X_i}$ be the trace operator $f\longmapsto \frac{4\pi}{\mu} f|_{\pi_i}$.
Observe that each $\X_i$ is isomorphic to $\hilbert*_N\+$.
The continuous map $\tau$ is represented by the operator $T$, given by
\begin{equation}\label{traceOperator}
\begin{split}
    &\quad\maps{T}{\dom{\mathcal{H}_0}\subset\hilbert*_{N+1}; \hilbert*_N},\\[-5pt]
    &T:\psi\longmapsto \sum_{i=1}^N T_i\+\psi=\tfrac{4\pi}{\mu}\-\sum_{i=1}^N\psi|_{\pi_i}\+.
\end{split}
\end{equation}
The reason behind the constant $\frac{4\pi}{\mu}$ will be clarified in remark~\ref{domainDecomposition}.
Notice that $T$ is bounded since the trace operators $T_i: f\longmapsto f|_{\pi_i}$ are continuous between $H^{\frac{3}{2}+\+s}(\R^{3(N+1)})$ and $H^s(\R^{3N})$ for any $s\!>\!0$.
Moreover, $\ran{T}$ is dense and  $\dom{\dot{\mathcal{H}}_0}\!=\!\ker{T}$.

\n Finally, the operator $\mathcal{G}(z)$ is represented by $\define*{\adj{(T\resolvent{\mathcal{H}_0}[\conjugate{z}])}\!;G(z)}\in\bounded{\hilbert*_N,\hilbert*_{N+1}}$.

\n
Morally, our efforts are devoted to the construction of the invertible operator $\Gamma(z)$ that encodes the  ultraviolet regularization and, together with $T$, fully characterizes the lower semi-bounded Hamiltonian $\mathcal{H}$.

\vs\vs

\subsection{Properties of the potential}\label{computingPotential}
\n

\vs
In the following, we show that the operator $G(-\lambda )$ can be identified with the potential $G^\lambda$, for $\lambda>0$, whose definition has been provided in~\eqref{potentialDef} and then we discuss some of its properties.

\n
Denote by $\xi_i\-\in\-\X_i$ the ``charge'' associated to $\pi_i$ and let $G_i(z)$ be the potential generated by the $i$-th charge, i.e. $G_i(z)\+\xi_i\-=\adj{(T_i\resolvent{\mathcal{H}_0}[\conjugate{z}])}\xi_i\+$.
Since the particles interacting with the impurity are all indistinguishable with each other, all the charges must be equal, namely $\xi_i\-=\mathcal{C}_i\+\xi$, denoting with $\mathcal{C}_i$ the isomorphism sending $\xi\-\in\-\hilbert*_N$ to $\X_i\+$.
In other words, given $\xi\-\in\-\hilbert*_N$ we have that
\begin{equation}\label{potential}
    G(z)\+\xi=\adj{(T\+\resolvent{\mathcal{H}_0}[\conjugate{z}])}\xi=\textstyle{\sum_{i=1}^N} G_i(z)\+\mathcal{C}_i\+\xi.
\end{equation}
Next, we set $\define*{G_i(-z);\maps{G^{z}_i}{\X_i;\hilbert*_{N+1}}}$ and we claim that the continuous operator $G(-\lambda)$
coincides with the definition of the potential $G^\lambda$ given by~\eqref{potentialDef} for any $\lambda>0$.
Let us prove this assertion.
For any $\xi\in\hilbert*_N\+, \psi\in\hilbert*_{N+1}\+$, by definition,
\begin{equation*}
    \scalar{G(-\lambda)\+ \xi}{\psi}[\hilbert*_{N+1}]\-=\scalar{\xi}{T\+\resolvent{\mathcal{H}_0}[-\lambda]\psi}[\hilbert*_N]
    =\scalar{\FT{\xi}}{\widehat{T\+\resolvent{\mathcal{H}_0}[-\lambda]\psi}}[\hilbert*_N].
\end{equation*}
From~\eqref{traceOperator}, for any $\phi\in H^2(\R^{3(N+1)})$, one obtains
\begin{equation}\label{traceFourier}
    \widehat{T_i\+\phi}\+(\vect{P}\-,\vect{k}_1,\ldots\check{\vect{k}}_i\ldots,\vect{k}_N)=\frac{1}{\mu}\+\sqrt{\frac{2}{\pi}\+}\!\-\integrate[\R^3]{\FT{\phi}\+(\vect{Q},\vect{k}_1,\ldots,\vect{k}_{i-1},\vect{P}\!-\vect{Q},\vect{k}_{i+1},\ldots,\vect{k}_N);\-d\vect{Q}}.
\end{equation}
Therefore, in our case,
\begin{align*}
    \scalar{\FT{\xi}}{\widehat{T\+\resolvent{\mathcal{H}_0}[-\lambda]\psi}}[\hilbert*_N]\!=\frac{1}{\mu}\+\sqrt{\frac{2}{\pi}\+}\sum_{i=1}^N\-\integrate[\R^{3N}]{\conjugate*{\FT{\xi}(\vect{P},\vect{k}_1,\ldots\check{\vect{k}}_i\ldots,\vect{k}_N)};\mspace{-11mu}d\vect{P}d\vect{k}_1\cdots d\check{\vect{k}}_i\cdots d\vect{k}_N}\,\times\\
    \times\!\- \integrate[\R^3]{\frac{\FT{\psi}(\vect{Q},\vect{k}_1,\ldots,\vect{k}_{i-1},\vect{P}\!-\vect{Q},\vect{k}_{i+1},\ldots,\vect{k}_N)}{\frac{1}{M}Q^2+\abs{\vect{P}\!-\vect{Q}}^2+\sum_{j\neq i} k_j^2+\lambda}; \-d\vect{Q}}\,.
\end{align*}
Finally, adopting the substitution $\vect{P}\longmapsto\vect{Q}+\vect{k}_i\+$,
\begin{align*}
\scalar{\widehat{T\+\resolvent{\mathcal{H}_0}[-\lambda]\psi}}{\FT{\xi}}[\hilbert*_N]\!=\frac{1}{\mu}\+\sqrt{\frac{2}{\pi}\+}\!\-\integrate[\R^{3(N+1)}]{\conjugate*{\FT{\psi}(\vect{Q},\vect{k}_1,\ldots,\vect{k}_N)};\mspace{-33mu}d\vect{Q}d\vect{k}_1\cdots d\vect{k}_N}\,\times\\
\times\-\sum_{i=1}^N \frac{\FT{\xi}(\vect{Q}+\vect{k}_i,\vect{k}_1,\ldots\check{\vect{k}}_i\ldots,\vect{k}_N)\,}{\frac{1}{M}Q^2+\sum_{j=1}^N k_j^2+\lambda}\+,
\end{align*}
from which the following expression for $G(-\lambda)$ in the space of momenta comes out:
\begin{equation}\label{potentialFourier}
    (\+\widehat{G(-\lambda)\+ \xi}\+)(\vect{p},\vect{k}_1,\ldots,\vect{k}_N)=\frac{1}{\mu}\+\sqrt{\frac{2}{\pi}\+}\-\sum_{i=1}^N\frac{\FT{\xi}(\vect{p}+\vect{k}_i,\vect{k}_1,\ldots\check{\vect{k}}_i\ldots,\vect{k}_N)}{\frac{1}{M}p^2+\sum_{j=1}^N k_j^2+\lambda}\+.
\end{equation}
Notice that definition~\eqref{potentialDef} is recovered.
\begin{note}\label{potentialPropertiesInheritance}
    We stress that, since $G^\lambda$ has been shown to be equal to $G(-\lambda)$, its properties are inherited, for instance, $\ran{G^\lambda}\subset\hilbert*_{N+1}$ and $\ker{G^\lambda}=\{0\}$.
\end{note}
\begin{note}\label{wellPosednessQFDomain}
    Recall, from remark~\ref{uniqueDecomposition}, that $\ran{G^\lambda}$ does not share non-trivial elements with $H^2(\R^{3(N+1)})$.
    Moreover, from~\eqref{potentialFourier}, one can verify that $\ran{G^\lambda}\cap H^1(\R^{3(N+1)})\!=\!\{0\}$ as well.
    This fact is remarkable in defining the quadratic form $Q$ in~\eqref{QF}.
\end{note}

\n
Next, our goal is to extract the asymptotic behaviour of the potential in a neighborhood of the coincidence hyperplanes, in the position representation.
We compute such asymptotic behavior for a regular charge $\xi\in\mathcal{S}(\R^{3N})\cap\X_j$.
From~\eqref{potentialFourier}, we get
\begin{align*}
    \left(G_j^\lambda \xi\right)\!\-(\vect{x}_0,\vect{x}_1,\ldots,\vect{x}_N)\!=\frac{4\pi}{\mu}\!\-\integrate[\R^{3(N+1)}]{\frac{e^{i\vect{q}\,\cdot\,\vect{x}_0\++\+i\!\!\sum\limits_{m=1}^N\!\vect{k}_m\cdot\,\vect{x}_m}\!\!\-}{(2\pi)^{\frac{3}{2}(N+2)}}\;\frac{\FT{\xi}(\vect{q}\-+\-\vect{k}_j,\vect{k}_1,\ldots\check{\vect{k}}_j\ldots,\vect{k}_N)}{\frac{1}{M}q^2+\sum_{m=1}^N k_m^2+\lambda};\mspace{-33mu}d\vect{q}d\vect{k}_1\-\cdots d\vect{k}_N}\\
    =\frac{4\pi}{\mu}\!\-\integrate[\R^{3N}]{;\mspace{-11mu}d\vect{p}\+d\vect{k}_1\cdots d\check{\vect{k}}_j\cdots d\vect{k}_N}\frac{e^{i\vect{p}\+ \cdot\left(\frac{\vect{x}_j+M\vect{x}_0}{M+1}\right)+\+ i\mspace{-6mu}\sum\limits_{m\neq j}\!\vect{k}_m\cdot\,\vect{x}_m}\!\!\!}{(2\pi)^{\frac{3}{2}(N+2)}}\;\FT{\xi}(\vect{p},\vect{k}_1,\ldots\check{\vect{k}}_j\ldots,\vect{k}_N)\+\times\\
    \times\+\frac{M^{\frac{3}{2}}}{(M\!+\-1)^3\!\-}\-\integrate[\R^3]{\frac{e^{i\frac{\sqrt{M}}{M+1}\+\vect{\kappa}\,\cdot\,(\vect{x}_0-\vect{x}_j)}}{\frac{\kappa^2}{M+1}+\frac{p^2}{M+1}+\sum_{m\neq j} k_m^2+\lambda};\-d\vect{\kappa}}\+,
\end{align*}
where a change of variables of Jacobian $\left(\tfrac{\sqrt{M}}{M+1}\-\right)^{\!3}$ has occurred in the last step, where
\begin{equation}
    \begin{cases}
    \vect{p}=\vect{q}+\vect{k}_j\+,\\
    \vect{\kappa}=\tfrac{1}{\sqrt{M}}\,\vect{q}-\sqrt{M}\,\vect{k}_j
    \end{cases}\mspace{-18mu}\iff \begin{cases}
    \vect{q}=\frac{M}{M+1}\!\left(\vect{p}+\tfrac{1}{\sqrt{M}}\,\vect{\kappa}\right)\!,\\
    \vect{k}_j=\frac{1}{M+1}\!\left(\vect{p}-\sqrt{M}\,\vect{\kappa}\right)\!.
    \end{cases}
\end{equation}
The last integral in $d\vect{\kappa}$ is well known, since, given $a>0$, one has
\begin{equation}\label{fourierTransformYukawa}
\integrate[\R^3]{\frac{e^{i\,\vect{k}\,\cdot\, \vect{x}}}{k^2+a^2};\-d\vect{k} }=\frac{2\pi^2}{\abs{\vect{x}}}\,e^{-a\,\abs{\vect{x}}}\:,\quad \forall \vect{x}\neq \vect{0}\+.
\end{equation}
Hence,
\begin{equation}\label{untilNowRigourous}
    \begin{split}
    \left(G_j^\lambda \xi\right)\!(\vect{x}_0,\vect{x}_1,\ldots,\vect{x}_N)\!=\!\-\integrate[\R^{3(N-1)}]{\frac{e^{i\!\sum\limits_{m\neq j}\!\vect{k}_m\cdot\,\vect{x}_m}\!\!}{(2\pi)^{\frac{3}{2}(N-1)}\!\!}; \mspace{-33mu}d\vect{k}_1\-\cdots \check{d\vect{k}_j}\cdots d\vect{k}_N}\integrate[\R^3]{\frac{e^{i\vect{p}\+ \cdot\left(\frac{\vect{x}_j+M\vect{x}_0}{M+1}\right)}}{(2\pi)^{\frac{3}{2}}};\-d\vect{p}}\times\\
    \times\+\frac{\FT{\xi}(\vect{p},\vect{k}_1,\ldots\check{\vect{k}}_j\ldots,\vect{k}_N)}{\abs{\vect{x}_0-\vect{x}_j}}\,e^{-\sqrt{\mu\,}\:\abs{\vect{x}_0-\vect{x}_j}\,\sqrt{\frac{p^2}{M+1}+\sum_{m\neq j} k_m^2+\lambda\,}}.
    \end{split}
\end{equation}
From the last equation, notice that the term $G^\lambda_j\xi$ is regular in $\R^{3(N+1)}\smallsetminus\pi_j\+$.
Furthermore, since we are working with $\FT{\xi}\-\in\-\mathcal{S}(\R^{3N})$, with a Taylor expansion of the exponential, we can easily expand in terms of powers of $\abs{\vect{x}_j-\vect{x}_0}$:
\begin{align*}
    \left(G_j^\lambda \xi\right)\!(\vect{x}_0,\vect{x}_1,\ldots,\vect{x}_N)\!=&\,\frac{\xi\!\left(\frac{\vect{x}_j+M\vect{x}_0}{M+1},\vect{x}_1,\ldots\check{\vect{x}}_j\ldots,\vect{x}_N\right)}{\abs{\vect{x}_0-\vect{x}_j}}\,+\\
    &-\!\sqrt{\mu\,}\!\-\integrate[\R^{3(N-1)}]{\frac{e^{i\!\!\sum\limits_{m\neq j} \!\vect{k}_m\cdot\,\vect{x}_m}}{(2\pi)^{\- \frac{3}{2}(N-1)}}; \mspace{-33mu}d\vect{k}_1\-\cdots \check{d\vect{k}_j}\cdots d\vect{k}_N} \!\-\integrate[\R^3]{ \frac{e^{i\+\vect{p}\+ \cdot\left(\frac{\vect{x}_j+M\vect{x}_0}{M+1}\right)}}{(2\pi)^{\frac{3}{2}}}; \-d\vect{p}}\+\times\\
    &\mspace{68mu}\times\!\textstyle{\sqrt{\frac{p^2}{M+1}+\sum_{m\neq j} k_m^2+\lambda\,}\,}\FT{\xi}(\vect{p},\vect{k}_1,\ldots\check{\vect{k}}_j\ldots,\vect{k}_N)\,+\\
    &+\oBig{\abs{\vect{x}_0-\vect{x}_j}}.
\end{align*}
Therefore, one obtains an explicit behavior of the potential near $\pi_j$ \begin{equation}\label{singularPotentialExtracted1}
    \left(G_j^\lambda \xi\right)\!(\vect{x}_0,\vect{x}_1,\ldots,\vect{x}_N)\!=\frac{\xi\!\left(\tfrac{\vect{x}_j+M\vect{x}_0}{M+1},\vect{x}_1,\ldots\check{\vect{x}}_j\ldots,\vect{x}_N\right)}{\abs{\vect{x}_0-\vect{x}_j}}\,-\Gamma_{\!\mathrm{diag}}^{j,\+\lambda}\+\xi+\oSmall{1},
\end{equation}
where
\begin{equation}\label{diagGamma}
    \begin{split}
    (\Gamma_{\!\mathrm{diag}}^{j,\+\lambda}\+\xi)(\vect{x}_0,\vect{x}_1,\ldots\check{\vect{x}}_j\ldots,\vect{x}_N)\!=\!\sqrt{\mu\,}\-\!\integrate[\R^{3N}]{\frac{e^{i\+\vect{p}\+\cdot\+\vect{x}_0\++\,i\!\!\sum\limits_{m\neq j}\!\vect{k}_m\cdot\,\vect{x}_m}\!\!\!}{(2\pi)^{\frac{3}{2}N}};\mspace{-10mu}d\vect{p}\+d\vect{k}_1\-\cdots d\check{\vect{k}}_j\cdots d\vect{k}_N}\+\times\-\\
    \times\textstyle{\sqrt{\frac{p^2}{M+1}+\sum_{m\neq j} k_m^2+\lambda\,}\,}\FT{\xi}(\vect{p},\vect{k}_1,\ldots\check{\vect{k}}_j\ldots,\vect{k}_N)\+.
\end{split}
\end{equation}
A similar asymptotic expansion holds for $G^\lambda$ in a neighborhood of $\pi_j$
\begin{equation}\label{singularPotentialExtracted2}
    (G^\lambda\xi)(\vect{x}_0,\vect{x}_1,\ldots,\vect{x}_N)\!=\frac{\xi\!\left(\tfrac{\vect{x}_j+M\vect{x}_0}{M+1},\vect{x}_1,\ldots\check{\vect{x}}_j\ldots,\vect{x}_N\right)\!}{\abs{\vect{x}_0-\vect{x}_j}}\+-\+(\Gamma_{\!\mathrm{diag}}^{j,\+\lambda}\!+\Gamma_{\!\mathrm{off}}^{j,\+\lambda})\+\xi+\+\oSmall{1}\-,
\end{equation}
with $\Gamma_{\!\mathrm{off}}^{j,\+\lambda}$ representing the contribution of all other potentials $\{G^\lambda_i\}_{i\neq j}\+$ evaluated on $\pi_j\+$, i.e.
\begin{equation}\label{offGamma}
    \begin{split}
    (\Gamma_{\!\mathrm{off}}^{j,\+\lambda}\+\xi)(\vect{x}_0,\vect{x}_1,\ldots\check{\vect{x}}_j\ldots,\vect{x}_N)\!=\--\tfrac{1}{2\pi^2\+\mu\,}\!\integrate[\R^3]{;\-d\vect{p}}\mspace{-9mu}\integrate[\R^{3N}]{\frac{e^{i(\vect{p}+\vect{k}_j)\+ \cdot\+ \vect{x}_0\+ +\+ i\!\sum\limits_{\ell\neq j}\! \vect{k}_\ell\+ \cdot\+ \vect{x}_\ell}\!\!\-}{(2\pi)^{\frac{3}{2}N}}\,;\mspace{-10mu}d\vect{k}_1\cdots d\vect{k}_N}\times\\
    \times\!\sum_{\substack{m=1\\m\+\neq\+ j}}^N \frac{\FT{\xi}(\vect{p}+\vect{k}_m,\vect{k}_1,\ldots\check{\vect{k}}_m\ldots,\vect{k}_N)}{\frac{1}{M}p^2+\!\sum\limits_{\ell=1}^N\! k_\ell^2+\lambda}\+.
\end{split}
\end{equation}
\begin{note}\label{domainDecomposition}
Notice that, in light of~\eqref{singularPotentialExtracted2}, we have obtained precisely the same singular behavior around $\pi_j$ for both $G^\lambda\xi$ and $\psi\-\in\-\hilbert*_{N+1}\+\cap H^2(\R^{3(N+1)}\setminus\pi)$ satisfying~\eqref{mfBC}.
Therefore, this suggests to write the vector $\psi$  satisfying the boundary condition~\eqref{mfBC} as a sum of two terms, namely
\begin{equation}
    \psi\-=\-G^\lambda\xi+w^\lambda,
\end{equation}
where, 
taking into account definition~\eqref{regGamma}, $w^\lambda\-\in\-\hilbert*_{N+1}\cap H^2(\R^{3(N+1)})$ fulfils \begin{equation}\label{regularComponentTraced}
w^\lambda|_{\pi_j}\!=(\Gamma_{\!\mathrm{diag}}^{j,\+\lambda}\!+\Gamma_{\!\mathrm{off}}^{j,\+\lambda}\-+\Gamma_{\!\mathrm{reg}}^j)\+\xi\+,\qquad\forall j\in\{1,\ldots,N\}.
\end{equation}
\end{note}

\n
Now, let us define the operator $\maps{\Gamma_{\!\lambda}}{\dom{\Gamma_{\!\lambda}} ;\hilbert*_N}\+$, given by
\begin{equation}
    \Gamma_{\!\lambda}:\xi\longrightarrow \sum_{j=1}^N T_j\!\left[\frac{\xi\!\left(\-\frac{\vect{x}_j+M\vect{x}_0}{M+1}\-,\vect{x}_1,\ldots\check{\vect{x}}_j\ldots,\vect{x}_N\!\right)\!}{\abs{\vect{x}_j-\vect{x}_0}}\--G^\lambda\xi(\vect{x}_0,\vect{x}_1,\ldots,\vect{x}_N)\-\right]\!+\tfrac{4\pi}{\mu}\sum_{j=1}^N\+\Gamma_{\!\mathrm{reg}}^j\+\xi
\end{equation}
Therefore, according to~\eqref{singularPotentialExtracted2}, one has
\begin{equation}
    \Gamma_{\!\lambda}\+\xi=\tfrac{4\pi}{\mu}\sum_{j=1}^N (\Gamma_{\!\mathrm{diag}}^{j,\+\lambda}\!+\Gamma_{\!\mathrm{off}}^{j,\+\lambda}+\Gamma_{\!\mathrm{reg}}^j)\+\xi\+.\label{theoreticalGammaConnection}
\end{equation}
By proposition~\ref{potentialProp}, $G^{\lambda_1}\--G^{\lambda_2}\in\dom{\mathcal{H}_0}$ for all $\lambda_1,\lambda_2>0$, hence
\begin{equation}
\Gamma_{\!\lambda_1}\--\Gamma_{\!\lambda_2}=T\+(G^{\lambda_2}-G^{\lambda_1}).
\end{equation}
Recalling the definition $\adj{G(z)}\!=T\+\resolvent{\mathcal{H}_0}[\conjugate{z}]$ valid for all $z\-\in\-\rho(\mathcal{H}_0)$, one can use the first resolvent identity to obtain for all $w\in\rho(\mathcal{H}_0)$
\begin{subequations}
\begin{gather}
    \adj{G(\conjugate{z})}\!-\adj{G(\conjugate{w})}\!=(z-w)\+\adj{G(\conjugate{w})}\resolvent{\mathcal{H}_0}[z]=(z-w)\+\adj{G(\conjugate{z})}\resolvent{\mathcal{H}_0}[w],\\
    G(z)-G(w)=(z-w)\resolvent{\mathcal{H}_0}[z]G(w)=(z-w)\resolvent{\mathcal{H}_0}[w]G(z)
\end{gather}
\end{subequations}
from which
\begin{equation}
    T\+[G(z)-G(w)]=(z-w)\+\adj{G(\conjugate{z})}G(w)=(z-w)\+\adj{G(\conjugate{w})}G(z).
\end{equation}
This means that we can extend the definition of $\define*{\Gamma_{\!\lambda};\Gamma(-\lambda)}$ to all $- \rho(\mathcal{H}_0)$, provided that
\begin{equation}\label{firstResolventIdGamma}
    \define{\Gamma(z)\-;\Gamma_{\!\lambda}-(\lambda+z)\+\adj{G(\conjugate{z})}G^\lambda}=\Gamma_{\!\lambda}-(\lambda+z)\+\adj{G^\lambda}G(z).
\end{equation}
Notice that~\eqref{firstResolventIdGamma} corresponds to condition~\eqref{firstResolventGammaTheory} of section~\ref{kreinRecall}, since for all $z,w\-\in\-\rho(\mathcal{H}_0)$
\begin{align*}
    \Gamma(z)-\Gamma(w)=&\,\Gamma_{\!\lambda}-(\lambda+z)\+\adj{G(\conjugate{z})}G^\lambda-\Gamma_{\!\lambda}+(\lambda+w)\+\adj{G^\lambda}G(w)\\
    =&-(\lambda+z)\+\adj{G(\conjugate{z})}\-\big[G(w)-(\lambda+w)\resolvent{\mathcal{H}_0}[-\lambda]G(w)\big]+\\
    &+(\lambda+w)\-\big[\adj{G(\conjugate{z})}-(\lambda+z)\+\adj{G(\conjugate{z})}\resolvent{\mathcal{H}_0}[-\lambda]\big]G(w)\\
    =&\,(w-z)\+\adj{G(\conjugate{z})}G(w).
\end{align*}
In section~\ref{closure&Boundedness} we have characterized  the operator $\Gamma(z)$.

\vs\vs

\subsection{Heuristic derivation of the quadratic form}\label{actuallyBuildingQF}
\n

\vs
Here we provide a heuristic discussion meant to justify the definition of the quadratic form $Q$ given in~\eqref{QF}.  Our aim is to construct the energy form $\scalar{\psi}{\hat{\mathcal{H}}\psi}$ associated to the formal Hamiltonian $\hat{\mathcal{H}}$ discussed in the introduction for a given vector $\psi\-\in\-\hilbert*_{N+1}\cap H^2(\R^{3(N+1)}\setminus\pi)$ that fulfils the boundary condition~\eqref{mfBC}.
Recalling that the Hamiltonian acts as $\mathcal{H}_0$ outside $\pi$, given $\epsilon\->\- 0$, let $\define{D_\epsilon\!;\-\big\{\!(\vect{x}_0,\vect{x}_1,\ldots,\vect{x}_N)\!\in\R^{3(N+1)}\,\big | \min\limits_{1\leq\+i\,\leq N}\abs{\vect{x}_i-\vect{x}_0}\!>\epsilon\big\}}$.
Then
\begin{equation}\label{freeLimitQF}
    \scalar{\psi}{\hat{\mathcal{H}}\psi}=\lim_{\epsilon\to 0}\- \integrate[D_\epsilon]{\conjugate*{\psi(\vect{x}_0,\vect{x}_1,\ldots, \vect{x}_N)}\+(\mathcal{H}_0\+\psi)(\vect{x}_0,\vect{x}_1,\ldots, \vect{x}_N); \-d\vect{x}_0 d\vect{x}_1\cdots d\vect{x}_N}.
\end{equation}
Introducing the decomposition of remark~\ref{domainDecomposition}, $\psi=w^\lambda\!+G^\lambda\xi\+$, equation~\eqref{freeLimitQF} reads
\begin{equation}
    \scalar{\psi}{\hat{\mathcal{H}}\psi}=\scalar{w^\lambda}{(\mathcal{H}_0+\lambda)w^\lambda}-\lambda\norm{\psi}^2\!+\scalar{G^\lambda\xi}{(\mathcal{H}_0+\lambda)w^\lambda},
\end{equation}
since $\dom{\mathcal{H}_0}\cap\ran{G^\lambda}=\{0\}$.
The last term can be simplified using~\eqref{potential}
\begin{align*}
    \scalar{\psi}{\hat{\mathcal{H}}\psi}=\scalar{w^\lambda}{(\mathcal{H}_0+\lambda)w^\lambda}-\lambda\norm{\psi}^2\!+\scalar{\xi}{T\+w^\lambda}[\hilbert*_N]\\[-2.5pt]
    =\mathscr{F}_\lambda[w^\lambda]-\lambda\norm{\psi}^2\!+\tfrac{4\pi}{\mu}\scalar{\xi}{\textstyle{\sum_{i=1}^N}w^\lambda|_{\pi_i}}[\hilbert*_N]\+.
\end{align*}
In the last step, the first contribution has been rewritten using definition~\eqref{trivialQF}, while replacing $w^\lambda|_{\pi_i}$ via~\eqref{regularComponentTraced} in the last expression, one gets
\begin{equation}\label{inProgressQF1}
    \scalar{\psi}{\hat{\mathcal{H}}\psi}=\mathscr{F}[w^\lambda]-\lambda\norm{\psi}^2\!+\tfrac{4\pi}{\mu}\scalar{\xi}{\textstyle{\sum_{i=1}^N}(\Gamma_{\!\mathrm{diag}}^{i,\+\lambda}\-+\Gamma_{\!\mathrm{off}}^{i,\+\lambda}\-+\Gamma_{\!\mathrm{reg}}^i)\+\xi}[\hilbert*_N]\+.
\end{equation}
Exploiting the symmetry, one obtains
\begin{equation}\label{inProgressQF2}
    \scalar{\psi}{\hat{\mathcal{H}}\psi}=\mathscr{F}[w^\lambda]-\lambda\norm{\psi}^2\!+\tfrac{4\pi N}{\mu}\scalar{\xi}{(\Gamma_{\!\mathrm{diag}}^{N,\+\lambda}\-+\Gamma_{\!\mathrm{off}}^{N,\+\lambda}\-+\Gamma_{\!\mathrm{reg}}^N)\+\xi}[\hilbert*_N]\+.
\end{equation}
Now, we want to show that the last term is equal to $\Phi^\lambda[\xi]$, given by~\eqref{defPhi}.
To this end, we consider separately each component of the inner product in~\eqref{inProgressQF2}.
Firstly, according to~\eqref{diagGamma}, we change the order of integration, obtaining
\begin{align*}
    \tfrac{4\pi N}{\mu}\scalar{\xi}{\Gamma_{\!\mathrm{diag}}^{N,\+\lambda}\xi}[\hilbert*_N]\-&=\tfrac{4\pi N}{\!\-\sqrt{\mu\,}\+}\!\-\integrate[\R^{3N}]{\!\textstyle{\sqrt{\frac{p^2}{M+1}+\-\sum_{m=1}^{N-1} k_m^2\-+\-\lambda\+}\,}\abs{\FT{\xi}(\vect{p},\vect{k}_1,\ldots,\vect{k}_{N-1})}^2;\mspace{-10mu}d\vect{p}\+d\vect{k}_1\-\cdots d\vect{k}_{N-1}}\\
    &=\Phi^\lambda_{\mathrm{diag}}[\xi].
\end{align*}
Indeed, notice that definition~\eqref{diagPhi} is recovered. Similarly, taking into account~\eqref{offGamma},
\begin{equation*}
\begin{split}
    \tfrac{4\pi N}{\mu}\scalar{\xi}{\Gamma_{\!\mathrm{off}}^{N,\+\lambda}\xi}[\hilbert*_N]=-\tfrac{2N}{\pi\,\mu^2\!}\!\-\integrate[\R^{3(N+1)}]{\conjugate*{\FT{\xi}(\vect{p}+\vect{k}_N,\vect{k}_1,\ldots,\vect{k}_{N-1})}\+\times\\[-5pt]
    \times\-\sum_{m=1}^{N-1}\frac{\FT{\xi}(\vect{p}+\vect{k}_m,\vect{k}_1,\ldots\check{\vect{k}}_m\ldots,\vect{k}_N)}{\frac{1}{M}p^2+\sum_{\ell=1}^N k_\ell^2+\lambda};\mspace{-33mu}d\vect{p}\+ d\vect{k}_1\-\cdots d\vect{k}_N}\+.
\end{split}
\end{equation*}
Exchanging the role of $\vect{k}_1$ and $\vect{k}_N$, since $\xi\in\hilbert*_N$, one gets
\begin{equation*}
\begin{split}
    \tfrac{4\pi N}{\mu}\scalar{\xi}{\Gamma_{\!\mathrm{off}}^{N,\+\lambda}\xi}[\hilbert*_N]=-\tfrac{2N}{\pi\,\mu^2\!}\-\sum_{m=2}^N\integrate[\R^{3(N+1)}]{\conjugate*{\FT{\xi}(\vect{p}+\vect{k}_1,\vect{k}_2,\ldots,\vect{k}_N)}\+\times\\[-7.5pt]
    \times\+\frac{\FT{\xi}(\vect{p}+\vect{k}_m,\vect{k}_1,\ldots\check{\vect{k}}_m\ldots,\vect{k}_N)}{\frac{1}{M}p^2+\sum_{\ell=1}^N k_\ell^2+\lambda};\mspace{-33mu}d\vect{p}\+ d\vect{k}_1\-\cdots d\vect{k}_N}\+.
\end{split}
\end{equation*}
In the last integral, we can change the variables setting $\vect{\kappa}_{\sigma_m(i)}\-=\vect{k}_i\+ $, for all $i\in\{1,\ldots,N\}$, where the permutation of $N$ elements $\sigma_m$ is given by
\begin{equation}\label{permutationSigma}
    \define{\sigma_m;\begin{pmatrix}1 & 2 & 3 & \ldots & m-1 & m & m+1 & \ldots & N\\ 1 & 3 & 4 & \ldots & m & 2 & m+1 &\ldots &N\end{pmatrix}}\!,\quad\text{for }\+m>2,
\end{equation}
while, $\sigma_2$ is the identity permutation. Applying this change of variables, one obtains
\begin{equation*}
\begin{split}
    \tfrac{4\pi N}{\mu}\scalar{\xi}{\Gamma_{\!\mathrm{off}}^{N,\+\lambda}\xi}[\hilbert*_N]=-\tfrac{2N}{\pi\,\mu^2\!}\-\sum_{m=2}^N\integrate[\R^{3(N+1)}]{\conjugate*{\FT{\xi}(\vect{q}\-+\vect{\kappa}_1,\vect{\kappa}_{\sigma_m(2)},\ldots,\vect{\kappa}_{\sigma_m(N)})}\+\times\\[-7.5pt]
    \times\+\frac{\FT{\xi}(\vect{q}\-+\vect{\kappa}_2,\vect{\kappa}_1,\vect{\kappa}_3\ldots,\vect{\kappa}_N)}{\frac{1}{M}q^2+\sum_{\ell=1}^N \kappa_\ell^2+\lambda};\mspace{-33mu}d\vect{q} d\vect{\kappa}_1\-\cdots d\vect{\kappa}_N}\+.
\end{split}
\end{equation*}
Notice that the symmetry properties of $\xi\in\hilbert*_N$ make the integrand actually independent of $m$.
Therefore, the expression in definition~\eqref{offPhi} is achieved:
\begin{equation*}
\begin{split}
    \tfrac{4\pi N\-}{\mu}\scalar{\xi}{\!\Gamma_{\!\mathrm{off}}^{N,\+\lambda}\xi}[\-\hilbert*_N]\!\-&=\!-\tfrac{2N(N-1)\!}{\pi\,\mu^2\!}\!\!\-\integrate[\R^{3(N+1)}]{\conjugate*{\-\FT{\xi}(\vect{p}\-+\mspace{-2.25mu}\vect{\kappa}_1,\vect{\kappa}_2,\ldots,\vect{\kappa}_N\-)\-}\,\frac{\FT{\xi}(\vect{p}\-+\mspace{-2.25mu}\vect{\kappa}_2,\vect{\kappa}_1,\vect{\kappa}_3,\ldots,\vect{\kappa}_N\-)\!}{\frac{1}{M}p^2+\sum_{\ell=1}^N \kappa_\ell^2+\lambda};\mspace{-36mu}d\vect{p}\mspace{0.75mu} d\vect{\kappa}_1\-\cdots d\vect{\kappa}_N}\\[-5pt]
    &=\Phi^\lambda_{\mathrm{off}}[\xi].
\end{split}
\end{equation*}
Finally, from~\eqref{regGamma} and~\eqref{runningCoupling},
\begin{align*}
    \tfrac{4\pi N}{\mu}\scalar{\xi}{\!\Gamma_{\!\mathrm{reg}}^N\+\xi}[\-\hilbert*_N]\!\-=\tfrac{4\pi N}{\mu}\!\-\integrate[\R^{3N}]{\!\!\left[\alpha+\gamma\-\sum_{m=1}^{N-1}\frac{\theta(\abs{\vect{x}_m\--\vect{x}_0})}{\abs{\vect{x}_m\--\vect{x}_0}}\right]\-\abs{\xi(\vect{x}_0,\vect{x}_1,\ldots,\vect{x}_{N-1})}^2;\mspace{-10mu}d\vect{x}_0d\vect{x}_1\-\cdots d\vect{x}_{N-1}}\\
    =\tfrac{4\pi N}{\mu}\!\-\integrate[\R^{3N}]{\!\-\left[\alpha+(N\--\-1)\,\gamma\,\frac{\theta(\abs{\vect{x}_1\--\vect{x}_0})-\-1}{\abs{\vect{x}_1\--\vect{x}_0}}\right]\-\abs{\xi(\vect{x}_0,\vect{x}_1,\ldots,\vect{x}_{N-1})}^2;\mspace{-10mu}d\vect{x}_0d\vect{x}_1\-\cdots d\vect{x}_{N-1}}+\\
    +\,\tfrac{4\pi\+N(N-1)\,\gamma}{\mu}\!\-\integrate[\R^{3N}]{\frac{\abs{\xi(\vect{x}_0,\vect{x}_1,\ldots,\vect{x}_{N-1})}^2}{\abs{\vect{x}_1\--\vect{x}_0}};\mspace{-10mu}d\vect{x}_0d\vect{x}_1\-\cdots d\vect{x}_{N-1}}\\
    =\Phi_0[\xi]+\Phi_{\mathrm{reg}}[\xi].
\end{align*}
Summing up, we have shown that, for a sufficiently regular charge $\xi\-\in\-\hilbert*_N$, equation~\eqref{inProgressQF2} reduces to the action of $Q$ defined in~\eqref{QF}, since by definition~\eqref{defPhi} $\Phi^\lambda\!=\-\Phi_0+\Phi^\lambda_{\mathrm{diag}}\-+\Phi^\lambda_{\mathrm{off}}+\Phi_{\mathrm{reg}}\+$.

\vs\vs

\vs\vs\vs\vs

\end{document}